\newcommand{\Ac}{\mathcal{C}}
\renewcommand{\a}{\mathbf{a}}
\newcommand{\Bc}{\mathcal{B}}
\renewcommand{\b}{\mathbf{b}}
\newcommand{\bxx}{\bar{\mathbf{x}}}
\newcommand{\bx}{\bar{x}}
\newcommand{\D}{\mathcal{D}}
\newcommand{\Dc}{\mathscr{Dc}}
\renewcommand{\d}{\mathbf{d}}
\newcommand{\Eb}{\mathbb{E}}
\newcommand{\ee}{\acute{\text{e}}}
\newcommand{\f}{\mathbf{f}}
\newcommand{\tf}{\tilde{f}}
\newcommand{\Fc}{\mathcal{F}}
\newcommand{\g}{\mathbf{g}}
\newcommand{\I}{\mathbf{I}}
\newcommand{\Mcc}{\mathcal{M}}
\newcommand{\Q}{\mathbf{Q}}
\newcommand{\q}{\mathbf{q}}
\newcommand{\R}{\mathbf{R}}
\newcommand{\W}{\mathbf{W}}
\newcommand{\x}{\mathbf{x}}
\newcommand{\y}{\mathbf{y}}
\newcommand{\z}{\mathbf{z}}
\newcommand{\xb}{\mathbf{\bar{x}}}
\newcommand{\zt}{\mbox{\boldmath{$\zeta$}}}
\newcommand{\0}{\mathbf{0}}
\newcommand{\1}{\mathbf{1}}
\renewcommand{\L}{V}
\newcommand{\bepi}{\bm{\epsilon}}
\newtheorem{thm}{Theorem}
\newtheorem{cor}[thm]{Corollary}
\newtheorem{lem}{Lemma}
\newtheorem{prop}[thm]{Proposition}
\newtheorem{defn}{Definition}
\newtheorem{rem}{Remark}
\newtheorem{assum}{Assumption}
\newcommand{\veta}{\boldsymbol \eta}
\newtheorem{lemma}{Lemma}
  \providecommand\BibTeX{{%
    \normalfont B\kern-0.5em{\scshape i\kern-0.25em b}\kern-0.8em\TeX}}}
\begin{document}

\title[Private and Communication-Efficient Learning over Networks]{Private and Communication-Efficient Edge Learning: A Sparse Differential Gaussian-Masking Distributed SGD Approach}


%
%
%
  
\author{Xin Zhang$^{*+}$ \quad Minghong Fang$^*$ \quad Jia Liu$^*$ \quad Zhengyuan Zhu$^+$}
\affiliation{
	\institution{$^*$Department of Computer Science, Iowa State University}
	\institution{$^+$Department of Statistics, Iowa State University}
	\city{Ames, IA 50011, U.S.A.}
}

\renewcommand{\shortauthors}{Zhang, et al.}

\begin{abstract}
With rise of machine learning (ML) and the proliferation of smart mobile devices, recent years have witnessed a surge of interest in performing ML in wireless edge networks.
In this paper, we consider the problem of jointly improving data privacy and communication efficiency of distributed edge learning, both of which are critical performance metrics in wireless edge network computing.
Toward this end, we propose a new decentralized stochastic gradient method with sparse differential Gaussian-masked stochastic gradients (SDM-DSGD) for non-convex distributed edge learning.
Our main contributions are three-fold:
i) We propose a generalized differential-coded DSGD update, which enable a much lower transmit probability for gradient sparsification, and provide an $\tilde{O}(1/\sqrt{NT})$ convergence rate;
ii) We theoretically establish the privacy and communication efficiency performance guarantee for our SDM-DSGD method, which outperforms all existing works;
and iii) We reveal theoretical insights and offer practical design guidelines for the interactions between privacy preservation and communication efficiency, two conflicting performance goals.
We conduct extensive experiments with a variety of learning models on MNIST and CIFAR-10 datasets to verify our theoretical findings.
Collectively, our results contribute to the theory and algorithm design for distributed edge learning.

\end{abstract}

\begin{CCSXML}
<ccs2012>
<concept>
<concept_id>10010147.10010919.10010172</concept_id>
<concept_desc>Computing methodologies~Distributed algorithms</concept_desc>
<concept_significance>500</concept_significance>
</concept>
<concept>
<concept_id>10010147.10010257</concept_id>
<concept_desc>Computing methodologies~Machine learning</concept_desc>
<concept_significance>300</concept_significance>
</concept>
<concept>
<concept_id>10002978</concept_id>
<concept_desc>Security and privacy</concept_desc>
<concept_significance>300</concept_significance>
</concept>
<concept>
<concept_id>10003033.10003079.10011672</concept_id>
<concept_desc>Networks~Network performance analysis</concept_desc>
<concept_significance>300</concept_significance>
</concept>
</ccs2012>
\end{CCSXML}

\ccsdesc[500]{Computing methodologies~Distributed algorithms}
\ccsdesc[300]{Computing methodologies~Machine learning}
\ccsdesc[300]{Security and privacy}
\ccsdesc[300]{Networks~Network performance analysis}

\keywords{Edge network computing, distributed learning optimization, communication efficiency, differential privacy.}

\maketitle

\section{Introduction}\label{Section: introduction}

In recent years, advances in machine learning (ML) have enabled many new and emerging applications that transform human lives. 
Traditionally, the training of of ML applications often rely on  cloud-based data-centers to collect and process vast amount of data. 
With the proliferation of smart mobile devices and IoT (Internet-of-Things), data and requests for ML are increasingly being generated by devices from wireless edge networks.
Due to high latency, low bandwidth, and privacy concerns\footnote{Data privacy concerns in designing decentralized learning algorithms have long been raised in the literature (see, e.g., \cite{jayaraman2018distributed,cheng2018leasgd,zhang2018admm}).
This is because in many applications (e.g., healthcare, finance, recommendation systems, etc.), ML models are usually trained by data over distributed systems that contain users' privacy information.}, 
collecting all data to the cloud for processing may no longer be feasible or desirable. 
Therefore, the hope of ``ML at the wireless edge'' (``edge ML'' for short) is to retain data in wireless edge networks and perform ML training {\em distributively} across end-user devices and edge servers (or called edge clouds).
By doing so, one could potentially improve edge ML training performance while ensuring user privacy.

However, the successful deployment of edge ML faces significant technical hurdles.
During the execution of distributed edge ML algorithms, each node in the network needs to exchange information with its local neighbors, which often injects intensive communication load into the network.
This problem is further exacerbated by the inherent capacity constraints of wireless channels (due to channel fading, interference, etc.) and edge devices (due to limits in transmitter power, receiver sensitivity, etc.). 
Moreover, merely keeping data at edge devices does {\em not} ensure privacy: the released local messages that are exchanged over the air during each iteration of the algorithm still allow adversaries to infer the local sensitive data~\cite{abadi2016deep,zhang2018admm,jayaraman2018distributed,cheng2018leasgd}.
Indeed, existing ML algorithms have not been designed for the edge environments.
Hence, there is a pressing need for a fundamental understanding on how to design distributed algorithms to ensure {\em both} communication-efficient and privacy-preserving edge ML under severe communication and privacy constraints.


Unfortunately, the design of private and communication-efficient edge ML training faces two inherently {\em conflicting} challenges:
i) On one hand, to preserve users' privacy while ensuring training convergence, one often needs to inject certain {\em unbiased i.i.d.} (independent and identically distributed) random noise into the exchanged information.
However, the privacy guarantee with i.i.d. noise typically degrades with the number of training iterations $T$~\cite{abadi2016deep,jayaraman2018distributed,wang2019dp,zhang2017efficient}.
Hence, for a given privacy constraint, there exists a maximum value of $T$ that a distributed edge ML training algorithm can run;
ii) On the other hand, to design communication-efficient distributed edge ML algorithms, it is necessary to perform gradient sparsification and/or compression (see, e.g., \cite{tang2018communication,stich2018sparsified,wangni2018gradient,reisizadeh2019robust}).
However, as we show later, these operations induce a {\em different} type of random noise, which increases the the value of $T$ for achieving some desired training loss while providing {\em no} privacy guarantee.
It is highly challenging to reconcile these two conflicting types of randomness in distributed edge ML algorithmic design.

To date, results on edge ML algorithmic designs that are both private and communication-efficient remain rather limited.
Most of the existing work focus on either communication efficiency \cite{tang2018communication,zhang2019compressed,reisizadeh2019exact} or data privacy \cite{liu2018differentially,ding2018consensus}.
For the limited amount of work that considered both, they are either only restricted to the server/worker architecture or having unsatisfactory performances and high implementation complexity (see Section~\ref{sec:related} for more in-depth discussion).
The above limitations of the existing work motivate us to propose a new decentralized stochastic gradient descent (DSGD) method with \underline{s}parse \underline{d}ifferential Gaussian-\underline{m}asked stochastic gradients.
For convenience, we refer to this method as SDM-DSGD.
Our SDM-DSGD method addresses the aforementioned technical challenges and offer significantly improved privacy and communication efficiency performances.
Our main results and their significance are summarized as follows:

\begin{list}{\labelitemi}{\leftmargin=1em \itemindent=0.em \itemsep=.2em}

\item 
We propose a SDM-DSGD method for non-convex distributed edge ML training, which is differentially-private, communication-efficient, and applicable for general network topologies.
We show that, with the properly chosen parameters, our SDM-DSGD algorithm is $(\epsilon,\delta)$-differentially private (DP) and enjoys an $\tilde{O}(1/\sqrt{NT})$ convergence rate, where $N$ is the number of nodes in the network and $T$ is the final iteration index of the algorithm.
Moreover, we show that the maximum value of $T$ scales as $O(m^{4})$, where $m$ is the size of local dataset at each node\footnote{Our algorithms can straightfowardly be extended to cases with datasets having unbalanced sizes, i.e., $m_{n_1} \ne m_{n_2}$ for $n_{1} \ne n_{2}$.}.

\item It is also worth pointing out that our SDM-DSGD is a {\em generalized} differential-coded DSGD approach in the sense that: {\em All} existing differential-coded DSGD algorithms can be seen as a special case of our SDM-DSGD (e.g., \cite{tang2018communication}).
Specifically, our key updating step in SDM-DSGD is a linear combination of the current state and the standard DSGD update.
Remarkably, with this generalized updating step, one can perform gradient sparsification with a much lower transmit probability $p$, which implies significantly improved privacy.
This also relaxes the restricted constraint in \cite{tang2018communication} on finding a ``valid'' $p$.
We also note that, thanks to this new algorithmic structure, the non-private version of our algorithm (i.e., no Gaussian-masking) may be of independent interest.

\item Based on our theoretical results from SDM-DSGD, we go one-step further to investigate the interactions between i) gradient differential sparsification and ii) Gaussian masking, which are the two key components responsible for communication efficiency and privacy in SDM-DSGD, respectively.
Toward this end, we compare an alternative design that also has the same two components but with their order being reversed.
Our analysis shows that the proposed SDM-DSGD scheme is superior and can reduce the privacy budget by a $p^2$-fraction.
This insight deepens our understanding on these two components and offers algorithmic design guidelines in practice.

\item Lastly, we conduct extensive experiments to examine the performance of our SDM-DSGD algorithm with a variety of deep learning models on MNIST and CIFAR-10 datasets. 
Our experiments show that the accuracy of SDM-DSGD outperforms two state-of the-art decentralized learning algorithms~\cite{lian2017can,tang2018communication} under the same communication cost and privacy budget.
These experiments corroborate our theoretical results.

\end{list}

Collectively, our results in this paper contribute to the state of the art of theories and algorithm design for communication-efficient and privacy-preserving decentralized learning.
The rest of the paper is organized as follows.
In Section~\ref{Section: preliminary}, we will review necessary background for our algorithm design and analysis.
In Section~\ref{sec:sdm_dsgd}, we introduce our SDM-DSGD algorithm and then analyze its performances in privacy and convergence.
Numerical results are provided in Section \ref{Section: experiment}. 
In Section \ref{Section: conclusion}, we provide concluding remarks.

\section{Related Work} \label{sec:related}

As mentioned in Section \ref{Section: introduction}, results on private and communication-efficient distributed learning algorithms remain quite limited in the literature.
For example, to achieve both communication efficiency and differential privacy, Agarwal {\em et al.}~\cite{agarwal2018cpsgd} proposed the cpSGD algorithm based on the randomized quantization and Binomial masking.
It is shown both theoretically and experimentally that Binomial masking achieves nearly the same utility as Gaussian masking, while the communication cost is significantly reduced.
However, this work mainly focused on the distributed mean estimation (DME) problem under the server/worker architecture. 
It remains unclear how to implement the cpSGD algorithm to train general deep learning models in networks with general communication topologies.
Another related work is \cite{cheng2018leasgd}, where Cheng {\em et al.} proposed a new decentralized algorithm named leader-follower elastic averaging stochastic gradient descent (LEASGD).
In the LEASGD algorithm, the computation nodes are dynamically categorized into two pools: leader pool with nodes of lower loss values and follower pool with nodes of higher loss. 
In each iteration, the leader nodes will pair with followers to guide the followers in the right direction.
Gaussian masking was adopted in the communication step to protect the data privacy.
Although this work numerically showed LEASGD's communication efficiency, the implementation complexity of LEASGD is high due to  the categorization and lead-follower pairing in each iteration.  
Also, the theoretical performance of LEASGD is unclear under the non-convex cases. 
In contrast to these existing work, in this paper we propose a communication-efficient and privacy-preserving distributed training algorithm named SDM-DSGD, for distributed nonconvex learning. 
Our SDM-DSGD algorithm can be viewed as a variant of state-of-the-art decentralized learning algorithm DSGD~\cite{lian2017can} by introducing the randomized sparification and the Gaussian mechanism.

\section{Differential Privacy and Gradient Sparsification: A Primer} \label{Section: preliminary}
To facilitate subsequent technical discussions on privacy and communication efficiency, in this section, we provide the necessary background on differential privacy and gradient sparsification.

\smallskip
\textbf{1) Differential Privacy:}
Differential privacy (DP)~\cite{dwork2014algorithmic,dwork2006calibrating} is a canonical privacy metric for the privacy-preserving data analysis.
Under the DP framework, privacy is defined and measured by how noticeable the distribution of the outcome of some query mechanism changes when only one sample in the dataset is changed:
\begin{defn}[$(\epsilon,\delta)$-Differential Privacy~\cite{dwork2006calibrating}] \label{Def: DP}
Two datasets $\D$ and $\D^\prime$ are said to be adjacent if and only if they differ by only one element.
Given two adjacent $N$-element datasets $\D$ and $\D^\prime \in \Dc^N$ and $\epsilon,\delta >0$, a randomized query mechanism $\Mcc:\Dc^N\rightarrow \mathbb{R}^d$ is called $(\epsilon,\delta)$-differentially-private ($(\epsilon,\delta)$-DP) if and only if for any measurable set $E \in \mathbb{R}^d,$ the output of $\Mcc$ satisfies $\mathbb{P}(\Mcc(\D)\in E) \le e^{\epsilon} \mathbb{P}(\Mcc(\D^\prime)\in E) +\delta$.
\end{defn}

In the literature, two popular approaches to achieve DP are the so-called Gaussian and Laplacian masking mechanisms, both of which share the same form: $\Mcc(\D) = q(\D) + \eta,$ where $q(\cdot)$ is a query function and $\eta$ is the injected Gaussian or Laplacian masking noise.
In our work, we focus on the Gaussian masking mechanism.

\smallskip
\textbf{2) Sparsification:}
In the literature, sparsification, also known as sparse compression, is a commonly used compression technique for compressing gradients to design communication-efficient distributed learning algorithms \cite{wangni2018gradient,tang2018communication,stich2018sparsified}.
The key idea of sparsification is to apply the the following Bernoulli randomized operation to sparsify a high-dimensional vector:
\begin{defn}[Sparsifier~\cite{wangni2018gradient}] \label{Def: Sparse}
For any vector ${\x}=[x_1,\cdots,x_d]^{\top} \in \mathbb{R}^d$ and a constant $p\in [0,1)$, $S({\x})$ outputs a sparse vector with the $i$-th element $[S{(\x)}]_{i}$ following the Bernoulli$(p)$ distribution:
\begin{align*}
\begin{cases}
\mathrm{Pr} ([S{(\x)}]_{i} = \frac{\x_i}{p} ) = p,\\
\mathrm{Pr} ([S{(\x)}]_{i} = 0 ) = 1-p.
\end{cases}
\end{align*}
\end{defn}
We can see that the sparsifier operation randomly selects some coordinates and sets the information in these coordinates to zero.
Also, it follows immediately from Definition~\ref{Def: Sparse} that the sparsification operation is {\em unbiased} and the introduced variance depends on the magnitude of input vector (we omit the proof due to its simplicity):
\begin{lem}
For any $\x\in\mathbb{R}^d,$ the random output $S(\x)$ satisfies: 1) unbiased expectation: $\mathbb{E}\big(S(\x)\big)=\x$; and 2) input-dependent variance: $\text{Var}\big(S(\x)\big)=(1/p-1)\|\x\|_2.$
\end{lem}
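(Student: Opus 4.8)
The plan is to reduce everything to the $d$ independent Bernoulli trials that define $S(\x)$ and compute the first two moments one coordinate at a time; beyond linearity of expectation, nothing is really needed.

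First I would establish unbiasedness. By Definition~\ref{Def: Sparse}, the $i$-th coordinate $[S(\x)]_i$ takes the value $\x_i/p$ with probability $p$ and $0$ with probability $1-p$, so $\mathbb{E}\big([S(\x)]_i\big) = p\cdot(\x_i/p) + (1-p)\cdot 0 = \x_i$. Assembling these $d$ scalar identities gives $\mathbb{E}\big(S(\x)\big) = \x$.

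Next I would turn to the variance, which I read as $\mathbb{E}\big\|S(\x) - \x\big\|_2^2$; expanding the squared norm and using linearity of expectation, this equals $\sum_{i=1}^d \text{Var}\big([S(\x)]_i\big)$ (independence of the coordinate draws is not even needed for this identity, though it does make the covariance matrix of $S(\x)$ diagonal). For a single coordinate, $\mathbb{E}\big([S(\x)]_i^2\big) = p\cdot(\x_i/p)^2 + (1-p)\cdot 0 = \x_i^2/p$, hence $\text{Var}\big([S(\x)]_i\big) = \x_i^2/p - \x_i^2 = (1/p-1)\,\x_i^2$. Summing over $i$ yields $\text{Var}\big(S(\x)\big) = (1/p-1)\sum_{i=1}^d \x_i^2 = (1/p-1)\|\x\|_2^2$, i.e., the stated expression read with the squared Euclidean norm.

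Since each step is an elementary two-point moment calculation, there is no genuine obstacle here; the only points worth stating cleanly are the convention that the ``variance of a vector'' means the trace of its covariance matrix (equivalently, $\mathbb{E}\|S(\x)-\mathbb{E}S(\x)\|_2^2$), and the fact that the result is an exact identity rather than an upper bound — the factor $1/p-1$ is precisely the per-squared-unit-norm noise that sparsification injects, and this exact expression is what will be plugged into the convergence analysis later.
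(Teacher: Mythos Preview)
Your argument is correct and is exactly the elementary coordinate-wise moment computation one expects; the paper itself omits the proof entirely (``we omit the proof due to its simplicity''), so there is nothing to compare against. Your observation that the variance identity should read $(1/p-1)\|\x\|_2^2$ rather than $(1/p-1)\|\x\|_2$ is also right, and indeed the squared form is what the paper actually uses downstream (e.g., in the proof of Lemma~\ref{Lemma: sparse dt}).
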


It is worth noting that, although sparsification is a random mechanism, it does {\em not} provide any privacy guarantee in terms of DP: 
Given a dataset $\D$ and a query function $q(\cdot),$ there exists an adjacent dataset $\D^\prime$ such that $q(\D) \neq q(\D^\prime).$
Under the sparsification with probability $p,$ consider the event $E = \{q(\D)/p\}$.
Then, for dataset $\D$, to have the output $q(\D)/p,$ it requires that all the coordinates need to be selected. 
Thus, it holds that $\mathrm{Pr}(S(q(\D)\in E) = p^d > 0,$ where $d$ is the dimension.
However, for dataset $\D^\prime,$ because $q(\D) \neq q(\D^\prime),$ it is impossible to have an output as $q(\D)/p,$ which implies $\mathrm{Pr}(S(q(\D^\prime)\in E) = 0.$ 
Thus, it is impossible to find valid $\epsilon$ and $\delta \in [0,1)$ to satisfy the DP inequality in Definition \ref{Def: DP}.
Interestingly, although sparsification does not offer DP, we will show it later that by sparsifying part of the original information, the sparsifier operation does help to improve the privacy protection performance.
 
\section{A Sparse Differential Gaussian- Masking SGD Approach} \label{sec:sdm_dsgd}

In this section, we first present the problem formulation of edge ML training in Section~\ref{subsec:formulation}.
Then, we will present our SDG-DSGD algorithm in Section~\ref{subsec:alg} and its main theoretical results in Section~\ref{subsec:results}.

\subsection{Problem Formulation of Edge ML Training} \label{subsec:formulation}
In this paper, we use an undirected graph $\mathcal{G} = (\mathcal{N},\mathcal{L})$ to represent a wireless edge network with {\em general} network topology, where $\mathcal{N}$ and $\mathcal{L}$ are the sets of nodes and links, respectively, with number of nodes as $|\mathcal{N}| = n$.
We let $\x \in \mathbb{R}^{d}$ denote a global decision vector to be learned or estimated.
In edge ML training, we want to distributively solve an unconstrained optimization problem:
$\min_{\x \in \mathbb{R}^{d}} f(\x;\D)$, and $f(\x;\D)$ can be decomposed node-wise as follows\footnote{Here we assume the datasets are balanced}:
\begin{align} \label{eqn_general_problem}
\min_{\x \in \mathbb{R}^{d}} f(\x;\D) = \min_{\x \in \mathbb{R}^{d}} \sum_{i=1}^{n} f(\x;\D_i),
\end{align}
where $f(\x;\D) = \frac{1}{|\D|}\sum_{\z\in\D} f(\x;\z)$ for any dataset $\D.$
Here, each local objective function $f(\x;\D_i)$ is only observable to node $i$.
It is easy to see that Problem~(\ref{eqn_general_problem}) can be  equivalently reformulated as the following {\em consensus form}:
\vspace{-.05in}
\begin{align} \label{Eq:problem1}
\min \sum_{i=1}^{n} f(\x_i;\D_i),~  s.t. \x_i = \x_j,~ \forall (i,j) \in \mathcal{L}. 
\end{align}
where $\x_i\in \mathbb{R}^d$ is the local copy of $\x$ at node $i$. 
The constraints in Problem (\ref{Eq:problem1}) guarantee that the all local copies are equal to each other, hence the name consensus form.

\subsection{The SDM-DSGD Algorithm}\label{subsec:alg}

As the name suggests, our proposed SDM-DSGD method is inspired by the classical decentralized gradient descent (DGD) algorithm \cite{nedic2009distributed,yuan2016convergence,lian2017can},
which is one of the most effective approaches for distributively solving network consensus optimization problems.
The DGD framework is built upon the notion of {\em consensus matrix}, which is denoted as $\W \in \mathbb{R}^{n \times n}$ in this paper.
Specifically, in each iteration of DGD, each node in the network performs an update that integrates a local (stochastic) gradient step and a weighted average from its neighbors' parameters based on $\W$.
Mathematically, $\W$ satisfies the following properties:
\begin{list}{\labelitemi}{\leftmargin=2em \itemindent=-0.0em \itemsep=.2em}
\item[1)] {\em Doubly Stochastic:} $\sum_{i=1}^{n} [\mathbf{W}]_{ij}=\sum_{j=1}^{N} [\mathbf{W}]_{ij}=1$.
\item[2)] {\em Symmetric:} $[\mathbf{W}]_{ij} = [\W]_{ji}$, $\forall i,j \in \mathcal{N}$. 
\item[3)] {\em Network-Defined Sparsity Pattern:} $[\W]_{ij} > 0$ if $(i,j)\in \mathcal{L}$ and $[\mathbf{W}]_{ij}=0$ otherwise, $\forall i,j \in \mathcal{N}$.
\end{list}
Properties 1)--3) imply that the spectrum of $\W$ (i.e., the set of all eigenvalues) lies in the interval $(-1,1]$ with exactly one eigenvalue being equal to 1.
Also, all eigenvalues being real implies that they can be sorted as $-1 < \lambda_n(\mathbf{W}) \leq \cdots \leq \lambda_1(\mathbf{W}) = 1$.
To facilitate later discussions, we let $\beta \triangleq \max\{|\lambda_2(\mathbf{W})|,|\lambda_n(\mathbf{W})|\} \in (0,1)$, i.e., the second-largest eigenvalue of $\W$ in magnitude.
The use of the consensus matrix is due to the fact that $(\W \otimes \I_{P}) \x = \x$ {\em if and only} if $\x_i = \x_j$, $(i,j) \in \mathcal{L}$\cite{nedic2009distributed}, where $\x = [\x_{1}^{\top},\ldots,\x_{n}^{\top}]^{\top}$ and $\otimes$ represents the Kronecker product.
Therefore, Problem (\ref{Eq:problem1}) can be reformulated as $\min_{\x \in \mathbb{R}^{D}} \sum_{i=1}^{N} f_{i}(\x_{i})$, $\mathrm{s.t.} \,\, (\W \otimes \I_{P}) \x = \x$, which further leads to the original DGD algorithmic design\cite{nedic2009distributed}.
With the notion of $\W$, our SDM-DSGD algorithm can be stated as follows:

\medskip
\hrule 
\vspace{.02in}
\noindent {\textbf{Algorithm~1:} A \underline{S}parse \underline{D}ifferential Gaussian-\underline{M}asking \underline{D}istributed \underline{S}tochastic \underline{G}radient \underline{D}escent (SDM-DSGD) Algorithm.} \label{alg:sdm-dsgd}
\vspace{.02in}
\hrule
\vspace{0.02in}
\noindent {\textbf{ Initialization:}}
\begin{enumerate} [topsep=1pt, itemsep=-.1ex, leftmargin=.2in]
\item[1.] Set the initial state $\x_{i,0}\!=\! \y_{i,0}\!=\! \d_{i,0}\!=\!\0$, $\forall i,$ and $t = 1.$
\end{enumerate}
\noindent {\textbf{ Main Loop:}}
\begin{enumerate} [topsep=1pt, itemsep=-.1ex, leftmargin=.2in]
\item[2.] In the $t$-th iteration, each node sends the sparsified differential $S(\d_{i,t})$ to its neighbors, where $S(\cdot)$ is the sparsifier operation. 
Also, upon collecting all neighbors' information, each node $i \in \mathcal{N}$ updates the following local values:
\begin{enumerate} [topsep=1pt, itemsep=-.1ex, leftmargin=.2in]
\item[a) ] Reconstruct node $i$'s neighbors inexact copies:
$\x_{j,t}\!=\!\x_{j,t-1}\!+\! S(\d_{j,t-1}),$  $\forall j \in \mathcal{N}_i;$
\item[b) ] Update local copy: 
$\y_{i,t}=(1-\theta)\x_{i,t} + \theta \big(\sum_{j \in \mathcal{N}_{i}} [\W]_{ij}\x_{j,t} -  \gamma  \big(\nabla f (\x_{i,t};\zeta_{i,t}) + \eta_{i,t}\big)\big),$ where $\nabla f(\cdot;\zeta_{i,t})$ is the stochastic gradient, and $\eta_{i,t} \sim N(0,\sigma^2\I_d)$ is a Gaussian random noise.
\item[c) ] Compute the local differential: 
$\d_{i,t}\!=\!\y_{i,t} \!-\! \x_{i,t}$.
 \end{enumerate}
\item[3.] Stop if some convergence criterion is met; otherwise, let $t \leftarrow t+1$ and go to Step 2. 
\end{enumerate}
\hrule
\medskip

\begin{rem}{\em
Algorithm~1 is motivated by and bears some similarity with the DGD-type communication-efficient distributed learning in the literature~\cite{tang2018communication,zhang2019compressed}.
In these existing work, rather than exchanging the states directly, the compressed differentials between two successive iterations of the variables are communicated to reduce the communication load.
By contrast, Algorithm~1 differs from these existing work in the following key aspects:
i) As noted in Section~\ref{Section: introduction}, the update in Step~2.b) in SDM-DSGD generalizes the existing work by using a {\em linear combination} of the current state and the DSGD update.
It was shown that when using the sparsification in the proposed algorithm in~\cite{tang2018communication}, the transmit probability $p$ in Definition~\ref{Def: Sparse} is required to be greater than $4(1-\lambda_n)^2/(4(1-\lambda_n)^2 + (1-|\lambda_n|)^2),$ where $\lambda_n$ is the smallest eigenvalue of $\W$.
In contrast, our generalized framework allows a much smaller $p$ in the sparsification, i.e., significantly better communication-efficiency.
ii) In addition to performance gains in terms of communication-efficiency, as will be shown later, our generalized algorithm also improves the convergence speed from $O(T^{-1/3})$ to $\tilde{O}(T^{-1/2})$ as long as $p = \Omega(1/\log(T))$.

}
\end{rem}

Before we state our main theoretical results, it is insightful to offer some intuitions on how our SDM-DSGD method is derived.
Toward this end, we rewrite the update rule of in SDM-DSGD algorithm in the following vector form:
\begin{align}\label{Eq: updating equation}
\left\{
\begin{aligned}
 \x_{t} &= \x_{t-1} + S(\d_{t-1}), \\
 \y_{t} &= (1-\theta)\x_{t} +\theta \Big(\tilde{\W} \x_{t} - \gamma \big(\nabla \f(\x_{t};\zeta_{t}) +\veta_{t} \big)\Big), \\
 \d_{t} &= \y_{t} - \x_{t}, \\
\end{aligned}
\right.
\end{align}
where $\tilde{\W} \triangleq \W \otimes \I_n$ and $\f(\x_{t};\zeta_{t}) = \sum_{i=1}^{n}f(\x_i;\zeta_{i,t})$.
Define a Lyapunov function $\L_\gamma(\x;\D) \triangleq \frac{1}{2}\x^\top(\I - \tilde{\W})\x + \sum_{i=1}^{n}f(\x_i;\D_i),$ and its stochastic version ${\L}_\gamma(\x;\zeta) \triangleq \frac{1}{2}\x^\top(\I - \tilde{\W})\x + \sum_{i=1}^{n}f(\x_i;\zeta_i),$ where $\x = [\x_1^\top,\cdots,\x_n^\top]^\top \in \mathbb{R}^{nd}$ and the random sample batch $\zeta = \{\zeta_i\}_{i=1}^n.$
It can be readily verified that: 
\begin{align}
\left\{
\begin{aligned}
\y_t 
&= \x_t - \theta(\nabla{\L}_\gamma(\x_t;\zeta_t)+\gamma \veta_t), \\
\d_t & = - \theta(\nabla{\L}_\gamma(\x_t;\zeta_t)+\gamma \veta_t),
\end{aligned}
\right.
\end{align}
which implies that the iterates update can be written as: 
\begin{align}\label{Eq: iterate update}
\x_{t+1}\! =\! \x_t \!+\! S(\d_t)  \stackrel{}{\!=\!} \x_t \!-\! \theta\nabla{\L}_\gamma(\x_t;\zeta_t) \!-\! \theta\gamma \veta_t \!+\! \bepi_t,
\end{align}
where $\bepi_t$ represents the noise from the sparsifier.
Therefore, the iterative update rule in SDM-DSGD can be viewed as applying the stochastic gradient descent algorithm on the Lyapunov function $\L_\gamma(\x;\D)$ with two additional noises $\theta\gamma \veta_t$ and $\bepi_t$, one from the privacy protection and the other one from the sparse compression.

\subsection{Main Theoretical Results}\label{subsec:results}

In this subsection, we will establish the privacy and convergence properties of the proposed SDM-DSGD method.
For better readability, we state the main theorems and their key insights in this subsection and relegate the proofs of the main theorems to the appendices.
We start with stating the following assumptions:
\begin{assum}\label{Assumption: function}
The global objective function $f(\cdot)$ satisfies:
\begin{enumerate}[topsep=1pt, itemsep=-.1ex, leftmargin=.3in]
	\item[(1)] Given dataset $\D=\{\D_i\}_{i=1}^{n},$ $f(\x;\D)$ is bounded from below, i.e., $\exists \x^*_{\D}\in\mathbb{R}^d,$ such that $f(\x;\D)\ge f(\x^*_{\D};\D)$, $\forall \x\in\mathbb{R}^d;$
	\item[(2)] The function $f(\x;z)$ is continuously differentiable and has $L$-Lipschitz continuous gradient, i.e., there exists a constant $L >0$ such that $|\nabla f(\x;z) -\nabla f(\y;z)|\le L \| \x-\y \|_2,$ $\forall \x,\y\in\mathbb{R}^d;$
	\item[(3)] The stochastic gradient is unbiased and has bounded variance with respective to the local dataset, i.e. $\Eb_{z\sim\D_i}[\nabla f(\x;z)] = f(\x;\D_i)$ and $\text{Var}_{z\sim \D_i}[\nabla f(\x;z)]\le \tilde{\sigma}^2;$
	\item[(4)] The function $f(\x;z)$ is coordinate-wise $G/\sqrt{d}$-smooth, i.e., for all coordinates $k,$ $|[\nabla f(\x;z)]_k|\le G/\sqrt{d}.$ 
\end{enumerate}
\end{assum}

The first three assumptions are standard for the convergence analysis of stochastic algorithms \cite{ghadimi2013stochastic,lian2015asynchronous,lian2017can}.
The last assumption characterizes the sensitivity of objective function with respect to $\x$ in each coordinate.
Note that it also implies the $\ell_2$-sensitivity bound $\|\nabla f(\x;z)\|\le G,$ which is useful for differential privacy (see Defintion~\ref{Def: l2 sensitivity} in Appendix~\ref{Theorem: Privacy Guarantee} and \cite{wang2019dp,jayaraman2018distributed}).

\smallskip
\textbf{1) Privacy Analysis:}
In our SDM-DSGD algorithm, in the $t$-th iteration, each node releases the local information $S(\d_t)$, which is generated by first applying Gaussian masking on the local stochastic gradient $g(x_{i,t};\zeta_{i,t})$ and then the sparsifier operation on the local differential $d_t$. 
Consider the latent vectors $\omega_{i,t} \sim \text{Bin}(d,p)$ and $\omega_t = [\omega_{1,t}^\top,\cdots, \omega_{n,t}^\top]^\top,$ where $\text{Bin}(d,p)$ is the binomial distribution with $d$ trials and success probability $p.$  
Each coordinate of $\omega_{i,t}$ denotes whether the information is transmitted or not.
Note that the generation of $\omega_t$ does not depend on data.
Define the active set $\Ac_{1,i,t} =\{k: [\omega_{i,t}]_k = 1\}$ and inactive set  $\Ac_{0,i,t} = \{k: [\omega_{i,t}]_k = 0\},$ and $\Ac_{1,t} = \cup_i \Ac_{1,i,t},$ $\Ac_{0,t} = \cup_i \Ac_{0,i,t}.$
Then, only active coordinates are released.
We perform this coordinate decomposition on $S(\d_t)$ since the adversaries can only infer sensitive data from the coordinates in $\Ac_{1,t}$, while the coordinates in $\Ac_{0,t}$ are private. 
In what follows, we compare our mechanism with two existing privacy protection techniques:

\smallskip
{\em 1) Difference from randomized response mechanism:}
In our sparsifier, the binomial vector $\z$ is used to determine the active  and inactive sets.
A similar mechanism is the randomized response (RR) mechanism \cite{dwork2014algorithmic,wang2016using}.
Our mechanism differs from RR as follows: 
i) The RR mechanism is a binary-response query, e.g., $\{\text{Yes},\text{No}\},$ while our query is sampling from $\mathbb{R}^d$;
ii) The RR mechanism is designed for the proportion estimation over several queries. 
However, in each iteration, only one query is available for the sparsifier.

\smallskip
{\em 2) Difference from sparse vector technique:}
In our privacy protection mechanism, due to the sparsity of the released vectors, part of information is protected.
This is similar to the idea of the sparse vector technique (SVT) \cite{dwork2009complexity,roth2010interactive,lyu2017understanding}. 
In SVT, a threshold value is chosen so that only the first $c$ queries above the threshold will be released. 
For privacy protection, two randomized procedures are used in SVT by adding noises on the threshold and queries.
The key difference between our method and SVT are: 
i) SVT is designed to select $c$ important queries (i.e. coordinates). 
However, in our algorithm, the sparsity of the released vector is random based on transmitted probability $p$.  
ii) In our method, the transmitted coordinates are amplified by a $(1/p)$-factor to ensure that the released vector is unbiased.
In contrast, it can be verified that the released vector in SVT is biased due to the lack of such an amplifying operation.

\begin{thm}[Privacy Guarantee] \label{Theorem: Privacy Guarantee}
Choose the variance of added Gaussian noise $\veta$ as $\sigma^2\ge 1/1.25$. 
Under Assumption~\ref{Assumption: function}, for any $\delta \in (0,1)$, 
the execution of SDM-DSGD algorithm with $T$ iterations is $(4\alpha\sum_{t=1}^{T}|\overline{\Ac_{1,t}}|(\tau G /\sqrt{d}m\sigma)^2+\epsilon/2,\delta)$-differentially-private,
where $\alpha = 2\log(1/\delta)/\epsilon+1$, $|\overline{\Ac_{1,t}}| = \max_i\{|\Ac_{1,i,t}|\}$ and $\tau$ is the subsampling rate for SGD.
\end{thm}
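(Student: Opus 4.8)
The plan is to prove the bound through the moments accountant (equivalently, R\'enyi differential privacy and its composition), tracking the privacy loss of the entire transcript of released messages when a single element of some local dataset $\D_i$ is changed. Elements of $\D_j$ with $j\ne i$ do not enter node $i$'s stochastic gradients, and every iterate $\x_{\cdot,t}$, every neighbor reconstruction in Step~2.a), and hence every other node's messages, is an (adaptive) post-processing of what has already been released; so by the post-processing and adaptive-composition properties it is enough to bound the privacy loss of node $i$'s own $T$ releases $\{S(\d_{i,t})\}_{t=1}^{T}$.

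Next I would isolate the data-dependent content of one release. From Step~2.b)--2.c) one has $\d_{i,t}=-\theta\gamma\big(\nabla f(\x_{i,t};\zeta_{i,t})+\eta_{i,t}\big)$ up to terms independent of $\D_i$, so $S(\d_{i,t})$ is a coordinate-wise post-processing of $\nabla f(\x_{i,t};\zeta_{i,t})+\eta_{i,t}$ restricted to the active set $\Ac_{1,t}$ and rescaled by $1/p$; since the rescaling and the prefactor $\theta\gamma$ act identically on the signal and on the Gaussian noise, the effective signal-to-noise ratio on each active coordinate depends only on the per-coordinate sensitivity and on $\sigma$. Assumption~\ref{Assumption: function}(4) gives $|[\nabla f(\x;z)]_k|\le G/\sqrt d$; averaging the gradient over the $m$-point local dataset and invoking privacy amplification by subsampling at rate $\tau$, I would conclude that, conditioned on $\Ac_{1,t}$, the release behaves as a subsampled Gaussian mechanism whose effective $\ell_2$-sensitivity is a $\sqrt{|\Ac_{1,t}|/d}$ multiple of the full-vector sensitivity, which is of order $\tau G/m$ once the subsampling factor is folded in.

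I would then bound the log-moment $\alpha_t(\lambda)$ of this mechanism for a fixed active set. This is precisely where the hypothesis $\sigma^2\ge 1/1.25$ enters: it places the noise in the regime in which the standard subsampled-Gaussian moment estimate is valid, yielding $\alpha_t(\lambda)\lesssim \lambda^2\,|\Ac_{1,t}|\,(\tau G)^2/(d\,m^2\sigma^2)$. Because $\z_t\sim\mathrm{Bin}(nd,p)$ is drawn independently of the data, the active set is public randomness, and composing ``draw $\z_t$, then release'' shows the log-moment of the true release is $\log\Eb_{\z_t}\!\big[\exp(\alpha_t(\lambda))\big]$; since $|\Ac_{1,t}|\sim\mathrm{Bin}(d,p)$ we have $\Eb\big[\exp(t|\Ac_{1,t}|)\big]=(1-p+pe^{t})^{d}\le\exp\!\big(dp(e^{t}-1)\big)$, which collapses to a factor $p$ whenever the exponent is $O(1)$. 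This gives a per-iteration bound $\alpha_t(\lambda)\lesssim \lambda^2\,p\,(\tau G/m)^2/\sigma^2$ ``in expectation'' over the sparsification coins --- the precise meaning of that qualifier in the statement, and the quantitative form of the claim that the sparsifier aids privacy. Subadditivity of the moments-accountant function under adaptive composition then gives $\alpha_{\mathrm{tot}}(\lambda)\le T\alpha_t(\lambda)\lesssim \lambda^2\,pT\,(\tau G/m)^2/\sigma^2$ over the $T$ iterations, and converting back through $\delta=\exp\!\big(\alpha_{\mathrm{tot}}(\lambda)-(\lambda-1)\epsilon_{\mathrm{tot}}\big)$ and choosing the order $\lambda=\alpha=2\log(1/\delta)/\epsilon+1$ makes the $\log(1/\delta)/(\lambda-1)$ term equal to $\epsilon/2$ and turns the leading term into $4\alpha pT(\tau G/m\sigma)^2$, the constant $4$ absorbing the subsampled-Gaussian moment constant and the $\lambda/(\lambda-1)$ factor; this is the asserted $\big(4\alpha pT(\tau G/m\sigma)^2+\epsilon/2,\ \delta\big)$-DP.

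The step I expect to be the main obstacle is the expectation over the Bernoulli mask: one must show that averaging $\exp(\alpha_t(\lambda))$ over the public coins really produces a bound \emph{linear} in $p$ rather than the a~priori larger $\log\Eb[\exp(\cdot)]$, which requires pinning down the regime of $(\lambda,p,\sigma,d)$ in which the exponent $dp(e^{t}-1)$ stays $O(1)$; this constraint, together with the subsampled-Gaussian moment estimate, is what forces both the ``in expectation'' phrasing and the $\sigma^{2}\ge 1/1.25$ condition. A secondary subtlety is the bookkeeping that reduces the full $N$-node, $T$-iteration transcript --- including the reconstruction steps --- to node $i$'s $T$ releases via post-processing and adaptive composition.
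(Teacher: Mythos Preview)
Your proposal is correct and follows essentially the same route as the paper: condition on the active set $\Ac_{1,t}$, use the coordinate-wise bound from Assumption~\ref{Assumption: function}(4) to obtain an $\ell_2$-sensitivity scaling like $\sqrt{|\Ac_{1,i,t}|/d}\cdot G/m$, apply the R\'enyi-DP Gaussian bound and subsampling amplification (this is where $\sigma^2\ge 1/1.25$ is needed), compose over $T$ iterations, and convert to $(\epsilon,\delta)$-DP with the choice $\alpha=2\log(1/\delta)/\epsilon+1$.

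The only noteworthy difference is how the sparsifier randomness is handled. The paper does not carry out the moment-generating-function computation you sketch: once it has the conditional RDP parameter $4\alpha|\Ac_{1,i,t}|(\tau G/\sqrt d\, m\sigma)^2$, it simply observes that $\Eb|\Ac_{1,i,t}|=pd$ and replaces the random quantity by its mean, which is exactly the content of the phrase ``in expectation'' in the theorem. Your MGF route via $\Eb[\exp(c|\Ac_{1,t}|)]=(1-p+pe^{c})^{d}$ is strictly more ambitious---it would yield a genuine (not merely expected) RDP bound in the regime where $c$ is small---but for the theorem as stated you can bypass that obstacle entirely and just take the expectation of the linear-in-$|\Ac_{1,i,t}|$ RDP parameter, as the paper does.
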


\begin{rem}
{\em The lower bound $\sigma^2 \geq 1/1.25$ follows from \cite{wang2018subsampled} to guarantee the privacy amplification under the subsampling. Theorem \ref{Theorem: Privacy Guarantee} shows that the sparsifier improves the differential privacy guarantee by a factor $\sum_{t=1}^{T}|\overline{\Ac_{1,t}}|/dT \le 1,$ which depends on $p.$ 
Hence, the smaller the value of $p$, the less information will be communicated, and the better privacy protection. 
Meanwhile, it can be seen that the privacy loss increases as the iteration number $T$ gets large, which is expected.
This is because, with fewer iterations, less information will be released, which implies a better privacy protection. 
However, fewer iterations cause a larger training loss in edge ML.
This leads to a {\em training-privacy trade-off}, which we will further analyze later.
Also, by inverting Theorem~\ref{Theorem: Privacy Guarantee}, we  have the following result:
}
\end{rem}

\begin{cor}\label{Cor: Privacy Guarantee}
Under the same conditions in Theorem \ref{Theorem: Privacy Guarantee} and let the subsampling rate be $1/m$ (i.e., each node subsamples one out of $m$ data), if the variance of added Gaussian noise $\veta$ is chosen as 
\begin{align}
\sigma^2 \ge \max\{\frac{8 \sum_{t=1}^{T}|\overline{\Ac_{1,t}}| G^2(2\log(1/\delta)+\epsilon)}{d\epsilon^2 m^4}, 
\frac{1}{1.25}\},
\end{align} then given the total iteration number $T,$ the SDM-DSGD algorithm is $(\epsilon,\delta)$-DP for any $\delta \in (0,1).$
\end{cor}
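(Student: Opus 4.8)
The plan is to derive Corollary~\ref{Cor: Privacy Guarantee} by a direct inversion of the bound in Theorem~\ref{Theorem: Privacy Guarantee}; no new probabilistic argument is required. First I would specialize the subsampling rate to $\tau = 1/m$, so that the per-iteration sensitivity-to-noise quantity in Theorem~\ref{Theorem: Privacy Guarantee} becomes $(\tau G/m\sigma)^2 = G^2/(m^4\sigma^2)$, which is exactly where the $m^4$ scaling in the statement originates. With this substitution, Theorem~\ref{Theorem: Privacy Guarantee} says that $T$ iterations of SDM-DSGD are $(\epsilon_{\mathrm{tot}},\delta)$-DP in expectation with
\begin{align*}
\epsilon_{\mathrm{tot}} = \frac{4\alpha p T G^2}{m^4\sigma^2} + \frac{\epsilon}{2}, \qquad \alpha = \frac{2\log(1/\delta)}{\epsilon}+1 = \frac{2\log(1/\delta)+\epsilon}{\epsilon},
\end{align*}
where here $\epsilon$ plays the role of the free parameter entering $\alpha$, and we aim to force $\epsilon_{\mathrm{tot}}$ to equal that same $\epsilon$.

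Next I would plug the prescribed variance $\sigma^2 = 8pTG^2(2\log(1/\delta)+\epsilon)/(m^4\epsilon^2)$ into the first term. Using $m^4\sigma^2 = 8pTG^2(2\log(1/\delta)+\epsilon)/\epsilon^2$ together with $\alpha = (2\log(1/\delta)+\epsilon)/\epsilon$, the two factors $(2\log(1/\delta)+\epsilon)$ cancel and one obtains $4\alpha p T G^2/(m^4\sigma^2) = \epsilon/2$, hence $\epsilon_{\mathrm{tot}} = \epsilon/2 + \epsilon/2 = \epsilon$. This is the claimed $(\epsilon,\delta)$-DP in expectation, and it is valid for every $\delta\in(0,1)$ because $\alpha$, the formula for $\sigma^2$, and (by assumption) Theorem~\ref{Theorem: Privacy Guarantee} are all stated for arbitrary $\delta\in(0,1)$.

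It then remains to verify that this choice of $\sigma^2$ is admissible, i.e., that the hypothesis $\sigma^2\ge 1/1.25$ of Theorem~\ref{Theorem: Privacy Guarantee} holds. Since $\delta\in(0,1)$ gives $2\log(1/\delta)>0$, I would use the crude lower bound $\sigma^2 = 8pTG^2(2\log(1/\delta)+\epsilon)/(m^4\epsilon^2) > 8pTG^2/(m^4\epsilon)$; therefore $\sigma^2 \ge 1/1.25 = 0.8$ is guaranteed whenever $8pTG^2/(m^4\epsilon) \ge 0.8$, that is, $\epsilon \le 10\,pTG^2/m^4$. This is precisely the stated admissible range of $\epsilon$, which completes the argument.

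I do not expect a real obstacle here, since the corollary is essentially a bookkeeping consequence of Theorem~\ref{Theorem: Privacy Guarantee}; the only points needing care are (i) expanding $\alpha$ so the substitution telescopes cleanly, and (ii) keeping track that the guarantee — like Theorem~\ref{Theorem: Privacy Guarantee} — holds \emph{in expectation} over the randomness of the sparsifier's active set $\Ac_{1,t}$, not in the worst case. All the substantive content (the subsampled-Gaussian moments accounting and the $p$-factor amplification from sparsification) has already been done in the proof of Theorem~\ref{Theorem: Privacy Guarantee}.
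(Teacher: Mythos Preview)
Your proposal is correct and matches the paper's approach exactly: the paper presents Corollary~\ref{Cor: Privacy Guarantee} as an immediate inversion of Theorem~\ref{Theorem: Privacy Guarantee} without a separate proof, and your explicit substitution of $\tau=1/m$, $\alpha=(2\log(1/\delta)+\epsilon)/\epsilon$, and the prescribed $\sigma^2$ (together with the crude lower bound verifying $\sigma^2\ge 1/1.25$ under $\epsilon\le 10pTG^2/m^4$) is precisely the bookkeeping the paper leaves implicit.
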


\smallskip
\textbf{2) Convergence Analysis for Training Loss:}
As shown in Eq.~(\ref{Eq: iterate update}), instead of directly optimizing $f(x;\D)$, our SDM-DSGD algorithm can be viewed as applying stochastic gradient descent on the Lyapunov function $\L_\gamma(\x;\D)$.
However, besides the random sampling noise, we also have the noises from the Gaussian masking and sparsification.
Note that the compression noise $\bepi$ is {\em dependent} on the sampling and added Gaussian masking noises, which significantly complicates our convergence analysis.
In what follows, we first quantify the optimization error of the sum output $\bar{x}_T=\sum\nolimits_{i=1}^{n}x_{i,T}.$

\begin{lemma}[Convergence]\label{Theorem: convergence}
Under Assumption \ref{Assumption: function}, fixing the variance of added Gaussian noise $\sigma^2 $ to be a constant, starting from $x_{i,0}=0$ $\forall i,$ and setting $\theta < 2p/(1-\lambda_n+\gamma L) \in (0,1),$ the iterates $x_{i,t}$ generated by Eq.~(\ref{Eq: iterate update}) satisfy:
\begin{align}\label{Eq: convergence error}
&\min_{t\in\{0,\cdots,T-1\}}\|\nabla f(\bx_t;\D)\|_2^2
\stackrel{}{\le} (\mathrm{I}) + (\mathrm{II}) + (\mathrm{III}) + (\mathrm{IV}),
\end{align}
where $(\mathrm{I}) = \frac{2C_1}{\theta\gamma T}$,
$(\mathrm{II}) = \frac{ 2LC_3}{n}\big(\frac{\gamma }{1-\beta}\big)^2$,
$(\mathrm{III}) = \frac{2\theta\gamma^2 LC_2}{n(1-\beta)}(\frac{1}{p}-1)+  \frac{L\theta\gamma C_2}{n^2p}$,
and $(\mathrm{IV}) = (\frac{2\gamma L}{n(1-\beta)} + \frac{L}{n^2})(\frac{1}{p}-1)\Big[\frac{2pn C_1}{\big(2p-(1-\lambda_n+\gamma L)\theta\big)T} +
\frac{(1-\lambda_n+\gamma L)\theta^2\gamma C_2}{2p-(1-\lambda_n+\gamma L)\theta}\Big]$.
In the above terms, $C_1 \triangleq f(0;\D) -  f(x^*_{\D};\D),$ $C_2 \triangleq n\tilde{\sigma}^2/m\tau\!+\!nd\sigma^2$ and $C_3 \triangleq (nG)^2+(nd\sigma)^2$ are constants, and $\tilde{\sigma}^2$ is the variance of the stochastic gradients.
\end{lemma}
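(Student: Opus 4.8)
The plan is to regard the SDM-DSGD recursion~(\ref{Eq: iterate update}) as inexact stochastic gradient descent on the Lyapunov function $\L_\gamma$ and to bound, one at a time, the four sources of error: the initial optimality gap, the consensus (disagreement) error, the mini-batch and Gaussian variance, and the data-dependent sparsification variance. The first ingredient is the smoothness of $\L_\gamma$: from~(\ref{Eq: iterate update}) one reads off $\nabla\L_\gamma(\x;\D) = (\I-\tilde{\W})\x + \gamma\nabla\f(\x;\D)$, and since the eigenvalues of $\I-\tilde{\W}$ lie in $[0,1-\lambda_n]$, Assumption~\ref{Assumption: function}(2) yields that $\L_\gamma(\cdot;\D)$ has $(1-\lambda_n+\gamma L)$-Lipschitz gradient. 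This is precisely the constant that appears in the step-size condition, and it explains why that condition is needed.

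Next I would apply the descent lemma to $\L_\gamma$ along $\x_{t+1}=\x_t+S(\d_t)$ and take the conditional expectation given the history $\mathcal{F}_t$. Unbiasedness of the stochastic gradient (Assumption~\ref{Assumption: function}(3)) and of the Gaussian mask kills the first-order term, leaving $-\theta\,\Eb\|\nabla\L_\gamma(\x_t;\D)\|^2$. For the second-order term I would invoke the sparsifier identities $\Eb[S(\d_t)\mid\d_t]=\d_t$ and $\Eb[\|S(\d_t)\|^2\mid\d_t]=\tfrac1p\|\d_t\|^2$, together with $\d_t=-\theta(\nabla\L_\gamma(\x_t;\zeta_t)+\gamma\veta_t)$ and Assumption~\ref{Assumption: function}(3), to get $\Eb[\|\d_t\|^2\mid\mathcal{F}_t]\le\theta^2(\|\nabla\L_\gamma(\x_t;\D)\|^2+\gamma^2 C_2)$. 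Rearranging gives, for each $t$,
\begin{align*}
\frac{\theta\big(2p-(1-\lambda_n+\gamma L)\theta\big)}{2p}\,\Eb\|\nabla\L_\gamma(\x_t;\D)\|^2 \le \Eb\L_\gamma(\x_t;\D)-\Eb\L_\gamma(\x_{t+1};\D)+\frac{(1-\lambda_n+\gamma L)\theta^2\gamma^2 C_2}{2p},
\end{align*}
and here the hypothesis $\theta<2p/(1-\lambda_n+\gamma L)$ is exactly what keeps the left-hand coefficient positive. Telescoping over $t=0,\dots,T-1$, using $\x_0=\0$ (so $\L_\gamma(\x_0;\D)=\gamma f(\0;\D)$) and Assumption~\ref{Assumption: function}(1) to drop $\L_\gamma(\x_T;\D)\ge\gamma f(\x^*_{\D};\D)$, bounds $\sum_t\Eb\|\nabla\L_\gamma(\x_t;\D)\|^2$ by a quantity of order $\tfrac{\gamma C_1}{\theta}+T\gamma^2 C_2$ (up to the $2p/(2p-(1-\lambda_n+\gamma L)\theta)$ factor); after rescaling by $\gamma^2$ this already yields term (I) and the first part of term (III).

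In parallel I would control the disagreement $\x_t^{\perp}$, the component of $\x_t$ off the consensus subspace. Its recursion $\x_{t+1}^{\perp}=\big(\I-\theta(\I-\tilde{\W})\big)\x_t^{\perp}-\theta\gamma(\nabla\f(\x_t;\zeta_t)+\veta_t)^{\perp}+\bepi_t^{\perp}$ contracts at rate $1-\Theta(\theta(1-\beta))$, because $\tilde{\W}$ has spectral radius $\beta$ on that subspace, while the per-step injected noise splits into (a) a ``drift'' term, bounded through Assumption~\ref{Assumption: function}(4) and the Gaussian variance, which accumulates into the $(\gamma/(1-\beta))^2 C_3$ contribution (II), and (b) a sparsifier-variance term proportional to $(\tfrac1p-1)\Eb\|\d_t\|^2$, which accumulates into a contribution proportional to $\tfrac1{1-\beta}(\tfrac1p-1)\sum_t\Eb\|\d_t\|^2$. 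To finish I would bridge from $\nabla\L_\gamma$ back to $\nabla f(\bx_t;\D)$: the consensus component of $\nabla\L_\gamma(\x_t;\D)$ is, up to normalization, $\gamma\nabla f(\bx_t;\D)$, and Assumption~\ref{Assumption: function}(2) turns the residual $\sum_i(\nabla f(\bx_t;\D_i)-\nabla f(x_{i,t};\D_i))$ into $L^2\sum_i\|x_{i,t}-\bx_t\|^2$, i.e.\ the disagreement just bounded. Averaging over $t$, then substituting $\Eb\|\d_t\|^2\le\theta^2(\|\nabla\L_\gamma(\x_t;\D)\|^2+\gamma^2 C_2)$ and finally the telescoped bound on $\sum_t\Eb\|\nabla\L_\gamma(\x_t;\D)\|^2$, and collecting everything reproduces (I)--(IV); in particular the bracketed factor in (IV) is exactly the resulting per-iteration bound on $\tfrac1T\sum_t\Eb\|\d_t\|^2$ with its pure-variance part (already accounted for in (III)) removed.

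The main obstacle is the statistical coupling introduced by $\bepi_t$: its conditional second moment is proportional to $\|\d_t\|^2$, which itself carries the Gaussian mask and the stochastic gradient, so $\bepi_t$ is not an exogenous noise and cannot be dispatched by a one-shot variance estimate. This forces the two-pass structure above --- bound $\sum_t\Eb\|\nabla\L_\gamma(\x_t;\D)\|^2$ first, then feed it back into the consensus-error recursion --- and careful bookkeeping so that every coefficient remains nonnegative under the single step-size condition $\theta<2p/(1-\lambda_n+\gamma L)$. Verifying the geometric contraction of the disagreement dynamics despite the extra $\tfrac1p$-amplification from the sparsifier, and tracking how $\beta$ versus $1-\beta$ enters each term, is the other place where the argument is genuinely delicate rather than routine.
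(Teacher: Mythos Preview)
Your descent-on-$\L_\gamma$ step and your disagreement analysis are both on target and match the paper's Steps~1 and~2. The gap is in the bridging. You propose to recover $\|\nabla f(\bx_t;\D)\|^2$ by projecting $\nabla\L_\gamma(\x_t;\D)$ onto the consensus subspace (where it equals $\gamma\cdot\tfrac1n\sum_i\nabla f(x_{i,t};\D_i)$, up to the $\1_n$ embedding) and then dividing the telescoped bound by $\gamma^2$. But the telescoped bound on $\tfrac1T\sum_t\Eb\|\nabla\L_\gamma(\x_t;\D)\|^2$ carries a variance contribution of order $\theta\gamma^2 C_2$ per iteration; after the $\gamma^{-2}$ rescaling this becomes $\Theta(\theta C_2/p)$ (or $\Theta(C_2/n)$ after the $1/n$ from the projection), a term that does \emph{not} vanish with $T$ or $\gamma$. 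So ``rescaling by $\gamma^2$'' does not reproduce term~(I) plus a piece of~(III); it produces~(I) plus an irreducible constant, and the final inequality would be false.

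The paper avoids this by running a \emph{second} descent-lemma argument, this time on $f(\bx_t;\D)$ along the averaged dynamics $\bx_{t+1}=\bx_t-\tfrac{\theta\gamma}{n}\sum_i(\nabla f(x_{i,t};\zeta_{i,t})+\eta_{i,t}+\epsilon_{i,t}/\theta\gamma)$. Because the step-size here is $\theta\gamma$, the second-order term produces $L(\theta\gamma)^2\cdot(\text{variance})$, and after dividing through by $\theta\gamma$ the variance contribution is $\tfrac{L\theta\gamma C_2}{n^2p}$, which is exactly the second part of~(III) and does vanish. In this organization the Step~1 bound on $\sum_t\Eb\|\nabla\L_\gamma(\x_t;\D)\|^2$ is \emph{not} used to control $\|\nabla f(\bx_t;\D)\|^2$ directly; it is fed back only into the sparsifier-variance contributions (the $(\tfrac1p-1)\Eb\|\d_t\|^2$ pieces arising both in the $f(\bx_t;\D)$ descent and in the disagreement bound), which is precisely where term~(IV) comes from. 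Your two-pass intuition is right, but the first pass should be the descent on $f(\bx_t;\D)$, with the $\L_\gamma$-descent bound serving only as an auxiliary estimate for the data-dependent sparsifier variance.
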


\begin{rem}
{\em
There are four terms in the convergence error of SDM-DSGD in Eq.~(\ref{Eq: convergence error}): 
(I) is the common convergence error that goes to zero as $T$ and step-size $\theta\gamma$ increase;
(II) is the approximation error between the Lyapunov function $\L_\gamma(\x;\D)$ and $\f(\x;\D)$, which decreases with $\gamma$.
These two terms are similar to those in the convergence of DGD-based algorithms~\cite{zeng2018nonconvex,zhang2019compressed}; 
(III) and (IV) are the error terms introduced by the compression, random sampling, as well as the Gaussian masking noises.
The following simplified convergence rate result follows immediately from Lemma~\ref{Theorem: convergence}.
}
\end{rem}

\begin{cor}\label{Cor: Convergence}
Fixing the variance of Gaussian masking noise $\sigma^2 $ to be a constant, setting $\theta = \min\{p/(1-\lambda_n+\gamma L),p/2\}$, $\gamma = c\sqrt{n\log(T)/T}$, and $p \gg 1/\log(T)$, where $c$ is a constant, if the number of iterations satisfies $T > n^5/(1-\beta)^4$, then it holds that:
\begin{align}\label{Eq: order of convergence}
&\min_{t\in\{0,\cdots,T-1\}}\|\nabla f(\bx_t;\D)\|_2^2
\stackrel{}{= }  
O\Big(\sqrt{\frac{\log(T)}{nT}}\Big).
\end{align} 
\end{cor}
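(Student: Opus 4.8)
The plan is to derive Corollary~\ref{Cor: Convergence} directly from Lemma~\ref{Theorem: convergence} by substituting the prescribed parameter choices into the four error terms (I)--(IV) and tracking the dominant order in $T$ (treating $n$, $\sigma^2$, $G$, $\tilde\sigma^2$, $\lambda_n$, $\beta$ as either constants or slowly-varying quantities controlled by the condition $T > n^5/(1-\beta)^4$). First I would record the consequences of the parameter settings: with $\gamma = c\sqrt{n\log(T)/T}$ we have $\gamma = \Theta(\sqrt{n\log(T)/T})$, which $\to 0$, so for large $T$ the quantity $1-\lambda_n+\gamma L$ is bounded and bounded away from zero, and hence $\theta = \min\{p/(1-\lambda_n+\gamma L), p/2\} = \Theta(p)$. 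I should first check that this $\theta$ satisfies the admissibility condition $\theta < 2p/(1-\lambda_n+\gamma L)$ from Lemma~\ref{Theorem: convergence} and lies in $(0,1)$ — this is immediate since $p/(1-\lambda_n+\gamma L) < 2p/(1-\lambda_n+\gamma L)$ and $p/2 < 1$. I would also note $\theta\gamma = \Theta(p\sqrt{n\log(T)/T})$ and that the denominator $2p - (1-\lambda_n+\gamma L)\theta \geq 2p - (1-\lambda_n+\gamma L)\cdot p/(1-\lambda_n+\gamma L) = p > 0$, so the denominators appearing in (IV) are bounded below by $\Theta(p)$.

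Next I would bound each term. Term (I) $= \frac{2C_1}{\theta\gamma T} = \Theta\!\left(\frac{1}{p\sqrt{n\log(T)/T}\cdot T}\right) = \Theta\!\left(\frac{1}{p}\sqrt{\frac{1}{n\log(T)T}}\right)$; since $p \gg 1/\log(T)$, i.e. $1/p \ll \log(T)$, this is $o\!\left(\sqrt{\log(T)/(nT)}\right)$ — actually it is even smaller, on the order $\sqrt{\frac{1}{n\log(T)T}}\cdot\frac{1}{p}$, dominated by $\sqrt{\log(T)/(nT)}$. Term (II) $= \frac{2LC_3}{n}\left(\frac{\gamma}{1-\beta}\right)^2 = \Theta\!\left(\frac{1}{n}\cdot\frac{n\log(T)/T}{(1-\beta)^2}\right) = \Theta\!\left(\frac{\log(T)}{(1-\beta)^2 T}\right)$, which is $O(\log(T)/T) = O\!\left(\sqrt{\log(T)/(nT)}\right)$ for $n = O(T/\log(T))$ (guaranteed by $T > n^5/(1-\beta)^4$). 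Term (III) has two pieces: $\frac{2\theta\gamma^2 L C_2}{n(1-\beta)}(1/p-1) = \Theta\!\left(\frac{p\cdot (n\log T/T)\cdot \gamma}{n(1-\beta)p}\right)$... I should be careful: $\theta\gamma^2 = \Theta(p)\cdot\Theta(n\log(T)/T)\cdot\Theta(\sqrt{n\log(T)/T})$, and multiplied by $(1/p-1) = \Theta(1/p)$ cancels the $p$, giving $\Theta\!\left(\frac{(n\log T/T)^{3/2}}{n(1-\beta)}\right)$, which is far smaller than $\sqrt{\log(T)/(nT)}$; the second piece $\frac{L\theta\gamma C_2}{n^2 p} = \Theta\!\left(\frac{p\sqrt{n\log T/T}}{n^2 p}\right) = \Theta\!\left(\frac{1}{n^2}\sqrt{\frac{\log T}{nT}}\right)$, also dominated. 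Term (IV): the prefactor $\left(\frac{2\gamma L}{n(1-\beta)} + \frac{L}{n^2}\right)(1/p-1) = \Theta\!\left(\frac{1}{p}\left(\frac{\gamma}{n(1-\beta)} + \frac{1}{n^2}\right)\right)$; the bracket equals $\frac{2pnC_1}{(p)\cdot T}\cdot\Theta(1) + \frac{(1-\lambda_n+\gamma L)\theta^2\gamma C_2}{p}\cdot\Theta(1) = \Theta(n/T) + \Theta(\theta^2\gamma/p) = \Theta(n/T) + \Theta(p\sqrt{n\log T/T})$. Multiplying: the dominant contribution is $\frac{1}{p}\cdot\frac{\gamma}{n(1-\beta)}\cdot p\sqrt{n\log T/T} = \frac{\gamma\sqrt{n\log T/T}}{n(1-\beta)} = \Theta\!\left(\frac{\log T/T}{(1-\beta)\sqrt n}\right)$, again $O\!\left(\sqrt{\log T/(nT)}\right)$ when $n/(1-\beta)^2 = O(T/\log T)$; the $\frac1p\cdot\frac{\gamma}{n(1-\beta)}\cdot\frac nT = \Theta\!\left(\frac{1}{p(1-\beta)}\sqrt{\frac{\log T}{nT}}\cdot\frac{1}{\sqrt T}\right)$ piece and the $\frac1{n^2}$ pieces are lower order. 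Summing the four bounds gives $O\!\left(\sqrt{\log(T)/(nT)}\right)$.

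The main obstacle is not any single inequality but the bookkeeping: making precise, under the single hypothesis $T > n^5/(1-\beta)^4$, exactly which of the $n$- and $(1-\beta)$-dependent factors get absorbed into the target rate. The cleanest way is to observe that $T > n^5/(1-\beta)^4$ implies both $n/(1-\beta)^2 \le T^{2/5} \cdot (\text{something}) \ll \sqrt{nT}/\log T$ type bounds (so that $\log(T)/((1-\beta)^2 T) \le \sqrt{\log(T)/(nT)}$ etc.) — I would state one or two such clean implications as sub-claims, verify them by elementary algebra, and then invoke them uniformly. A secondary subtlety worth a sentence is confirming that the condition $p \gg 1/\log(T)$ is used \emph{only} to kill the $1/p$ factor in term (I) (so that (I) does not blow up), and that everywhere else $1/p$ is either cancelled by an explicit $(1/p - 1) \le 1/p$ or multiplied by enough powers of $\gamma$; thus the stated rate holds for \emph{any} fixed $p \in (0,1)$ with $p\log(T) \to \infty$, matching the claim. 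With these reductions in place, the final display in Eq.~(\ref{Eq: order of convergence}) follows by taking the maximum of the four term-wise orders.
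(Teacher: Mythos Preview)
Your approach is essentially the same as the paper's: substitute $\theta=\Theta(p)$ and $\gamma=c\sqrt{n\log T/T}$ into the four terms of Lemma~\ref{Theorem: convergence}, use the simplifications $\theta/p\le 1/2$ and $2p-(1-\lambda_n+\gamma L)\theta\ge p$, and then absorb the residual $n$- and $(1-\beta)$-dependent factors via the hypothesis on $T$. Two minor bookkeeping slips to fix when you write it out: in term~(II) you treated $C_3$ as $O(1)$ but in fact $C_3=(nG)^2+(nd\sigma)^2=\Theta(n^2)$, so the correct order is $O\big(n^2\log T/((1-\beta)^2T)\big)$ (this is precisely where the exponent $5$ on $n$ in the hypothesis is needed), and in term~(III) your expansion of $\theta\gamma^2$ accidentally carries an extra factor of $\gamma$.
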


\begin{rem}
{\em
Several remarks for Corollary \ref{Cor: Convergence} are in order: 
1) The parameter $\theta$ is used to adjust the variance introduced by the sparsifier and could be set to a constant over the iterations, while the step-size $\gamma$ is required to be diminishing to control the sampling and Gaussian masking noises, as well as the approximation error;
2) The convergence rate is $O(\sqrt{\log(T)/nT}) =\tilde{O}(1/\sqrt{nT})$, which is approximately the same as the result in ~\cite{lian2017can} (ignoring the logarithm factor);
3) In the standard DSGD algorithm \cite{lian2017can}, to reach $\varepsilon$-accuracy, the communication complexity is $O(d/\varepsilon^2)$.
In contrast, the communication complexity of our algorithm is $O(1/\varepsilon^2\log(1/\varepsilon^3))$ by letting $p=1/d$.
Thus, our algorithm outperforms DSGD in overparameterized regime (i.e., large $d$);
4) The lower bound of $p$ is $1/\log(T).$ For example, with $10^4$ training iterations (a common setting for many deep learning training), the lower bound is approximately $0.1$, which is a small value;
5) The result in Corollary~\ref{Cor: Convergence} is based on two conditions: i) $\sigma^2$ is fixed over all iterations; and ii) the number of iterations $T$ is sufficiently large.
}
\end{rem}

Finally, by putting all aforementioned theoretical results together, we have the following key result for training-privacy trade-off:
\begin{thm}[Training-Privacy Trade-off] \label{Prop: privacy and convergence}
Under Assumption \ref{Assumption: function}, let $\sigma^2= 8 T G^2(2\log(1/\delta)+\epsilon)/m^4\epsilon^2$,
$\theta = \min\{p/(1-\lambda_n+\gamma L),p/2\},$ and $\gamma = c\sqrt{n\log(T)/T},$ where $c$ is a constant. 
If $T = m^4 \epsilon^2/ 20G^2\log(1/\delta) = O(m^{4})$, then the SDM-DSGD algorithm is $(\epsilon, \delta)$-DP and the convergence rate is 
\begin{align}\label{Eq: privacy convergence}
&\min_{t\in\{0,\cdots,T-1\}}\|\nabla f(\bx_t;\D)\|_2^2
\stackrel{}{= }  
\tilde{O}\Big(\frac{\sqrt{20G^2\log(1/\delta)}}{ \sqrt{n}m^2 \epsilon }\Big).
\end{align} 
\end{thm}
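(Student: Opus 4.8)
The plan is to obtain Theorem~\ref{Prop: privacy and convergence} by specializing the two previously established corollaries---Corollary~\ref{Cor: Privacy Guarantee} for privacy and Corollary~\ref{Cor: Convergence} for convergence---to the single iteration schedule $T = m^4\epsilon^2/(20 G^2\log(1/\delta)p)$, and then checking that this common choice of $T$ satisfies the admissibility hypotheses of \emph{both} results. First I would note that, holding $n,\epsilon,\delta,G,p,\lambda_n,\beta$ fixed, $T = \Theta(m^4)$, which is the $O(m^4)$ claim on the largest admissible number of iterations; moreover $\sigma^2 = 8pTG^2(2\log(1/\delta)+\epsilon)/(m^4\epsilon^2) = \Theta\big(\epsilon^2/\log(1/\delta)\big)$ is a constant in $m$ and $T$, so it is indeed ``fixed over all iterations'' as required by Corollary~\ref{Cor: Convergence}, and the constants $C_1,C_2,C_3$ of Lemma~\ref{Theorem: convergence} remain $O(1)$ in $m$.

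For the privacy half, the prescribed $\sigma^2$ is exactly the variance appearing in Corollary~\ref{Cor: Privacy Guarantee} with subsampling rate $1/m$, so I only need to verify that corollary's two side conditions at the present $T$: namely $\sigma^2 \ge 1/1.25$ and $\epsilon \le 10 p T G^2/m^4$. Substituting $T$ into the latter gives $10 p T G^2/m^4 = \epsilon^2/(2\log(1/\delta))$, so the condition amounts to $\epsilon \ge 2\log(1/\delta)$, which I would flag as the operating regime of the theorem. The first condition then follows for free: since $2\log(1/\delta)+\epsilon > \epsilon$, one gets $\sigma^2 > 8pTG^2/(m^4\epsilon) \ge 8(\epsilon/10)/\epsilon = 1/1.25$ using $pTG^2/m^4 \ge \epsilon/10$. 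Hence Corollary~\ref{Cor: Privacy Guarantee} applies and the algorithm is $(\epsilon,\delta)$-DP in expectation for every $\delta\in(0,1)$.

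For the convergence half, the theorem adopts exactly the step-size schedule $\theta = \min\{p/(1-\lambda_n+\gamma L),p/2\}$ and $\gamma = c\sqrt{n\log(T)/T}$ of Corollary~\ref{Cor: Convergence}, whose remaining hypotheses are $p \gg 1/\log(T)$ (inherited) and $T > n^5/(1-\beta)^4$; the latter holds automatically since $T=\Theta(m^4)$ once $m$ is large relative to $n/(1-\beta)$. Corollary~\ref{Cor: Convergence} then gives $\min_{t}\|\nabla f(\bx_t;\D)\|_2^2 = O\big(\sqrt{\log(T)/(nT)}\big)$, and plugging in $1/T = 20 G^2\log(1/\delta)p/(m^4\epsilon^2)$ yields
\[
\sqrt{\frac{\log(T)}{nT}} \;=\; \sqrt{\log(T)}\cdot\frac{\sqrt{20 G^2\log(1/\delta)p}}{\sqrt{n}\,m^2\,\epsilon}.
\]
Since $T=\Theta(m^4)$, the factor $\sqrt{\log(T)}$ is poly-logarithmic in the problem size and is absorbed into the $\tilde O(\cdot)$ notation, producing exactly the rate in Eq.~(\ref{Eq: privacy convergence}).

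The routine substitutions aside, the part that needs genuine care---and the main obstacle---is reconciling the admissibility windows of the privacy and convergence corollaries under a \emph{single} schedule: one must pin down the regime $\epsilon \ge 2\log(1/\delta)$ forced by $\epsilon \le 10 p T G^2/m^4$ at the chosen $T$, simultaneously keep $T > n^5/(1-\beta)^4$ and $p \gg 1/\log(T)$ valid, verify that the now $T$-dependent $\sigma^2$ is still a genuine constant so the convergence bound is not weakened, and be careful about exactly which logarithmic factors are legitimately swept into $\tilde O$. Everything else is bookkeeping.
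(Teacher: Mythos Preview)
Your proposal is correct and follows exactly the route the paper intends: the paper does not give a separate proof of this theorem, but the Remark immediately after it makes clear that the result is obtained by combining Corollary~\ref{Cor: Privacy Guarantee} (privacy) and Corollary~\ref{Cor: Convergence} (convergence) at the common schedule $T=m^4\epsilon^2/(20G^2\log(1/\delta)p)$ and checking $\sigma^2>1/1.25$, which is precisely what you do. One small simplification: rather than routing through the side condition $\epsilon\le 10pTG^2/m^4$ (which forces the regime $\epsilon\ge 2\log(1/\delta)$ you flag), you can substitute $T$ directly into $\sigma^2$ to get $\sigma^2=4/5+2\epsilon/(5\log(1/\delta))\ge 1/1.25$ for all $\epsilon>0,\ \delta<1$; this is the verification the paper's Remark has in mind, and it shows your flagged regime is an artifact of the sufficient-condition form of Corollary~\ref{Cor: Privacy Guarantee} rather than a genuine restriction (it also corrects your passing claim that $\sigma^2=\Theta(\epsilon^2/\log(1/\delta))$).
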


\begin{rem}
{\em 
Note that by letting $T = m^4 \epsilon^2/ 20G^2\log(1/\delta) = O(m^{4})$, we have that $\sigma^2>1/1.25$ over all iterations. 
Note also that the local sample size $m$ is usually much larger than the number of nodes, i.e., $m \gg n$, which implies that $ m^4 \epsilon^2/ 20G^2\log(1/\delta)>n^5/(1-\beta)^4.$ 
Hence, the convergence speed improvement still holds with $T = m^4 \epsilon^2/ 20G^2\log(1/\delta)$.
}
\end{rem}

\begin{figure}
\centering
\includegraphics[width=1\columnwidth]{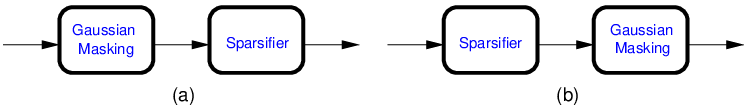}
\vspace{-.2in}
\caption{(a) SDM-DSGD scheme; (b) The alternative design.}
\label{fig:alt_design}
\vspace{-.2in}
\end{figure}

\smallskip
{\bf 3) Insights and Guidelines for Privacy and Communication Efficiency Co-Design:}
Note that in our algorithm, the Gaussian masking is applied before the sparse compression (see Figure~\ref{fig:alt_design}~(a). 
Thus, a fundamental and interesting question arises: {\em Is this a good design?}
To answer this question, consider the alternative design that {\em reverses} the Gaussian masking and sparsifier operations: we first perform sparsify operation on the local differential, and then apply Gaussian masking on those non-zero coordinates of the compressed local differential (see Figure~\ref{fig:alt_design}~(b)).
This alternative design can be mathematically written as:
\begin{align}\label{Eq: alternative design}
\left\{
\begin{aligned}
 \x_{t} &= \x_{t-1} + \underbrace{\big(S(\d_{t-1})+ \theta\gamma\tilde{\veta}_t \big)}_{\text{released message}}, \\
 \d_{t} &= (1-\theta)\x_{t} +\theta \Big(\tilde{\W} \x_{t} - \gamma \big(\g(x_{t};\zeta_{t}) \big)\Big) - \x_t,\\
\end{aligned}
\right.
\end{align}
where $[\tilde{\veta}_t]_{\Ac_{1,t}} \!\!\sim\! N(0,\sigma^2 \I)$ and $[\tilde{\veta}_t]_{\Ac_{0,t}} = \0;$ and the factor $\theta \gamma$ before $\tilde{\veta}_t$ is to make the result comparable.
For this alternative design, we have the following result:
\begin{prop}\label{Prop: alternative design}
Let the variance of Gaussian masking noise $\veta$ be chosen as $\sigma^2\ge 1/1.25$. 
Under Assumption~\ref{Assumption: function}, for any $\delta \in (0,1)$, the alternative design in (\ref{Eq: alternative design}) is $\big(4\alpha \sum_{t=1}^{T}|\overline{\Ac_{1,t}}|(\tau G )^2 /dm^2\sigma^2p^2 +\epsilon/2,\delta\big)$-DP,
where $\alpha \triangleq 2\log(1/\delta)/\epsilon-1$, $p$ is the transmit probability of the sparsifier, and $\tau$ is the subsampling rate.
\end{prop}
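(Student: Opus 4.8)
The plan is to follow the same template as the proof of Theorem~\ref{Theorem: Privacy Guarantee}, adapted to the reversed order of operations in~(\ref{Eq: alternative design}), and to track how swapping the sparsifier and the Gaussian mechanism changes the per-iteration privacy cost. In iteration $t$, the only data-dependent object that is ever released is the message $S(\d_{t-1})+\theta\gamma\tilde{\veta}_t$; since $\d_{t-1}$ consists of the data-independent consensus term $\theta(\tilde{\W}-\I)\x_t$ (recall $\x_t$ is a deterministic function of previously released messages) plus the sensitive term $-\theta\gamma\g(\x_t;\zeta_t)$, a stochastic gradient built from a $\tau$-subsample of the local data, the analysis reduces to a sequence of gradient-release steps. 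Conditioning on the binomial pattern $\z_t$ (equivalently on the active set $\Ac_{1,t}$), which by construction is independent of the data, the released coordinates are $[\d_{t-1}]_{\Ac_{1,t}}/p+\theta\gamma[\tilde{\veta}_t]_{\Ac_{1,t}}$ with $[\tilde{\veta}_t]_{\Ac_{1,t}}\sim N(\0,\sigma^2\I)$, while the inactive coordinates are identically $\0$ on both adjacent datasets and reveal nothing. Hence, conditionally on $\z_t$, each iteration is a subsampled Gaussian mechanism (subsampled in the data at rate $\tau$ and, as far as the revealed information is concerned, in the coordinates at rate $p$), and the privacy of the whole $T$-iteration run follows by adaptive composition over $t$ (treating $\x_t$ as fixed by the conditioning on the released history, as in~\cite{abadi2016deep}) and the usual moments-accountant-to-$(\epsilon,\delta)$ conversion, exactly as in the proof of Theorem~\ref{Theorem: Privacy Guarantee}; taking the outer expectation over the $\z_t$'s yields the ``in expectation'' statement.

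The crux is the per-iteration sensitivity-to-noise computation, which is where the alternative design pays a factor of $p^2$ relative to SDM-DSGD. By part~(4) of Assumption~\ref{Assumption: function} the subsampled gradient has $\ell_2$-sensitivity at most $\tau G/m$, so after the sparsifier's $1/p$ amplification the signal part of the released active coordinates has $\ell_2$-sensitivity at most $\theta\gamma\tau G/(mp)$ uniformly over $\Ac_{1,t}$. In SDM-DSGD the Gaussian noise is injected \emph{before} the sparsifier, so it is rescaled by the same $1/p$ and the effective noise multiplier is $\Theta(m\sigma/(\tau G))$; in the alternative design the fixed-variance noise $N(\0,\sigma^2\I)$ is added \emph{after} the $1/p$ rescaling, so the noise level stays at $\theta\gamma\sigma$ and the effective noise multiplier drops to $\Theta(mp\sigma/(\tau G))$ --- a factor $p$ smaller. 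Feeding this multiplier, together with the coordinate-subsampling rate $p$ induced by $\z_t$, into the subsampled-Gaussian bound of~\cite{wang2018subsampled} (applicable because $\sigma^2\ge 1/1.25$) gives a per-iteration privacy cost of order $\alpha(\tau G)^2/(m^2\sigma^2 p)$, versus the order $\alpha p(\tau G)^2/(m^2\sigma^2)$ obtained for SDM-DSGD in Theorem~\ref{Theorem: Privacy Guarantee}. Summing over the $T$ iterations and converting to $(\epsilon,\delta)$-DP then produces the claimed $\big(4\alpha T(\tau G)^2/(m^2\sigma^2 p)+\epsilon/2,\delta\big)$ bound with $\alpha = 2\log(1/\delta)/\epsilon-1$, exhibiting the $p^2$ gap with SDM-DSGD.

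The step I expect to be the main obstacle is making the conditioning on $\z_t$ fully rigorous: one must verify that $\Ac_{1,t}$ carries no information about the dataset (so that the outer expectation over $\z_t$ is the right object and the conditional mechanism is genuinely a subsampled Gaussian), that the $\ell_2$-sensitivity bound on $[\d_{t-1}]_{\Ac_{1,t}}/p$ holds uniformly over the random active set rather than merely in expectation, and that the dependence of $\x_t$ on the released history is already accounted for by the composition theorem and introduces no additional leakage. A secondary point is checking that the subsampled-Gaussian bound of~\cite{wang2018subsampled} still applies here even though the injected noise variance is not rescaled by $1/p$; this only requires reading the hypothesis $\sigma^2\ge 1/1.25$ as a condition on the variance actually seen by the mechanism, which it is, so no extra restriction on $p$ is needed.
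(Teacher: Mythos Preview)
Your template (condition on the active set $\Ac_{1,t}$, analyze the released active coordinates as a subsampled Gaussian mechanism, compose over $T$, take the outer expectation over the sparsifier randomness) is exactly the paper's, and your diagnosis of the $1/p$ blow-up in the sensitivity--to--noise ratio is the heart of the argument. There is, however, a genuine gap in how you recover the final factor of $p$.

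You bound the $\ell_2$-sensitivity of the released active coordinates by the worst case $\theta\gamma G/(mp)$ \emph{uniformly} over $\Ac_{1,t}$, and then propose to feed a ``coordinate-subsampling rate $p$'' into the subsampled-Gaussian bound of~\cite{wang2018subsampled} to shave a factor of $p$ off the RDP cost. That lemma is a \emph{data}-subsampling amplification (with probability $1-\tau$ the differing record is absent from the minibatch, so the mechanism is identical on $\D,\D'$); there is no analogous amplification for coordinate subsampling, because the adversary sees which coordinates are active and the whole difference between $\D$ and $\D'$ is still present in those coordinates. With your uniform sensitivity bound and only the legitimate data-subsampling amplification, the per-iteration RDP cost is $\alpha(\tau G)^2/(m^2\sigma^2 p^2)$, and the composed bound would carry $1/p^2$ rather than the claimed $1/p$.

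The paper recovers the missing $p$ differently: it uses part~(4) of Assumption~\ref{Assumption: function} (coordinate-wise $G/\sqrt{d}$-Lipschitz) to get the \emph{active-set-dependent} sensitivity $\Delta(\q_t)\le 2\theta\gamma\sqrt{|\Ac_{1,i,t}|}\,G/(\sqrt{d}\,mp)$, so the conditional RDP cost is $\propto |\Ac_{1,i,t}|/(d\,m^2\sigma^2 p^2)$. Only then does the outer expectation over $|\Ac_{1,i,t}|\sim\mathrm{Bin}(d,p)$ yield $\mathbb{E}[|\Ac_{1,i,t}|]=dp$ and hence the $1/p$ in the final ``in expectation'' bound. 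In other words, the place where $p$ re-enters is the expectation of a sensitivity that \emph{varies} with the active set, not a second invocation of subsampling amplification. Your own closing worry (``the sensitivity bound holds uniformly rather than merely in expectation'') is therefore pointing in exactly the wrong direction: you need the non-uniform, $|\Ac_{1,i,t}|$-dependent bound, not the uniform one. A minor side issue: the $\tau$ should not appear in the raw sensitivity; it enters only through the data-subsampling amplification step (Lemma~\ref{Lemma: RDP Gaussian}\,ii).
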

We can see that, in the ``$\epsilon$-part,'' the DP performance of our SDM-DSGD is smaller than that of the alternative design by a $(1/p^2)$-factor, and so our SDM-DSGD design is superior.
This difference is because in the alternative design, the sparse compression amplifies the $\ell_2$-sensitivity of the query by $1/p^2$.

\section{Experimental Evaluation}\label{Section: experiment}

In this section, we present experimental results of several nonconvex machine learning problems to evaluate the performance of our method.
In particular, we compare the communication cost and privacy-accuracy trade-off with two state-of-art algorithms:
\begin{list}{\labelitemi}{\leftmargin=1em \itemindent=0.em \itemsep=.2em}
	\item Decentralized SGD (DSGD)\cite{nedic2009distributed,yuan2016convergence,jiang2017collaborative}: Each node updates its local parameter as $x_{i,t+1} \!=\! \sum_{j \in \mathcal{N}_{i}} [\W]_{ij}x_{j,t}\! -\! \gamma g (x_{i,t}; \zeta_{i,t})$ with stochastic gradient $g (x_{i,t}; \zeta_{i,t})$ of random sample $\zeta_{i,t}$, and exchange the uncompressed local parameter $x_i$ with its neighbors. 
	\item Differential Compressed Decentralized SGD (DC-DSGD) \cite{tang2018communication}: This algorithm also communicates compressed local differentials and estimating neighbors' copies. However, in the local copy updating step, DC-DSDG does not have tuning parameter $\theta$ (or can be viewed as fixing $\theta = 1$ in our SDM-DSGD).
\end{list}
In our experiment, we choose the transmit probability $p$ from  $\{1,0.5,0.2\}$. 
We set $\theta = 1$ for $p=1$ and $0.5,$ which is corresponding to DSGD and DC-DSGD, respectively. 
However, for $p=0.2,$ the algorithm does not converge if we choose $\theta=1$, i.e., DC-DSGD fails under $p=0.2$ (See Figure \ref{Fig: counterexample}). 
Thus, we set $\theta = 0.6$ for the case when $p=0.2$, which is corresponding to our algorithm.

\begin{figure}
\centering
\begin{tabular}{@{}cc@{}}
\includegraphics[width=0.49\linewidth]{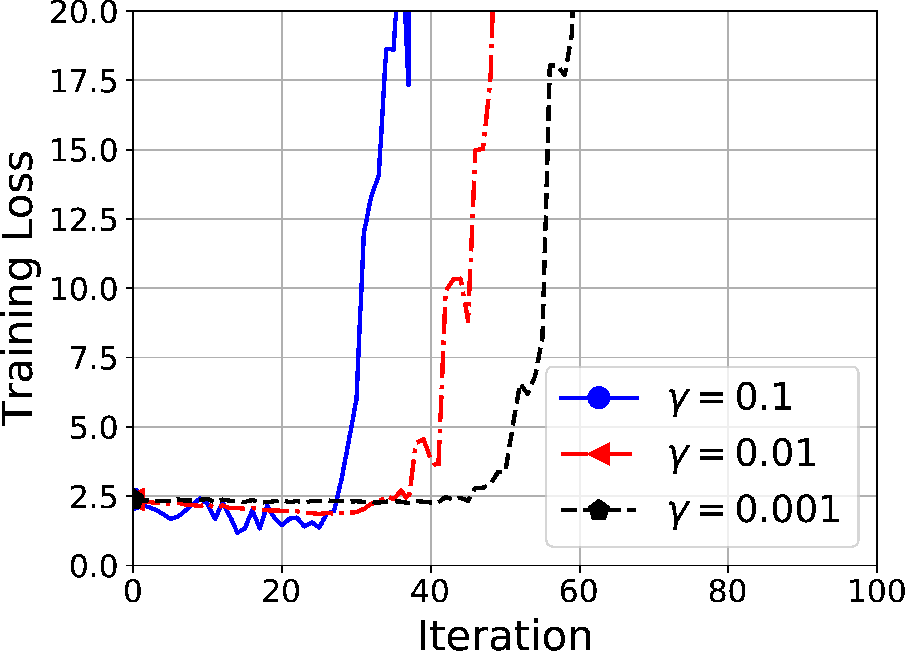}&
\includegraphics[width=0.48\linewidth]{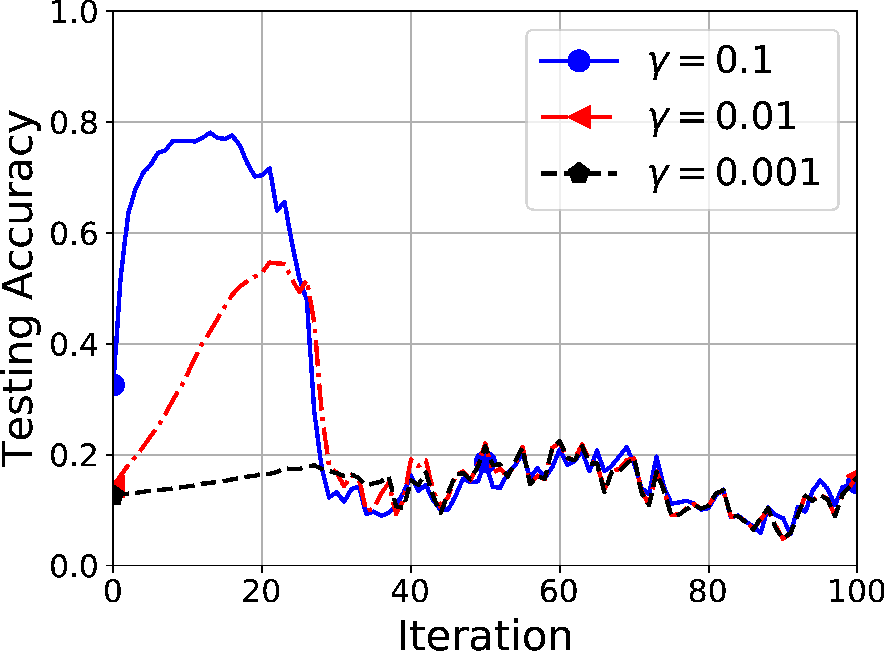} \\
\multicolumn{2}{c}{(a) MLR on MNIST.} \medskip
\\
\includegraphics[width=0.49\linewidth]{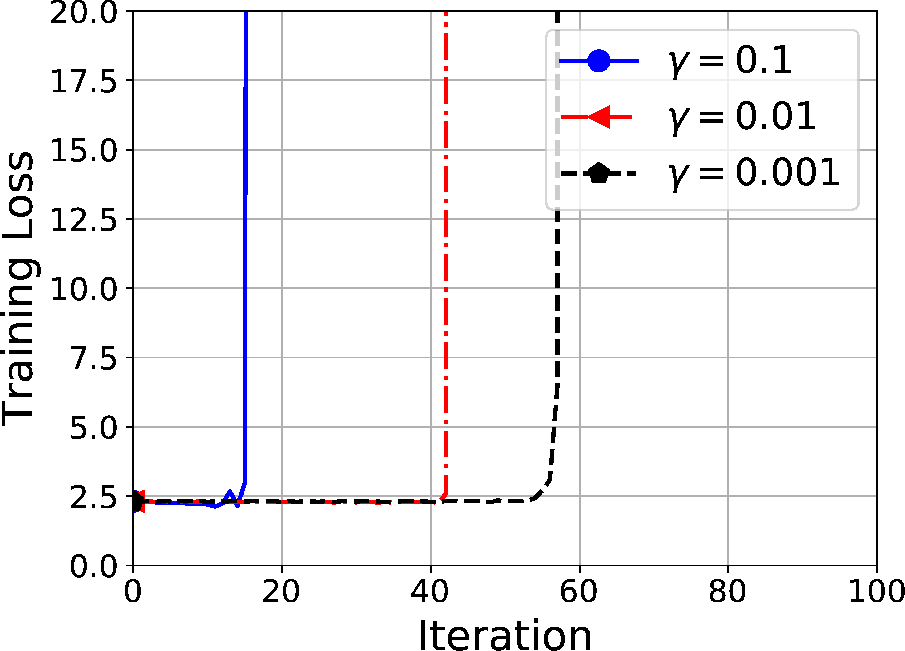} &
\includegraphics[width=0.48\linewidth]{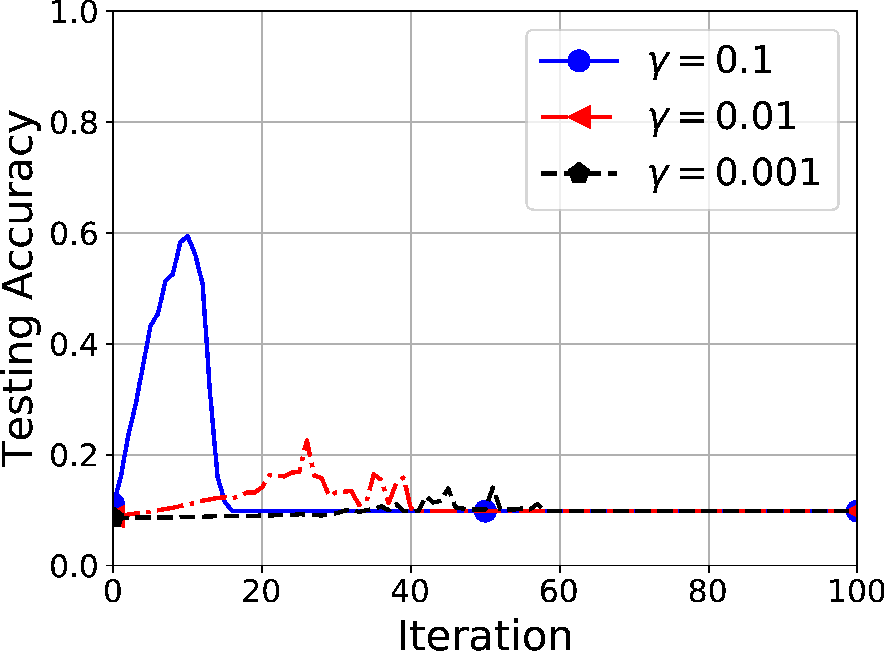}\\
\multicolumn{2}{c}{(b) CNN on MNIST.}
\end{tabular}
\vspace{-.1in}
\caption{Two examples where DC-DSGD diverges: $p\!=\!0.2$ and the step-size $\gamma$ is chosen from $\{0.1,0.01,0.001\}$.}\label{Fig: counterexample}
\vspace{-.2in}
\end{figure}

\smallskip
\textbf{Dataset and Learning Models:} 
We adopt all three algorithms to solve a variety of nonconvex learning problems over MNIST and CIFAR-10 datasets.
For MNIST, we apply the multi-class logistic regression (MLR) and convolutional neural network (CNN) classifiers.
The adopted CNN model has two convolutional layers (size $3 \times 3 \times 16$), each is followed by a max-pooling layer with size $2\times 2$, and then a fully connected layer. 
The ReLU activation is used for the two convolutional layers and the ``softmax'' activation is applied at the output layer. 
For CIFAR-10, we apply the above CNN model and the ResNet20 model.
The batch size is 64 for both MLR and the CNN classifiers on MNIST.
The batch size is 128 and 32 for the CNN and ResNet20 classifiers on CIFAR-10, respectively.

\smallskip
\textbf{Network Model:} 
We use a network with 50 nodes.
Similar to \cite{reisizadeh2019exact,reisizadeh2019robust}, the communication graph $\mathcal{G}$ is generated by the Erd$\ddot{\text{o}}$s-R$\grave{\text{e}}$nyi graph with edge connectivity $p_c=0.35.$ The network concensus matrix is chosen as $\W = \I - \frac{2}{3\lambda_{\text{max}}(\mathbf{L})} \mathbf{L},$ where $\mathbf{L}$ is the Laplacian matrix of $\mathcal{G}$, and $\lambda_{\text{max}}(\mathbf{L})$ denotes the largest eigenvalue of $\mathbf{L}$. 

\smallskip
\textbf{Procedure for Privacy:} 
Note that privacy protection is not considered in the original DSGD and DC-DSGD methods.
To have a fair comparison, we add the same Gaussian noise $N(\0,\I)$ to the stochastic gradients, so that all algorithms are privacy-preserving.
To control the object function's $\ell_{2}$-sensitivity to $\x$, we adopt a modified gradient clipping technique \cite{abadi2016deep}: $\text{Clip}([g]_i) = \text{sign}([g]_i) \max\{|[g]_i|,C\},$ $\forall g\in \mathbb{R}^d.$ 
With this clipping, each coordinate of the gradient is bounded by $C$ in magnitude. Here, we set $C = 5$. 
In our experiment, we keep track of the privacy loss based on Theorem~\ref{Theorem: Privacy Guarantee}.

\begin{figure}
\centering
\begin{tabular}{@{}cc@{}}
\includegraphics[width=0.49\linewidth]{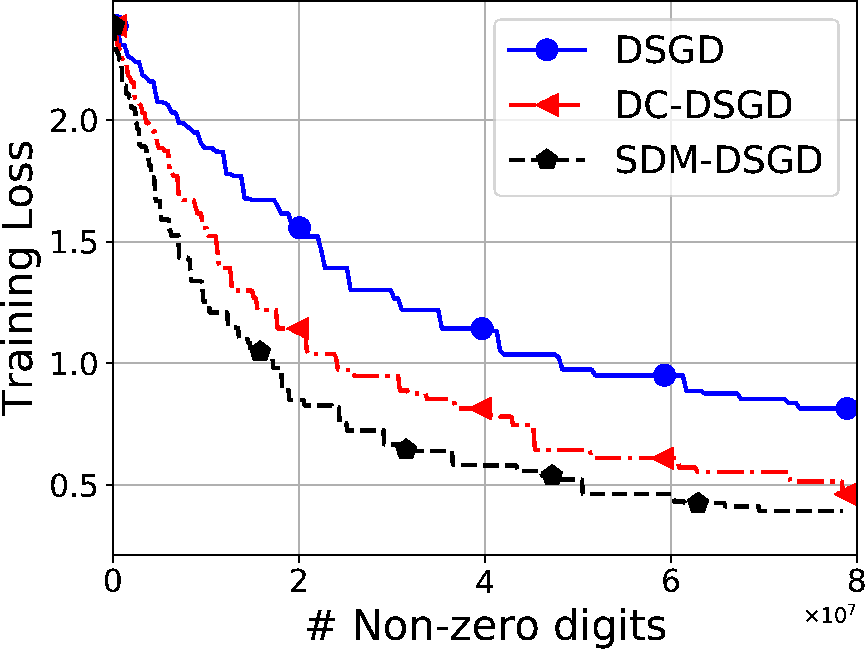}&
\includegraphics[width=0.49\linewidth]{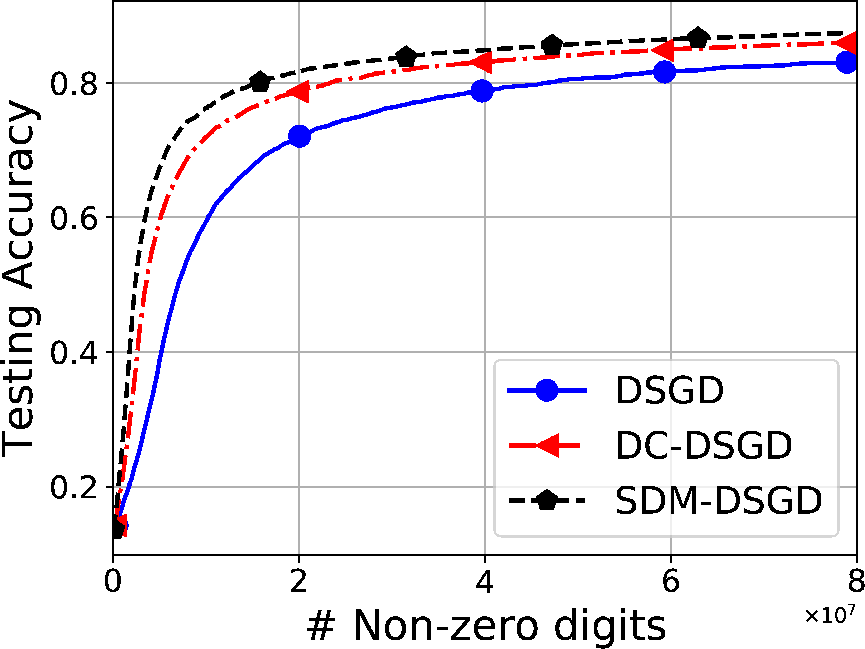} \\
\multicolumn{2}{c}{(a) MLR on MNIST.} \\
\includegraphics[width=0.49\linewidth]{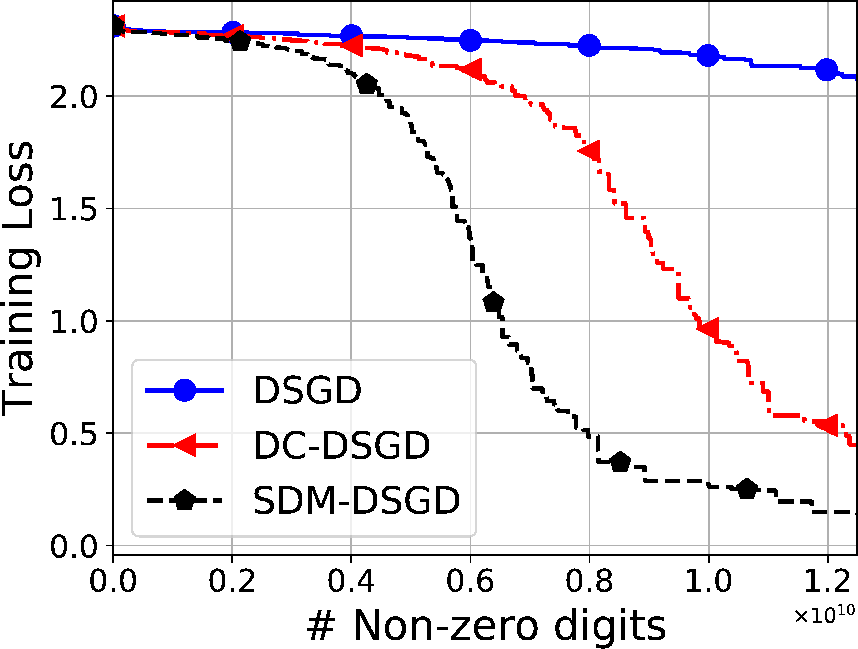} &
\includegraphics[width=0.49\linewidth]{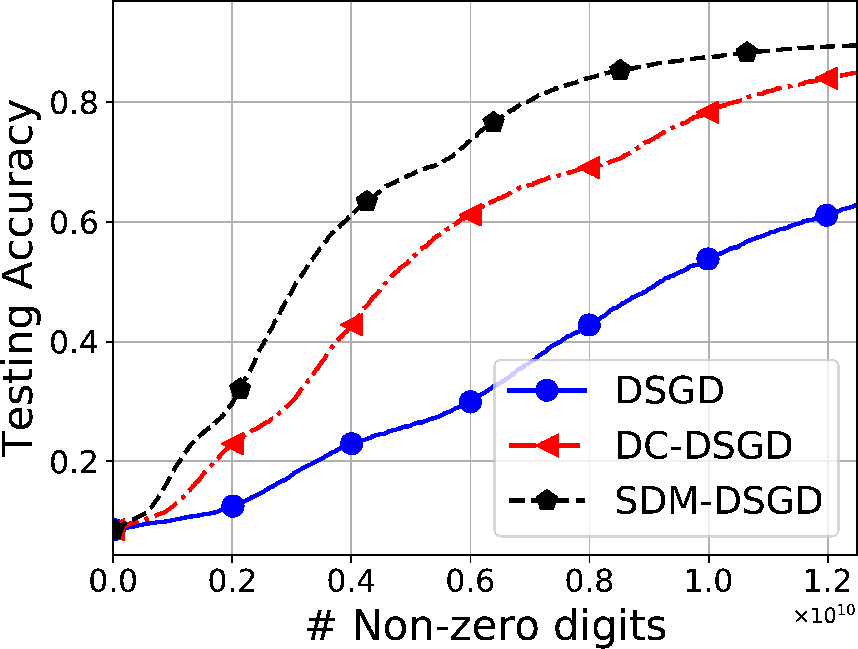}\\
\multicolumn{2}{c}{(b) CNN on MNIST.}\\
\includegraphics[width=0.49\linewidth]{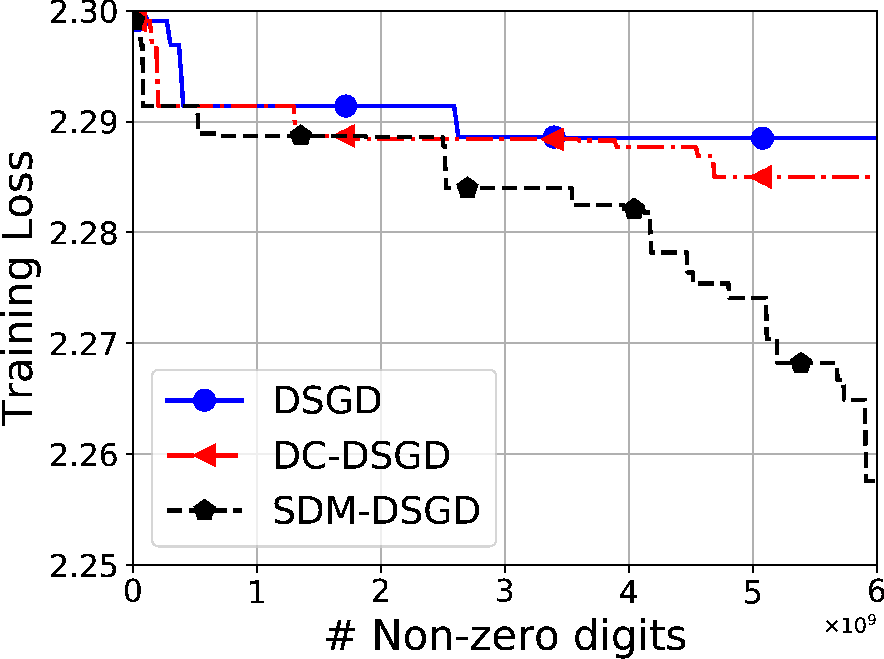}&
\includegraphics[width=0.49\linewidth]{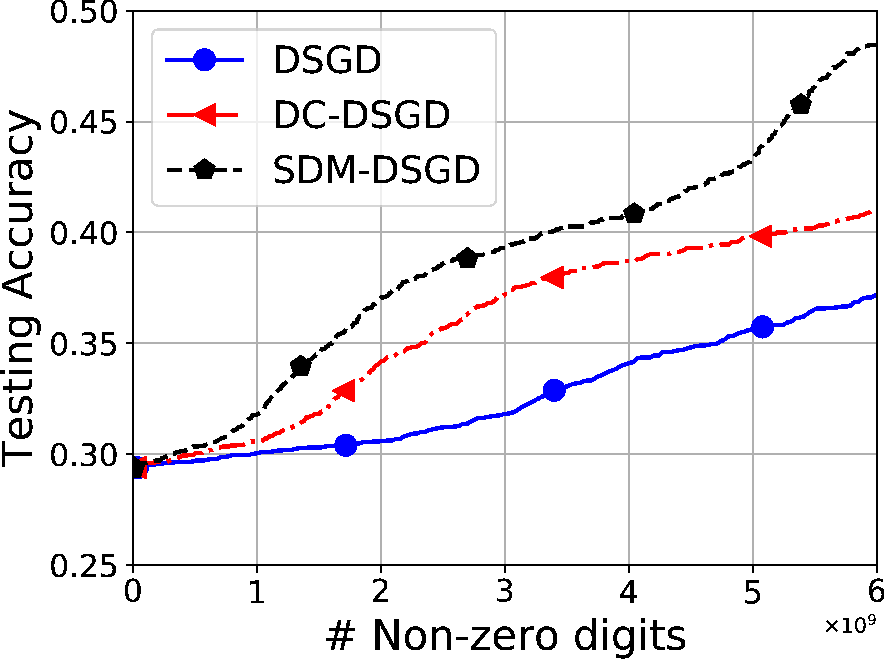} \\
\multicolumn{2}{c}{(c) CNN on CIFAR-10.} \\
\includegraphics[width=0.49\linewidth]{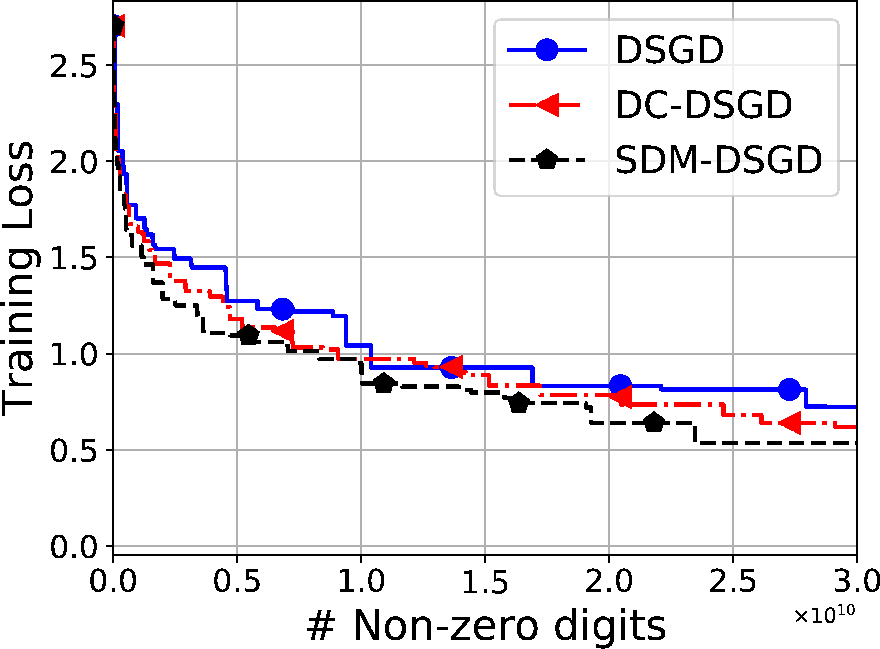} &
\includegraphics[width=0.49\linewidth]{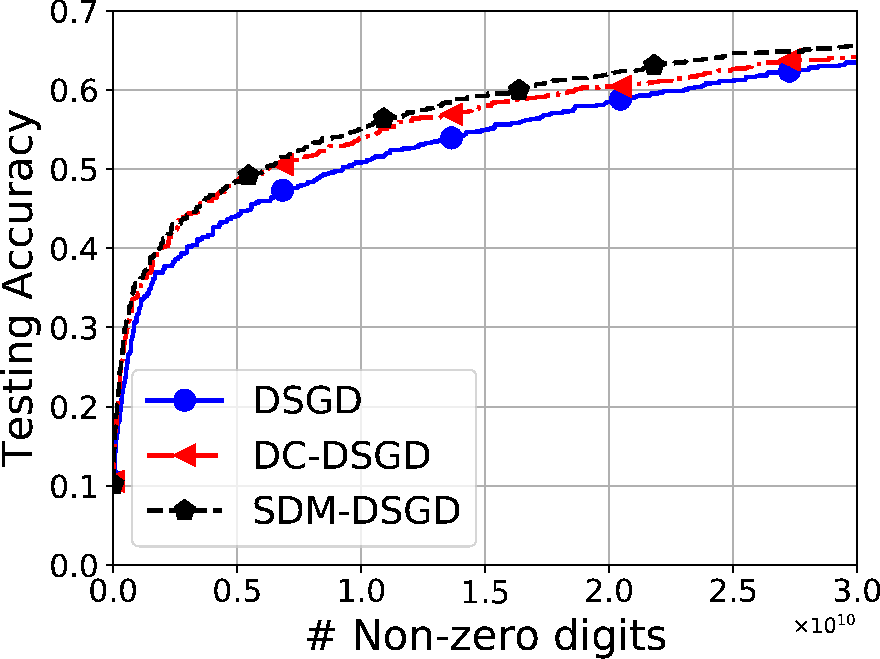}\\
\multicolumn{2}{c}{(d) ResNet20 on CIFAR-10.}\\
\end{tabular}
\vspace{-.1in}
\caption{Results of objective loss (left) and testing accuracy (right) with models trained by different algorithms.}\label{Fig: exp_results}
\vspace{-.2in}
\end{figure}

\begin{table}
\caption{The results of testing accuracy with models trained by different algorithms with $(\epsilon,\delta=10^{-5})$-DP guarantee.}\label{Table: Testing Accuracy}
\vspace{-.16in}
\begin{center}
\begin{tabular}{p{2.0cm}<{\centering} p{1.5cm}<{\centering} p{1.5cm}<{\centering} p{1.5cm}<{\centering}  }
\hline
\hline
\multicolumn{4}{c}{MLR on MNIST}\\
\hline
$\epsilon (\times 10^{-3})$ & $1.0$ & $2.0$ & $5.0$  \\
\hline
DSGD & 0.1422 & 0.1956 & 0.6324 \\
DC-DSGD  & 0.1621 & 0.2959 & 0.7408 \\
{\bf SDM-DSGD}  & {\bf 0.1880} & {\bf 0.4296} & {\bf 0.7810} \\
\hline
\hline
\multicolumn{4}{c}{CNN on MNIST}\\
\hline
$\epsilon (\times 10^{-2})$ & $2.0$ & $5.0$ & $10.0$  \\
\hline
DSGD & 0.0886 & 0.1059 & 0.2570  \\
DC-DSGD & 0.0917 & 0.1442 & 0.5265 \\
{\bf SDM-DSGD}  & {\bf 0.0960} & {\bf 0.2150} & {\bf 0.6728} \\
\hline
\hline
\multicolumn{4}{c}{CNN on CIFAR-10}\\
\hline
$\epsilon (\times 10^{-2})$ & $5.0$ & $10.0$ & $20.0$  \\
\hline
DSGD & 0.2960 & 0.3036 & 0.3544  \\
DC-DSGD & 0.2991 & 0.3292 & 0.3964 \\
{\bf SDM-DSGD}  & {\bf 0.3013} & {\bf 0.3570} & {\bf 0.4296} \\
\hline
\hline
\multicolumn{4}{c}{ResNet20 on CIFAR-10}\\
\hline
$\epsilon (\times 10^{-2})$ & $2.0$ & $5.0$ & $10.0$  \\
\hline
DSGD & 0.3265 & 0.4631 & 0.5735  \\
DC-DSGD & 0.3324 & 0.4922 & 0.5957 \\
{\bf SDM-DSGD}  & {\bf 0.3470} & {\bf 0.5079} & {\bf 0.6099} \\
\hline
\hline
\end{tabular}
\end{center}
\vspace{-.2in}
\end{table}

\smallskip
\textbf{Numerical Results:}
We illustrate the results of training loss and testing accuracy with respect to communication costs in Figure~\ref{Fig: exp_results}.
We compute the total non-zero digits (i.e. the non-sparsified digits) communicated in the each iteration, which is used to measure the communication cost in the training.
In the left-hand-side figures in Figure~\ref{Fig: exp_results}, we show the training loss vs. the amount of the non-zero digits.
In the right-hand-side figures in Figure~\ref{Fig: exp_results}, we show the testing accuracy vs the amount of the non-zero digits.
We can see that under the same amount of non-zero digits, our SDM-DSGD has the fastest convergence speed and the best testing accuracy: 
in the case of training CNN on MNIST, with $6\times 10^9$ non-zero digits,  SDM-DSGD's testing accuracy is at $80\%$, while those of DC-DSGD and DSGD are $60\%$ and less than $40\%$, respectively.

For privacy loss, we summarize the results in Table~\ref{Table: Testing Accuracy}.
We can see that under the same $\delta$, the testing accuracy is increasing as the privacy budget gets large (i.e., large $\epsilon$): 
in the case of MLR on MNIST, with $\epsilon$ increasing from $1\times 10^3$ to $5 \times 10^3$, 
testing accuracy is increasing from $19\%$ to $78\%.$ 
This is because with larger privacy budget, more information is allowed to be released, which leads to a better training.
Meanwhile, under the same privacy budget $\epsilon,$ our SDM-DSGD consistently has a higher accuracy compared with the other two algorithms:
our algorithm improves the accuracy at least $4\%$ with MLR and $15\%$ with CNN on MNIST. 

\section{Conclusion}\label{Section: conclusion}

In this paper, we proposed the SDM-DSGD algorithm to improve both data privacy and communication efficiency in distributed edge learning.
In our SDM-DSGD algorithm, we proposed to exchange the sparsified differentials between the computation nodes and develop a ``generalized'' computing scheme for local updates.
We theoretically and numerically showed that by doing so, the proposed algorithm converges with a small sparsification probability $p$ compared with the state-of-the-art DC-DSGD method.
Moreover, we considered the protection data privacy by injecting a Gaussian masking noise.
We studied the interaction between the sparsification and the Gaussian masking mechanism and showed that the ``randomize-then-sparisify'' is the preferred approach.
Finally, we established the privacy-accuracy tradeoff theoretically.
Through extensive experiments, we showed that SDM-DSGD outperforms existing algorithms.
Our results advance the state-of-the-art of communication efficiency and data privacy in distributed edge learning.

\bibliographystyle{plain}
\bibliography{reference}

\appendix
\section{Proof of main results}

\subsection{Proof of Theorem \ref{Theorem: Privacy Guarantee}}
\begin{proof}
To prove the privacy guarantee, we first states the following definitions and related lemmas.

\begin{defn}[$\ell_2$-Sensitivity]\label{Def: l2 sensitivity}
The $\ell_2$-Sensitivity is defined as the maximum change in the $\ell_2$-Norm of the function value $f(\cdot)$ on two adjacent datasets $\D$ and $\D^\prime$:
\begin{align}
\Delta(f) = \max_{\D,\D^\prime} \|f(\D)-f(\D^\prime)\|_2,
\end{align}
where $\D$ and $\D^\prime$ are two adjacent datasets if only one element are different.
\end{defn}

\begin{defn}[R$\ee$nyi Differential Privacy (RDP)~\cite{mironov2017renyi}]\label{Def: RDP}
A randomized mechanism $\Mcc:\Dc^N \rightarrow \mathbb{R}^d$ is $(\alpha,\rho)$-R$\ee$nyi Differential Privacy $\big((\alpha,\rho)\text{-RDP}\big)$, if for any two adjacent dataset $\D,$ $\D^\prime \in \Dc^N,$ it holds that
\begin{align}
D_\alpha(\Mcc(\D)\|\Mcc(\D^\prime)) = \frac{1}{\alpha-1}\log\mathbb{E}\Big(\frac{\Mcc(\D)}{\Mcc(\D^\prime)}\Big)^\alpha \le \rho,
\end{align}
where the expectation is taken over $\Mcc(\D^\prime).$
$D_\alpha(\cdot\|\cdot)$ is also known as R$\ee$nyi divergence.
\end{defn}

\begin{lem}[RDP of Gaussian Mechanism~\cite{wang2018subsampled}]\label{Lemma: RDP Gaussian}
Consider the mechanism $\Mcc = f(\D)+\veta,$ with the function $f:\Dc\rightarrow \R^d$ and Gaussian noise $\veta\sim N(0,\sigma^2\I_d).$
\begin{enumerate} [topsep=1pt, itemsep=-.1ex, leftmargin=.3in]
\item[i).]\label{Lemma: RDP Gaussian1} The mechanism $\Mcc$ is $(\alpha,\alpha \Delta^2(f)/(2\sigma^2)$-RDP, where $\Delta(f)$ is the $\ell_2$-Sensitivity of $f;$
\item[ii).]\label{Lemma: RDP Gaussian2} If the mechanism $\Mcc$ is applied to a subset of samples using uniform sampling without replacement and $\sigma^2 \ge 1.1/25,$ then $\Mcc$ is $(\alpha,\alpha\tau^2 \Delta^2(f)/\sigma^2)$-RDP, where $\tau$ is the sampling rate. 
\end{enumerate}
\end{lem}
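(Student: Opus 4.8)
The plan is to treat the two claims separately: part (i) reduces to an exact computation of the R\'enyi divergence between two isotropic Gaussians, while part (ii) upgrades that base bound through a subsampling-amplification argument.

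For part (i), I would fix two adjacent datasets $\D,\D^\prime$ and set $\mu = f(\D)$, $\mu^\prime = f(\D^\prime)$, so that $\Mcc(\D)\sim N(\mu,\sigma^2\I_d)$ and $\Mcc(\D^\prime)\sim N(\mu^\prime,\sigma^2\I_d)$ have densities $p_\mu$ and $p_{\mu^\prime}$. The key computation is the $\alpha$-th moment of the likelihood ratio, $\Eb_{z\sim N(\mu^\prime,\sigma^2\I_d)}\big[(p_\mu(z)/p_{\mu^\prime}(z))^\alpha\big] = \int p_\mu(z)^\alpha p_{\mu^\prime}(z)^{1-\alpha}\,dz$. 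Since the covariance is isotropic the integrand factors coordinate-wise; completing the square in the exponent collapses the $d$-dimensional integral to a Gaussian normalization times a constant, and that constant works out to $\exp\!\big(\alpha(\alpha-1)\|\mu-\mu^\prime\|_2^2/(2\sigma^2)\big)$. Taking $\tfrac{1}{\alpha-1}\log(\cdot)$ gives $D_\alpha(\Mcc(\D)\|\Mcc(\D^\prime)) = \alpha\|\mu-\mu^\prime\|_2^2/(2\sigma^2)$, and maximizing over adjacent pairs replaces $\|\mu-\mu^\prime\|_2$ by the sensitivity $\Delta(f)$, which is exactly the stated $(\alpha,\alpha\Delta^2(f)/(2\sigma^2))$-RDP bound.

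For part (ii), the mechanism is applied after drawing a subsample at rate $\tau$ without replacement. The guiding observation is that the single differing element is included in the subsample with probability $\tau$ and excluded with probability $1-\tau$; hence, relative to a fixed reference output $N(0,\sigma^2\I_d)$, the adjacent dataset induces a two-component mixture. By the rotational invariance of the isotropic Gaussian I would align $\mu-\mu^\prime$ with one axis and reduce to the one-dimensional comparison of $N(0,\sigma^2)$ against $(1-\tau)N(0,\sigma^2)+\tau N(\Delta,\sigma^2)$ with $\Delta = \Delta(f)$. Writing the mixture likelihood ratio as $1+\tau(g-1)$ with $g(x)=p_\Delta(x)/p_0(x)$ and expanding the $\alpha$-th moment, the $k=0$ term is $1$, the $k=1$ term vanishes because $\Eb_{N(0,\sigma^2)}[g]=1$, and the $k=2$ term is $\binom{\alpha}{2}\tau^2(e^{\Delta^2/\sigma^2}-1)$ using $\int p_\Delta^2/p_0 = e^{\Delta^2/\sigma^2}$. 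Since $\binom{\alpha}{2}/(\alpha-1)=\alpha/2$ and $\log(1+u)\le u$, the leading contribution is $\tfrac{\alpha}{2}\tau^2(e^{\Delta^2/\sigma^2}-1)$; the lower bound on $\sigma^2$ is then used to show $e^{\Delta^2/\sigma^2}-1$ stays within a factor of its linearization $\Delta^2/\sigma^2$ and to absorb the $k\ge 3$ tail, so that the total is at most $\alpha\tau^2\Delta^2(f)/\sigma^2$.

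The hard part will be part (ii): controlling the higher-order terms $\sum_{k\ge 3}\binom{\alpha}{k}\tau^k\,\Eb_{N(0,\sigma^2)}[(g-1)^k]$ of the subsampled likelihood ratio together with the exponential factor $e^{\Delta^2/\sigma^2}-1$. The leading term is immediate, but showing the remainder does not overwhelm it — and that the constants line up with the comparatively loose target $\alpha\tau^2\Delta^2/\sigma^2$ rather than a tighter expression — is exactly where the noise-variance condition $\sigma^2\ge 1.1/25$ enters. This is precisely the subsampled-RDP moment estimate of \cite{wang2018subsampled}, so in the write-up I would either invoke that theorem directly with $\Delta(f)$ as the per-step $\ell_2$-sensitivity, or reproduce its moment bound under the stated condition, also checking the reverse ordering $D_\alpha(\Mcc(\D^\prime)\|\Mcc(\D))$ required by Definition~\ref{Def: RDP}.
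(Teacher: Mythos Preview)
The paper does not prove this lemma at all: it is stated with a citation to \cite{wang2018subsampled} and then used as a black box inside the proof of Theorem~\ref{Theorem: Privacy Guarantee}. So there is no ``paper's own proof'' to compare against --- the authors simply import the result.

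Your proposal, by contrast, sketches an actual proof. Part (i) is the standard exact computation of the R\'enyi divergence between two isotropic Gaussians and is correct as written. Part (ii) is the subsampling-amplification argument: your mixture decomposition, reduction to one dimension, binomial expansion of $(1+\tau(g-1))^\alpha$, and identification of the $k=2$ term as dominant are all the right moves, and you correctly flag that the noise-variance lower bound is what lets you control $e^{\Delta^2/\sigma^2}-1$ and the $k\ge 3$ tail. You also correctly note at the end that this is precisely the content of \cite{wang2018subsampled}, and that invoking that theorem directly is a legitimate alternative --- which is exactly what the paper does. In short: your approach is sound and more detailed than what the paper offers; if you want to match the paper's level of rigor here, a one-line citation suffices.
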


\begin{lem}[Sequential Composibility of RDP~\cite{mironov2017renyi}]\label{Lemma: RDP composibility}
Consider $f: \Dc^N \rightarrow \mathbb{R}^{d_1}$ and $g: \Dc^N\times\mathbb{R}^{d_1} \rightarrow \mathbb{R}^{d_2}.$ If $f$ is $(\alpha,\rho_1)$-RDP and $g$ is $(\alpha,\rho_2)$-RDP, then the mechanism $(X,Y)$ satisfies $(\alpha,\rho_1+\rho_2)$-RDP, where $X\sim f(\D)$ and $Y \sim g(X,\D).$
\end{lem}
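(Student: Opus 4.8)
The plan is to work directly from the definition of R\'enyi divergence in Definition~\ref{Def: RDP} and to exploit the product (chain-rule) structure of the joint law of $(X,Y)$. Fix two arbitrary adjacent datasets $\D,\D^\prime$; it suffices to show $D_\alpha(\Mcc(\D)\|\Mcc(\D^\prime))\le\rho_1+\rho_2$ for the composed mechanism $\Mcc(\D)=(X,Y)$. Working with densities relative to a common dominating measure, write $p(x)$ and $p^\prime(x)$ for the densities of $f(\D)$ and $f(\D^\prime)$, and $q(y\mid x)$ and $q^\prime(y\mid x)$ for the conditional densities of $g(x,\D)$ and $g(x,\D^\prime)$. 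Then the joint densities of $(X,Y)$ under $\D$ and under $\D^\prime$ factor as $p(x)\,q(y\mid x)$ and $p^\prime(x)\,q^\prime(y\mid x)$, respectively.

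First I would rewrite the key quantity $e^{(\alpha-1)D_\alpha(\Mcc(\D)\|\Mcc(\D^\prime))}$ as the integral $\int\!\!\int \big(p(x)q(y\mid x)\big)^\alpha\big(p^\prime(x)q^\prime(y\mid x)\big)^{1-\alpha}\,dy\,dx$ and separate the exponents, so that the integrand becomes $p(x)^\alpha p^\prime(x)^{1-\alpha}\cdot q(y\mid x)^\alpha q^\prime(y\mid x)^{1-\alpha}$. By Fubini's theorem I would carry out the inner $y$-integral first, for each fixed $x$: $\int q(y\mid x)^\alpha q^\prime(y\mid x)^{1-\alpha}\,dy=e^{(\alpha-1)D_\alpha(g(x,\D)\|g(x,\D^\prime))}$. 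Since $g$ is $(\alpha,\rho_2)$-RDP for every fixed value of its side-information argument $x$, this inner integral is bounded by $e^{(\alpha-1)\rho_2}$ \emph{uniformly} in $x$.

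Pulling this uniform bound out of the remaining outer integral leaves $e^{(\alpha-1)\rho_2}\int p(x)^\alpha p^\prime(x)^{1-\alpha}\,dx = e^{(\alpha-1)\rho_2}\,e^{(\alpha-1)D_\alpha(f(\D)\|f(\D^\prime))}$, and the $(\alpha,\rho_1)$-RDP of $f$ bounds the surviving factor by $e^{(\alpha-1)\rho_1}$. Taking logarithms and dividing by $\alpha-1>0$ yields $D_\alpha(\Mcc(\D)\|\Mcc(\D^\prime))\le\rho_1+\rho_2$; since $\D,\D^\prime$ were arbitrary adjacent datasets, this is precisely the claimed $(\alpha,\rho_1+\rho_2)$-RDP of the composed mechanism.

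The main obstacle is the correct reading and use of the uniform-in-$x$ RDP bound for $g$: the hypothesis that $g$ is $(\alpha,\rho_2)$-RDP must be interpreted as holding for \emph{every} fixed realization $X=x$ of the first output simultaneously, so that the bound $e^{(\alpha-1)\rho_2}$ can be factored through the outer integral against the $x$-marginal. This is exactly what makes the composition \emph{adaptive}, i.e.\ allows the second mechanism's conditional law to depend on the first output $x$. A secondary technical point, worth a remark, is that the argument relies on $\alpha>1$ so that exponentiation preserves the direction of all inequalities; the boundary case $\alpha=1$ (where $D_\alpha$ reduces to KL divergence) would instead follow from the chain rule for KL divergence or by a limiting argument as $\alpha\downarrow 1$.
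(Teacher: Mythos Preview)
Your proof is correct and is the standard argument for adaptive composition of RDP, essentially the proof given in the cited reference~\cite{mironov2017renyi}. The paper itself does not prove this lemma; it simply quotes it as a known result from Mironov and uses it as a black box in the proof of Theorem~\ref{Theorem: Privacy Guarantee}, so there is nothing in the paper to compare your argument against beyond the statement.
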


\begin{lem}[From RDP to $(\epsilon,\delta)$-DP~\cite{mironov2017renyi}]\label{Lemma: RDP2DP}
If $f$ is an $(\alpha, \rho)$-RDP mechanism, then it also satisfies $\big(\rho+\log(1/\delta)/(\alpha-1),\delta\big)$-DP for any $\delta \in (0,1).$
\end{lem}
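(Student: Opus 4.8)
The plan is to convert the Rényi-divergence bound into the hypothesis-testing form of $(\epsilon,\delta)$-DP by controlling a single privacy-loss tail. Write $P = \Mcc(\D)$ and $Q = \Mcc(\D^\prime)$ for an adjacent pair, and let $p,q$ denote their densities with respect to a common dominating measure. For a threshold $\epsilon$ to be fixed later, define the bad set $S = \{x : p(x) > e^\epsilon q(x)\}$, i.e. the set on which the pointwise privacy loss $\log\big(p(x)/q(x)\big)$ exceeds $\epsilon$. The first step is the elementary splitting observation that, for \emph{any} measurable event $E$,
\begin{align*}
\mathbb{P}(\Mcc(\D)\in E) &= \int_{E\cap S} p + \int_{E\setminus S} p \\
&\le \int_S p + e^\epsilon \int_{E} q = \mathbb{P}(P\in S) + e^\epsilon\,\mathbb{P}(\Mcc(\D^\prime)\in E),
\end{align*}
since $p \le e^\epsilon q$ off $S$. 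Hence it suffices to show $\mathbb{P}(P\in S) \le \delta$ for the chosen $\epsilon$; this reduces the whole lemma to a single tail estimate on $S$ under $P$, uniformly in $E$.

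The second step bounds $\mathbb{P}(P\in S)$ by a Chernoff/Markov argument driven by the $\alpha$-th moment. On $S$ we have $\big(p(x)/q(x)\big)^{\alpha-1} > e^{(\alpha-1)\epsilon}$, and since $\alpha>1$ here, I would write
\begin{align*}
\mathbb{P}(P\in S) = \int_S p
&< e^{-(\alpha-1)\epsilon}\int_S p\Big(\tfrac{p}{q}\Big)^{\alpha-1}
= e^{-(\alpha-1)\epsilon}\int_S q\Big(\tfrac{p}{q}\Big)^{\alpha} \\
&\le e^{-(\alpha-1)\epsilon}\,\mathbb{E}_{Q}\Big[\big(\tfrac{p}{q}\big)^{\alpha}\Big],
\end{align*}
where the last step uses nonnegativity of the integrand to extend the integral from $S$ to the whole space. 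By Definition~\ref{Def: RDP}, the $(\alpha,\rho)$-RDP hypothesis gives exactly $\mathbb{E}_{Q}\big[(p/q)^{\alpha}\big] \le e^{(\alpha-1)\rho}$, whence $\mathbb{P}(P\in S) < e^{(\alpha-1)(\rho-\epsilon)}$. Setting this quantity equal to $\delta$ forces $(\alpha-1)(\rho-\epsilon) = \log\delta$, i.e. $\epsilon = \rho + \log(1/\delta)/(\alpha-1)$, which is precisely the claimed privacy parameter. Combining with the first step yields $\mathbb{P}(\Mcc(\D)\in E) \le e^\epsilon\,\mathbb{P}(\Mcc(\D^\prime)\in E) + \delta$ for every $E$.

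The RDP bound holds for every adjacent pair, so the same argument applied with the roles of $\D,\D^\prime$ reversed gives the DP inequality in both directions, completing the claim. The only genuine subtlety — and the part I would write most carefully — is the measure-theoretic bookkeeping rather than any hard estimate: one should fix a $\sigma$-finite dominating measure with $P,Q$ absolutely continuous with respect to it, interpret $p/q$ through Radon--Nikodym derivatives, and dispose of the null set $\{q=0\}$ (on which $P$ must also vanish, else the divergence is infinite and the hypothesis is vacuous). Everything else is a direct substitution of the RDP moment bound into the standard tail inequality, so I do not anticipate a hard computational step.
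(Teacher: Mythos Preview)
Your argument is correct and is essentially the standard proof of this conversion (the one in Mironov's paper that the lemma cites): split any event $E$ over the tail set $S=\{p>e^\epsilon q\}$, then bound $P(S)$ via a Chernoff/Markov step against the $\alpha$-th moment $\mathbb{E}_Q[(p/q)^\alpha]\le e^{(\alpha-1)\rho}$, and solve for $\epsilon$. Note, however, that the present paper does \emph{not} prove this lemma at all; it merely quotes it from~\cite{mironov2017renyi} as a tool in the proof of Theorem~\ref{Theorem: Privacy Guarantee}, so there is no ``paper's own proof'' to compare against beyond observing that you have reproduced the cited source's argument.
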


With the above definition and lemmas, we provide the following privacy guarantee for our algorithm in the following.
Our proof is inspired by \cite{wang2019dp}.
Given the active set at $t$th iteration $\Ac_{1,t},$ and $\Ac_{1,i,t}$ respective to the $i$th node, the updating equation (\ref{Eq: iterate update}) can be rewritten as:
\begin{align}
[\x_{t+1}]_{\Ac_{1,t}} &= [\x_t]_{\Ac_{1,t}} - [(\theta\nabla\L_\gamma(\x_t;\zeta_t) + \theta\gamma \veta_t)/p]_{\Ac_{1,t}}\\
& = [\x_t]_{\Ac_{1,t}} - \theta\big([\nabla\L_\gamma(\x_t;\zeta_t)]_{\Ac_{1,t}} + \gamma [\veta_t]_{\Ac_{1,t}}\big)/p
\end{align}
Thus, we need to analyze the privacy gaurantee of the above SGD updating with noise $[\veta_t]_{\Ac_{1,t}},$ of which each coordinate is from $N(0,\sigma^2).$
Given dataset $\D,$ consider the following mechanism $\hat{\Mcc}_t = [\nabla \L_\gamma(\x_t;\D)]_{\Ac_{1,t}}+\gamma [\veta_t]_{\Ac_{1,t}}$ with the query $\q_t = [\nabla \L_\gamma(\x_t;\D)]_{\Ac_{1,t}}$.
With the adjacent datasets $\D$ and $\D^\prime,$ the $\ell_2$-sensitivity of $\q_t$ is 
\begin{align}
\Delta(\q_t) 
&
\stackrel{}{=} \|[\nabla \L_\gamma(\x_t;\D)]_{\Ac_{1,t}} - [\nabla \L_\gamma(\x_t;\D^\prime)]_{\Ac_{1,t}} \|_2 \notag\\
&
\stackrel{}{=} \|[(\tilde{\W}\x_t - \gamma\nabla \f(\x_{t};\D_i)) - (\tilde{\W}\x_t -\gamma\nabla \f(\x_{t};\D^\prime_i))]_{\Ac_{1,t}} \|_2 \notag\\
&
\stackrel{}{=} \gamma\sum\nolimits_{i=1}^{n}\|[\nabla f(x_{i,t};\D_i) -\nabla f(x_{i,t};\D^\prime_i)]_{\Ac_{1,i,t}} \|_2 \notag\\
&
\stackrel{(a)}{=} \frac{\gamma}{m}\|[\nabla f(x_{i,t}|\zeta_{i,j}) -\nabla f(x_{i,t};\zeta^\prime_{i,j})]_{\Ac_{1,i,t}} \|_2 \notag\\
&
\stackrel{}{=} \frac{\gamma}{m}\sqrt{\sum\nolimits_{k \in \Ac_{1,i,t}}([\nabla f(x_{i,t};\zeta_{i,j})]_k-[\nabla f(x_{i,t};\zeta_{i,j}^\prime)]_k)^2} \notag\\
&
\stackrel{(b)}{\le} \frac{2\gamma\sqrt{|\Ac_{1,i,t}|}G}{\sqrt{d}m} 
\le \frac{2\gamma\sqrt{|\overline{\Ac_{1,t}}|}G}{\sqrt{d}m}
\end{align}
where (a) by assuming that the only different data is $\zeta_{i,j}$ and $\zeta^\prime_{i,j}$ in the $i$th node; 
(b) by the coordinate-wise $G/\sqrt{d}$-Lipschitz of the function $f(\cdot),$ and $|\overline{\Ac_{1,t}}| = \max_i\{|\Ac_{1,i,t}|\} \le d.$
Thus, based on Lemma \ref{Lemma: RDP Gaussian} i), with $\veta_t \sim N(0,\sigma^2\I_{nd}),$ the mechanism $\hat{\Mcc}_t$ satisfies $\big(\alpha,2\alpha|\overline{\Ac_{1,t}}|(G /\sqrt{d}m\sigma)^2\big)$-RDP\footnote{Note that the mechanism $\hat{\Mcc}_t$ adds the Gaussian noise $\gamma[\veta_t]_{\Ac_{1,t}} \sim N(0,\gamma^2\sigma^2\I_{\Ac_{1,i,t}})$.}. 
Then for the mechanism $\Mcc_t = [\nabla \L_\gamma(\x_t;\zeta_t)]_{\Ac_{1,t}}+\gamma [\veta_t]_{\Ac_{1,t}},$ which is equivalent to applying $\hat{\Mcc}_t$ to a subset of random sample $\zt_t$, according to Lemma \ref{Lemma: RDP Gaussian} ii), $\Mcc_t$ satisfies $\big(\alpha,4\alpha|\overline{\Ac_{1,t}}|(\tau G /\sqrt{d}m\sigma)^2\big)$-RDP with $\sigma^2\ge 1/1.25.$ 
So set $\alpha = 2\log(1/\delta)/\epsilon+1$ and with Lemma \ref{Lemma: RDP2DP}, we have $\Mcc_t$ satisfies $(4\alpha|\overline{\Ac_{1,t}}|(\tau G /\sqrt{d}m\sigma)^2+\epsilon/2,\delta)$-DP with $\sigma^2\ge 1/1.25.$

Next, we derive the privacy guarantee over $T$ iterations.
By Lemma \ref{Lemma: RDP composibility}, with $T$ iterations, i.e. sequentially composition of $\{\Mcc_t\}_{t=1}^{T}$, the algorithm output $\x_T$ satisifies $\big(\alpha,\sum_{t=1}^{T}4\alpha|\overline{\Ac_{1,t}}|(\tau G /\sqrt{d}m\sigma)^2\big)$-RDP.
With Lemma \ref{Lemma: RDP2DP}, $\x_T$ satisfies $(4\alpha\sum_{t=1}^{T}|\overline{\Ac_{1,t}}| (\tau G/\sqrt{d}m\sigma)^2+\epsilon/2,\delta)$-DP when $\sigma^2\ge 1/1.25.$
\end{proof}

\subsection{Proof of Proposition \ref{Prop: alternative design}} 
\begin{proof}
Under this design, we have the updating:
\begin{align}
[\x_{t+1}]_{\Ac_{1,t}} &= [\x_t]_{\Ac_{1,t}} - [\theta\nabla\L_\gamma(\x_t;\zeta_t)/p +\theta\gamma \tilde{\veta}_t]_{\Ac_{1,t}}.
\end{align}
Consider the mechanism $\hat{\Mcc}_t = [\theta\nabla \L_\gamma(\x_t;\D)/p]_{\Ac_{1,t}}+ \theta\gamma[\tilde{\veta}_t]_{\Ac_{1,t}} $ with the query $\q_t = [\theta\nabla \L_\gamma(\x_t;\D)/p]_{\Ac_{1,t}}$.
With the adjacent datasets $\D$ and $\D^\prime,$ we have the $\ell_2$-sensitivity of $\q_t$ is 
\begin{align}
\Delta(\q_t) 
&
\stackrel{}{=} \frac{\theta}{p} \|[\nabla \L_\gamma(\x_t;\D)]_{\Ac_{1,t}} - [\nabla \L_\gamma(\x_t;\D^\prime)]_{\Ac_{1,t}} \|_2 \notag\\
&
\stackrel{}{=} \frac{\theta}{p} \|[(\tilde{\W}\x_t - \gamma\nabla \f(\x_{t};\D_i)) - (\tilde{\W}\x_t -\gamma\nabla \f(\x_{t};\D^\prime_i))]_{\Ac_{1,t}} \|_2 \notag\\
&
\stackrel{}{=} \frac{\theta\gamma}{p}\|\sum\nolimits_{i=1}^{n}[\nabla f(x_{i,t};\D_i) -\nabla f(x_{i,t};\D^\prime_i)]_{\Ac_{1,i,t}} \|_2 \notag\\
&
\stackrel{(a)}{=} \frac{\theta\gamma}{mp}\|[\nabla f(x_{i,t};\zeta_{i,j}) -\nabla f(x_{i,t};\zeta^\prime_{i,j})]_{\Ac_{1,i,t}} \|_2 \notag\\
&
\stackrel{}{=} \frac{\theta\gamma}{mp}\sqrt{\sum\nolimits_{k \in \Ac_{1,i,t}}([\nabla f(x_{i,t};\zeta_{i,j})]_k-[\nabla f(x_{i,t};\zeta_{i,j}^\prime)]_k)^2} \notag\\
&
\stackrel{(b)}{\le} \frac{2\theta\gamma\sqrt{|\Ac_{1,i,t}|}G}{\sqrt{d}mp}
\le \frac{2\theta\gamma\sqrt{|\overline{\Ac_{1,t}}|}G}{\sqrt{d}mp}
\end{align}
where (a) by assuming that the only different data is $\zeta_{i,j}$ and $\zeta^\prime_{i,j}$ in the $i$th node; 
(b) by the coordinate-wise $G/\sqrt{d}$-Lipschitz of the function $f(\cdot).$
Thus, based on Lemma \ref{Lemma: RDP Gaussian} i), with $[\tilde{\veta_t}]_{\Ac_{1,t}} \sim N(0,\sigma^2\I),$ the mechanism $\hat{\Mcc}_t$ satisfies $\big(\alpha,2\alpha|\overline{\Ac_{1,t}}|(G/\sqrt{d}m\sigma p)^2\big)$-RDP. 
Define the mechanism $\Mcc_t = 
[\theta\nabla \L_\gamma(\x_t;\zeta_t)/p]_{\Ac_{1,t}}+ [\tilde{\veta}_t]_{\Ac_{1,t}} .$
With the similar derivation, we have $\Mcc_t$ satisfies $(4\alpha|\overline{\Ac_{1,t}}| (\tau G )^2 /dm^2\sigma^2 p^2+\epsilon/2,\delta)$-DP with $\sigma^2\ge 1/1.25.$
Hence with $T$ iterations, the algorithm satisfies $(4\alpha \sum_{t=1}^{T}|\overline{\Ac_{1,t}}|(\tau G )^2 /dm^2\sigma^2 p^2+\epsilon/2,\delta)$-DP with $\sigma^2\ge 1/1.25.$
\end{proof}

\subsection{Proof of Theorem \ref{Theorem: convergence}}

\begin{proof}
First, we give the following useful lemma.
\begin{lem}\label{Lemma: sparse dt}
Under the same conditions in Lemma \ref{Theorem: convergence}, at $t$th iteration, the random sparsified output $S(\d_t)$ has:
\begin{enumerate} [topsep=1pt, itemsep=-.1ex, leftmargin=.3in]
\item[i).] First Moment: $\Eb[S(\d_t)|\x_t] = -\theta\nabla \L_\gamma(\x_t;\D);$
\item[ii).] Second Moment: $\Eb[\|S(\d_t)\|_2^2|\x_t] \le \frac{\theta^2}{p}\|\nabla \L_\gamma(\x_t;\D)\|_2^2 +  \frac{(\theta\gamma)^2}{p}(\frac{n\tilde{\sigma}^2}{m\tau}+nd\sigma^2).$
\end{enumerate}
\end{lem}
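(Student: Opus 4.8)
The plan is to prove both identities by nested conditioning, peeling off one at a time the three sources of randomness that, conditioned on $\x_t$, generate the released vector $S(\d_t)$: the mini-batch sample $\zeta_t$, the Gaussian mask $\veta_t$, and the Bernoulli coordinate-selection internal to the sparsifier $S(\cdot)$. The starting point is the identity established just before Eq.~(\ref{Eq: iterate update}), namely $\d_t = -\theta\big(\nabla\L_\gamma(\x_t;\zeta_t) + \gamma\veta_t\big)$ with $\nabla\L_\gamma(\x_t;\zeta_t) = (\I - \tilde{\W})\x_t + \gamma\nabla\f(\x_t;\zeta_t)$. Conditioned on $\x_t$, the triple $(\zeta_t, \veta_t, \text{sparsifier mask})$ is mutually independent, which is what lets me treat $S$ as acting on a ``frozen'' $\d_t$ when computing its conditional moments and then take a further expectation over $(\zeta_t,\veta_t)$.

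For part (i), I would first condition on $\d_t$: since the sparsifier's Bernoulli randomness is independent of $\d_t$, unbiasedness of the sparsifier (Definition~\ref{Def: Sparse} and the moment lemma in Section~\ref{Section: preliminary}) gives $\Eb[S(\d_t)\mid\d_t] = \d_t$. Taking a further expectation over $\zeta_t$ and $\veta_t$ conditioned on $\x_t$, and using that the stochastic gradient is unbiased (Assumption~\ref{Assumption: function}\,(3)) and $\Eb[\veta_t]=\0$, yields $\Eb[S(\d_t)\mid\x_t] = \Eb[\d_t\mid\x_t] = -\theta\nabla\L_\gamma(\x_t;\D)$.

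For part (ii), I would use the coordinate-wise second-moment identity of the sparsifier: for any fixed $\v$, $\Eb\|S(\v)\|_2^2 = \tfrac1p\|\v\|_2^2$, since $\Eb\,[S(\v)]_i^2 = p\,(v_i/p)^2 = v_i^2/p$ (this is the $\|\cdot\|_2^2$ version of the variance bound in Section~\ref{Section: preliminary}). Hence $\Eb[\|S(\d_t)\|_2^2\mid\x_t] = \tfrac{1}{p}\Eb[\|\d_t\|_2^2\mid\x_t] = \tfrac{\theta^2}{p}\Eb\big[\|\nabla\L_\gamma(\x_t;\zeta_t) + \gamma\veta_t\|_2^2\mid\x_t\big]$. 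Since $\veta_t$ is zero-mean and independent of $\zeta_t$ and $\x_t$, the cross term vanishes and the expectation splits into $\Eb[\|\nabla\L_\gamma(\x_t;\zeta_t)\|_2^2\mid\x_t] + \gamma^2\,\Eb\|\veta_t\|_2^2$, where the Gaussian term is $\gamma^2 nd\sigma^2$. A bias--variance decomposition of the first term gives $\|\nabla\L_\gamma(\x_t;\D)\|_2^2 + \text{Var}\big(\nabla\L_\gamma(\x_t;\zeta_t)\mid\x_t\big)$; because the only randomness in $\nabla\L_\gamma(\x_t;\zeta_t)$ enters through $\gamma\nabla\f(\x_t;\zeta_t)$, and the per-node stochastic gradients are independent across nodes with variance at most $\tilde\sigma^2/(m\tau)$ each (Assumption~\ref{Assumption: function}\,(3), with a mini-batch of $m\tau$ samples out of the $m$ local ones), this variance equals $\gamma^2\, n\tilde\sigma^2/(m\tau)$. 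Collecting terms gives $\Eb[\|S(\d_t)\|_2^2\mid\x_t] \le \tfrac{\theta^2}{p}\|\nabla\L_\gamma(\x_t;\D)\|_2^2 + \tfrac{(\theta\gamma)^2}{p}\big(\tfrac{n\tilde\sigma^2}{m\tau} + nd\sigma^2\big)$.

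The computations are elementary; the one place that needs care — and which I expect to be the main, though mild, obstacle — is the conditioning order and the independence bookkeeping. This is precisely where the ``compression noise is dependent on the sampling and Gaussian masking noises'' issue flagged in the main text must be handled: by conditioning first on $\d_t$ (so the sparsifier acts on a fixed vector) and only afterwards averaging over $(\zeta_t,\veta_t)$, the dependence is resolved cleanly, and the mini-batch averaging is what supplies the $1/(m\tau)$ reduction in the gradient-variance term.
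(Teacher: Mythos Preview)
Your proposal is correct and follows essentially the same nested-conditioning structure as the paper's proof. For part~(ii) you take a slightly more direct route by invoking $\Eb\|S(\v)\|_2^2 = \tfrac{1}{p}\|\v\|_2^2$ in one step, whereas the paper decomposes via the law of total variance (Eve's law) into the sparsifier contribution $(\tfrac{1}{p}-1)\Eb[\|\d_t\|_2^2\mid\x_t]$ and $\text{Var}(\d_t\mid\x_t)$ before recombining; the two computations are algebraically equivalent.
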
 
\begin{proof}
i). For the first moment,
\begin{align}
&
\Eb[S(\d_t)|\x_t] 
=
\Eb[\Eb[S(\d_t)|\d_t]|\x_t]= \Eb[\d_t|\x_t]\notag\\ 
&
=
\Eb[- \theta(\nabla\L_\gamma(\x_t;\zeta_t)+\gamma \veta_t)|\x_t]\notag\\
&
= -\theta\nabla \L_\gamma(\x_t;\D)
\end{align}
ii). For the second moment, 
\begin{align}
&
\Eb[\|S(\d_t)\|_2^2|\x_t] 
= 
\|\Eb[S(\d_t)]|\x_t\|_2^2 +\text{Var}[S(\d_t)|\x_t] \notag\\
&
\stackrel{(a)}{=} 
\theta^2\|\nabla \L_\gamma(\x_t;\D)\|_2^2+ \Eb[\text{Var}[S(\d_t)|\d_t]|\x_t] + \text{Var}[\Eb[S(\d_t)|\d_t]|\x_t] \notag\\
&
\stackrel{(b)}{=} 
\theta^2\|\nabla \L_\gamma(\x_t;\D)\|_2^2+ (\frac{1}{p} -1)\Eb[\|\d_t\|_2^2|\x_t] + \text{Var}[\d_t|\x_t] \notag\\
&
\stackrel{(c)}{\le} 
\theta^2\|\nabla \L_\gamma(\x_t;\D)\|_2^2 + (\frac{1}{p} -1)\theta^2\Eb [\|\nabla\L_\gamma(\x_t;\zeta_t)+\gamma \veta_t\|_2^2|\x_t] \notag\\
&
+ (\theta\gamma)^2(\frac{n\tilde{\sigma}^2}{m\tau}+nd\sigma^2)\notag\\
&
\stackrel{(d)}{=} 
\theta^2\|\nabla \L_\gamma(\x_t;\D)\|_2^2 + (\frac{1}{p} -1)\theta^2\Eb [\|\nabla\L_\gamma(\x_t;\zeta_t)\|_2^2+\|\gamma \veta_t\|_2^2|\x_t]\notag\\
&
+ (\theta\gamma)^2(n\frac{\tilde{\sigma}^2}{m\tau}+nd\sigma^2)\notag\\
&
\stackrel{}{\le} 
\theta^2\|\nabla \L_\gamma(\x_t;\D)\|_2^2 + (\frac{1}{p} -1)\theta^2[\|\nabla\L_\gamma(\x_t;\D)\|_2^2 \notag\\
&
+\gamma^2(\frac{n\tilde{\sigma}^2}{mn\tau}+d\sigma^2)]+ (\theta\gamma)^2(\frac{\tilde{\sigma}^2}{m\tau}+nd\sigma^2)\notag\\
&
\stackrel{}{=} 
\frac{\theta^2}{p}\|\nabla \L_\gamma(\x_t;\D)\|_2^2 +  \frac{(\theta\gamma)^2}{p}(\frac{n\tilde{\sigma}^2}{m\tau}+nd\sigma^2)
\end{align}
where (a) is by the Eve's law;
(b) is from the properies of the sparsifier in Section \ref{Section: preliminary};
(c) is by $\d_t = - \theta(\nabla\L_\gamma(\x_t;\zeta_t)+\gamma \veta_t),$ at each node the subsampling rate is $\tau;$
(d) is because the randomness of the subsampling and the Gaussian mechanism are independent.
\end{proof}

Step 1: Define the filtration $\Fc_t = \sigma\langle \x_1,\cdots,\x_t \rangle.$
Note that the Lyapunov function $\L_{\gamma}(\x;\D)$ has $(1-\lambda_n+\gamma L)$-Lipschitz gradient.
Thus, we have 
\begin{align}
&\L_\gamma(\x_{t+1};\D) \notag\\
&
\le \L_\gamma(\x_{t};\D) + \langle \nabla \L_{\gamma}(\x_{t};\D),\x_{t+1}-\x_{t} \rangle 
 + \frac{(1-\lambda_n+\gamma L)}{2}\|\x_{t+1}-\x_{t}\|^2 \notag\\
&
= \L_\gamma(\x_{t};\D) + \langle \nabla \L_{\gamma}(\x_{t};\D), S(\d_{t}) \rangle + \frac{(1-\lambda_n+\gamma L)}{2}\|S(\d_{t})\|^2
\end{align}
Take conditional expectation at both sides:
\begin{align}
&\Eb[\L_\gamma(\x_{t+1})|\Fc_t]\notag\\
&
\le
\L_\gamma(\x_{t};\D) + 
\langle \nabla \L_{\gamma}(\x_{t};\D), \Eb[S(\d_{t})|\Fc_t] \rangle + 
\frac{(1-\lambda_n+\gamma L)}{2}\Eb[\|S(\d_{t})\|^2|\Fc_t] \notag\\
& 
= 
\L_\gamma(\x_{t};\D) -
\theta \| \nabla \L_{\gamma}(\x_{t};\D)\|^2_2 
+
\frac{(1-\lambda_n+\gamma L)}{2}[ \frac{\theta^2}{p}\|\nabla \L_\gamma(\x_t;\D)\|_2^2 + 
\notag\\
&
\frac{(\theta\gamma)^2}{p}(\frac{n\tilde{\sigma}^2}{m\tau}+d\sigma^2)]
\notag\\
& 
\le 
\L_\gamma(\x_{t}) + 
\big(\frac{(1-\lambda_n+\gamma L)\theta^2}{2p}-\theta\big) \| \nabla \L_{\gamma}(\x_{t};\D)\|^2_2 \notag\\
&
+
\frac{(1-\lambda_n+\gamma L)(\theta\gamma)^2}{2p}(\frac{n\tilde{\sigma}^2}{m\tau}+nd\sigma^2)
\end{align}
Thus, by setting $2p\theta-(1-\lambda_n+\gamma L)\theta^2 > 0,$ i.e. $\theta < 2p/(1-\lambda_n+\gamma L),$ we have the following descent inequality:
\begin{align}\label{Eq: descent inequality L}
&\big(2p\theta-(1-\lambda_n+\gamma L)\theta^2\big) \| \nabla \L_{\gamma}(\x_{t};\D)\|^2_2\notag\\
&\le 
2p(\L_\gamma(\x_{t})-\Eb[\L_\gamma(\x_{t+1})|\Fc_t]) +
(1-\lambda_n+\gamma L)(\theta\gamma)^2(\frac{n\tilde{\sigma}^2}{m\tau}+nd\sigma^2).
\end{align}
Telescope the inequalities from $t=0$ to $T$, it holds that:
\begin{align}
&\big(2p\theta-(1-\lambda_n+\gamma L)\theta^2\big) \sum\nolimits_{t=0}^{T} \Eb\| \nabla \L_{\gamma}(\x_{t};\D)\|^2_2\notag\\
&
\le 
2p(\L_\gamma(\x_{0};\D)-\Eb[\L_\gamma(\x_{T+1};\D)])  \notag\\
&
+
(1-\lambda_n+\gamma L)(\theta\gamma)^2(\frac{n\tilde{\sigma}^2}{m\tau}+nd\sigma^2)(T+1)
\end{align}
Because of the two facts that $\L_\gamma(\x_{T+1};\D) \ge \gamma \sum\nolimits_{i=1}^{n}f(x_{i,T+1};\D_i) \ge \gamma \sum\nolimits_{i=1}^{n}f(x^*_{\D};\D_i)$ and $\L_\gamma(\x_0;\D) = \gamma \sum\nolimits_{i=1}^{n}f(\0;\D_i),$ it holds that: 
\begin{align}
&\big(2p\theta-(1-\lambda_n+\gamma L)\theta^2\big) \sum\nolimits_{t=0}^{T} \Eb\| \nabla \L_{\gamma}(\x_{t};\D)\|^2_2\notag\\
&\le 
2p\gamma\big(\sum\nolimits_{i=1}^{n}f(\0;\D_i)-\sum\nolimits_{i=1}^{n}f(x^*_{\D};\D_i)\big) \notag\\
&
+ 
(1-\lambda_n+\gamma L)(\theta\gamma)^2(\frac{n\tilde{\sigma}^2}{m\tau}+nd\sigma^2)(T+1).
\end{align}
Thus, we have
\begin{align}
\sum\nolimits_{t=0}^{T} \Eb\| \nabla \L_{\gamma}(\x_{t};\D)\|^2_2
&\le 
\frac{2pn\gamma C_1}{\big(2p\theta-(1-\lambda_n+\gamma L)\theta^2\big)} \notag\\
&
+
\frac{(1-\lambda_n+\gamma L)\theta\gamma^2(T+1)C_2}{2p-(1-\lambda_n+\gamma L)\theta}.
\end{align}
where $C_1 = f(\0;\D)-f(x^*_{\D};\D)$ and $C_2 = n\tilde{\sigma}^2/m\tau+nd\sigma^2$ are two constants.

Step 2: In the following, we provide the bound for $\|\x_t-\xb_t\| = \|\big(\I_{nd} - (\frac{1}{n}\1_n\1_n^\top)\otimes \I_d\big)\x_t\|,$ where $\xb_t = \big((\frac{1}{n}\1_n\1_n^\top)\otimes \I_d\big)\x_t.$
For notation convenience, we define $\Q = \I_{nd} - (\frac{1}{n}\1_n\1_n^\top)\otimes \I_d,$ $\f(\x;\zeta)=\sum\nolimits_{i=1}^{n}f(x_i;\zeta_i).$
From the updating (\ref{Eq: iterate update}), it holds that:
\begin{align}
\x_{t}
&
= \x_{t-1} - \theta(\nabla\L_\gamma(\x_{t-1};\zeta_{t-1}) + \gamma \veta_{t-1}) + \bepi_{t-1} \notag\\
&
= \x_{t-1} - \theta\big( (\I-\tilde{\W})\x_{t-1} + \gamma \nabla \f(\x_{t-1};\zeta_{t-1}) + \gamma \veta_{t-1}\big) + \bepi_{t-1}  \notag\\
& 
= 
\big((1-\theta) \I +\theta\tilde{\W}\big)\x_{t-1} -
\theta\gamma  \nabla \f(\x_{t-1};\zeta_{t-1}) - \theta\gamma \veta_{t-1}+ \bepi_{t-1}.
\end{align}
It can be seen that the updating is the stochastic DGD updating with a mixed concensus matrix $\tilde{\W}_\theta = (1-\theta) \I +\theta\tilde{\W},$ which is also doubly stochastic with $\theta \in (0,1)$.
Thus, it holds that starting $\x_0 = \0,$
\begin{align}
\x_{t}
&
= 
\tilde{\W}_\theta\x_{t-1} -
\theta\gamma  \nabla \f(\x_{t-1};\zeta_{t-1}) - \theta\gamma \veta_{t-1}+ \bepi_{t-1}\notag\\
&
= 
\sum\nolimits_{s = 0 }^{t-1} \tilde{\W}_\theta^{t-1-s} 
\big(-
\theta\gamma  \nabla \f(\x_{s};\zeta_{s}) - \theta\gamma \veta_{s}+ \bepi_{s}\big),
\end{align}
then due to the rows sums and columns sum of $\tilde{\W}_\theta$ are $1,$ it holds that 
\begin{align}
&\Q\x_t 
= 
\Q\sum\nolimits_{s = 0 }^{t-1} \tilde{\W}_\theta^{t-1-s} 
\big(-
\theta\gamma  \nabla \f(\x_{s};\zeta_{s}) - \theta\gamma \veta_{s}+ \bepi_{s}\big)\notag\\
&
= 
\sum\nolimits_{s = 0 }^{t-1} \big(\tilde{\W}_\theta^{t-1-s} - (\frac{1}{n}\1_n\1_n^\top)\otimes \I_d\big)
\big(-
\theta\gamma  \nabla \f(\x_{s};\zeta_{s}) - \theta\gamma \veta_{s}+ \bepi_{s}\big),
\end{align}
which results to
\begin{align}
&
\|\Q\x_t \|_2^2 
\notag\\
&
=
2\|\sum\nolimits_{s = 0 }^{t-1} \big(\tilde{\W}_\theta^{t-1-s} - (\frac{1}{n}\1_n\1_n^\top)\otimes \I_d\big)
\big(-
\theta\gamma  \nabla \f(\x_{s};\zeta_{s}) - \theta\gamma \veta_{s} \big)\|^2_2  \notag\\
&
+ 
2\|\sum\nolimits_{s = 0 }^{t-1} \big(\tilde{\W}_\theta^{t-1-s} - (\frac{1}{n}\1_n\1_n^\top)\otimes \I_d\big)
\bepi_{s}\|^2_2 \notag\\
 &
=
2\|\sum\nolimits_{s = 0 }^{t-1} \big(\tilde{\W}_\theta^{t-1-s} - (\frac{1}{n}\1_n\1_n^\top)\otimes \I_d\big)
\big(-
\theta\gamma  \nabla \f(\x_{s};\zeta_{s}) - \theta\gamma \veta_{s} \big)\|^2_2  \notag\\
&
+ 
\sum\nolimits_{s = 0 }^{t-1} 2\|\big(\tilde{\W}_\theta^{t-1-s} - (\frac{1}{n}\1_n\1_n^\top)\otimes \I_d\big)
\bepi_{s}\|^2_2 
\notag\\
&
+ 
\sum\nolimits_{s,s^\prime = 0,s\neq s^\prime }^{t-1}
2\langle
\big(\tilde{\W}_\theta^{t-1-s} - (\frac{1}{n}\1_n\1_n^\top)\otimes \I_d\big)
\bepi_{s}
,\notag\\
&
\big(\tilde{\W}_\theta^{t-1-s^\prime} - (\frac{1}{n}\1_n\1_n^\top)\otimes \I_d\big)
\bepi_{s^\prime}
 \rangle
\end{align}
Take the expectation at the both sides,
\begin{align}
&\frac{1}{2}\Eb[\|\Q\x_t \|_2^2]
\notag\\
&
\stackrel{(a)}{=}
\Eb[\|\sum\nolimits_{s = 0 }^{t-1} \big(\tilde{\W}_\theta^{t-1-s} - (\frac{1}{n}\1_n\1_n^\top)\otimes \I_d\big)
\big(-
\theta\gamma  \nabla \f(\x_{s};\zeta_{s}) - \theta\gamma \veta_{s} \big)\|^2_2]  \notag\\
&
~~
+ 
\sum\nolimits_{s = 0 }^{t-1} \Eb[\|\big(\tilde{\W}_\theta^{t-1-s} - (\frac{1}{n}\1_n\1_n^\top)\otimes \I_d\big)
\bepi_{s}\|^2_2 ] \notag\\
&
\le
\Eb[\|\sum\nolimits_{s = 0 }^{t-1} \big(\tilde{\W}_\theta^{t-1-s} - (\frac{1}{n}\1_n\1_n^\top)\otimes \I_d\big)
\big(-
\theta\gamma  \nabla \f(\x_{s}|\zeta_{s}) - \theta\gamma \veta_{s} \big)\|^2_2]  \notag\\
&
~~
+ 
\sum\nolimits_{s = 0 }^{t-1}  \beta_\theta^{2(t-1-s)} (\frac{1}{p}-1)\Eb\|\theta(\nabla \L(\x_s|\zeta_s) + \gamma\veta_s)\|_2^2\notag\\
&
\le
(\theta\gamma)^2
\sum\nolimits_{s = 0 }^{t-1} \beta_\theta^{t-1-s}\sum\nolimits_{s^\prime = 0 }^{t-1} \beta_\theta^{t-1-s^\prime}
\Eb[
  \|\nabla \f(\x_{s};\zeta_{s}) +  \veta_{s}\|^2_2 ]^{\frac{1}{2}}
   \notag\\
&
 \Eb[\| \nabla \f(\x_{s^\prime};\zeta_{s^\prime}) +  \veta_{s^\prime} \|_2^2]^{\frac{1}{2}}
+ 
(\frac{1}{p}-1)\Eb\|\theta(\nabla \L(\x_s;\zeta_s) + \gamma\veta_s)\|_2^2\notag\\
&
\stackrel{(b)}{\le}
(\theta\gamma)^2
\sum\nolimits_{s = 0 }^{t-1} \beta_\theta^{t-1-s}\sum\nolimits_{s^\prime = 0 }^{t-1} \beta_\theta^{t-1-s^\prime}
((nG)^2+(nd\sigma)^2) \notag\\
&
+ 
(\frac{1}{p}-1)\Eb\|\theta(\nabla \L(\x_s;\zeta_s) + \gamma\veta_s)\|_2^2\notag\displaybreak[4]\\
& 
\stackrel{}{\le }
\frac{(\theta\gamma)^2((nG)^2+(nd\sigma)^2)}{(1-\beta_\theta)^2}
+ 
\sum\nolimits_{s = 0 }^{t-1} \beta_\theta^{2(t-1-s)} \theta^2(\frac{1}{p}-1)\Eb[\|\nabla \L(\x_s;\D)\|^2 
 \notag\\
&
+ 
\gamma^2 (\frac{n\tilde{\sigma}^2}{m\tau} + nd\sigma^2)] \notag\\
& 
\stackrel{}{\le }
\frac{(\theta\gamma)^2((nG)^2+(nd\sigma)^2)}{(1-\beta_\theta)^2}
+ 
\frac{\theta^2\gamma^2 C_2}{1-\beta_\theta^2}(\frac{1}{p}-1)  \notag\\
&
+
\sum\nolimits_{s = 0 }^{t-1} \beta_\theta^{2(t-1-s)} \theta^2(\frac{1}{p}-1)\Eb[\|\nabla \L(\x_s;\D)\|^2]  \notag\\
& 
\stackrel{(c)}{\le }
\Big(\frac{\theta\gamma }{1-\beta_\theta}\Big)^2C_3
+ 
\frac{\theta^2\gamma^2 C_2}{1-\beta_\theta}(\frac{1}{p}-1)  \notag]\\
&
+ 
\sum\nolimits_{s = 0 }^{t-1} \beta_\theta^{2(t-1-s)} \theta^2(\frac{1}{p}-1)\Eb\|\nabla \L(\x_s;\D)\|^2\notag\\
& 
\stackrel{(d)}{\le }
\Big(\frac{\gamma }{1-\beta}\Big)^2C_3
+ 
\frac{\theta\gamma^2 C_2}{1-\beta}(\frac{1}{p}-1) 
 \notag\\
&
+
\sum\nolimits_{s = 0 }^{t-1} \beta_\theta^{2(t-1-s)} \theta^2(\frac{1}{p}-1)\Eb[\|\nabla \L(\x_s;\D)\|^2 ] 
\end{align}
where $\beta_\theta = \max \{|\lambda_2(\W_\theta)|,|\lambda_n(\W_\theta)|\}$ with $\W = (1-\theta) \I + \theta\W$ and (a) is because of $\Eb[\bepi_t]=\0;$ (b) is because the function $\f(\x;\zeta)$ is coordinately $G/\sqrt{d}$-Lipschitz and hence 
$\Eb[\|\nabla \f(\x;\zeta)+\eta^2\|_2^2] = \Eb[\|\nabla \f(\x;\zeta)\|_2^2+\|\eta^2\|_2^2] \le (nG)^2+(nd\sigma)^2;$ 
(c) is because of $\beta_\theta \in (0,1)$ and $C_3 =
(nG)^2+(nd\sigma)^2;$ (d) is from Lemma \ref{Lemma: beta/theta}.

\begin{lem}\label{Lemma: beta/theta}
Given $\theta\in (0,1),$ it holds
\begin{align}
\frac{1}{1-\beta_\theta} \le \frac{1}{\theta(1-\beta)}
\end{align}
with $\beta_\theta \!=\! \max\{|\lambda_2(\W_\theta)|,|\lambda_n(\W_\theta)|\}$ and $\beta \!=\! \max\{|\lambda_2(\W)|,|\lambda_n(\W)|\}.$
\end{lem}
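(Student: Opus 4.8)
The plan is to diagonalize the mixed consensus matrix $\W_\theta = (1-\theta)\I + \theta\W$ and see how the affine map $\lambda \mapsto (1-\theta) + \theta\lambda$ distorts the spectral gap of $\W$. First I would observe that $\W_\theta$ is symmetric and shares its eigenvectors with $\W$, with eigenvalues $\lambda_i(\W_\theta) = (1-\theta) + \theta\lambda_i(\W) = 1 - \theta\big(1-\lambda_i(\W)\big)$. Since $\theta > 0$, this map is increasing, so the ordering is preserved: $\lambda_2(\W_\theta)$ and $\lambda_n(\W_\theta)$ come from $\lambda_2(\W)$ and $\lambda_n(\W)$, and $\lambda_1(\W)=1$ still gives $\lambda_1(\W_\theta)=1$. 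Moreover, $\lambda_i(\W) \in (-1,1)$ for $i \in \{2,n\}$ together with $\theta \in (0,1)$ yields $\lambda_i(\W_\theta) \in (-1,1)$, hence $\beta_\theta \in [0,1)$ and $1/(1-\beta_\theta)$ is positive and finite.

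Next I would rewrite the gap to be lower-bounded as
\[
1 - \beta_\theta \;=\; \min_{i\in\{2,n\}}\big(1-|\lambda_i(\W_\theta)|\big) \;=\; \min_{i\in\{2,n\}}\min\big\{\,1-\lambda_i(\W_\theta),\;1+\lambda_i(\W_\theta)\,\big\},
\]
and bound each of the four quantities below by $\theta(1-\beta)$. For the ``right-end'' terms, $1-\lambda_i(\W_\theta) = \theta\big(1-\lambda_i(\W)\big) \ge \theta(1-\beta)$ because $\lambda_i(\W) \le |\lambda_i(\W)| \le \beta$ for $i\in\{2,n\}$. For the ``left-end'' terms, $1+\lambda_i(\W_\theta) = 2 - \theta\big(1-\lambda_i(\W)\big) \ge 2 - \theta(1+\beta)$ because $\lambda_i(\W) \ge -|\lambda_i(\W)| \ge -\beta$; and $2-\theta(1+\beta) \ge \theta(1-\beta)$ is equivalent to $2 \ge 2\theta$, which holds since $\theta \le 1$. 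Taking the minimum over the four bounds gives $1-\beta_\theta \ge \theta(1-\beta)$, and inverting the inequality (both sides strictly positive) yields the claim.

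The computation is almost entirely routine; the one spot needing care --- and the only genuine use of the hypothesis $\theta < 1$ beyond well-definedness --- is the ``left-end'' bound $2-\theta(1+\beta) \ge \theta(1-\beta)$, which handles the situation in which $\W_\theta$ has a large-magnitude negative eigenvalue and which degenerates to equality exactly at $\theta = 1$. Indeed, if $\W$ (and hence $\W_\theta$) happened to have nonnegative spectrum, the lemma would be immediate from the single identity $1-\lambda_2(\W_\theta) = \theta\big(1-\lambda_2(\W)\big)$; the role of the shift $(1-\theta)\I$ is precisely to pull potentially very negative eigenvalues of $\theta\W$ back toward $0$, and the inequality records that this shift cannot shrink the gap by more than a factor $\theta$.
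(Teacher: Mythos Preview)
Your argument is correct and follows essentially the same route as the paper: both prove the equivalent inequality $1-\beta_\theta \ge \theta(1-\beta)$ by using $\lambda_i(\W_\theta) = (1-\theta)+\theta\lambda_i(\W)$ and then bounding $|\lambda_i(\W_\theta)|$. The only cosmetic difference is that the paper dispatches this in one line via the triangle inequality $|(1-\theta)+\theta\lambda| \le (1-\theta)+\theta|\lambda|$ (valid since $1-\theta\ge 0$), which immediately gives $1-\beta_\theta \ge 1-[(1-\theta)+\theta\beta] = \theta(1-\beta)$, whereas you reach the same bound through the case split $1-|x|=\min\{1-x,1+x\}$; your ``left-end'' computation $2-\theta(1+\beta)\ge \theta(1-\beta)$ is precisely where the triangle inequality is doing its work in the paper's version.
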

\begin{proof}
First, for $1/(1-\beta),$ according to the definition:
\begin{align}
&\beta = \max\{|\lambda_2(\W)|,|\lambda_n(\W)|\} \notag\\
\Rightarrow~&1-\beta = \min\{1-|\lambda_2(\W)|,1-|\lambda_n(\W)|\}\notag\\
\Rightarrow~&1/(1-\beta)  = \max\{1/(1-|\lambda_2(\W)|),1/(1-|\lambda_n(\W)|)\}.
\end{align}
Then for $1/(1-\beta_\theta),$ note that $\W_\theta = (1-\theta)\I +\theta\W,$ which implies $\lambda_i(\W_\theta) = (1-\theta) + \theta\lambda_i(\W).$
So $\beta_\theta = \max\{|(1-\theta) + \theta\lambda_2(\W)|,|(1-\theta) + \theta\lambda_n(\W)|\}.$  
Note that for any $\lambda \in (-1,1]$ and $\theta \in (0,1)$ it holds that $|(1-\theta) + \theta\lambda| \le (1-\theta) + \theta|\lambda|,$ thus,
\begin{align}
&\beta_\theta \le \max\{(1-\theta) + \theta|\lambda_2(\W)|,(1-\theta) + \theta|\lambda_n(\W)|\} \notag\\
\Rightarrow~&1-\beta \ge \min\{\theta-\theta|\lambda_2(\W)|,\theta-\theta|\lambda_n(\W)|\}\notag\\
\Rightarrow~&1/(1-\beta_\theta)  \le \max\{1/\theta(1-|\lambda_2(\W)|),1/\theta(1-|\lambda_n(\W)|)\}\notag\\
\Rightarrow~&1/(1-\beta_\theta)  \le 1/\theta(1-\beta).
\end{align}
\end{proof}

Step 3: Note that $\x_t = \tilde{\W}_\theta\x_{t-1} -
\theta\gamma  \nabla \f(\x_{t-1}|\zeta_{t-1}) - \theta\gamma \veta_{t-1}+ \bepi_{t-1},$ which implies that 
\begin{align}
\bx_t &= \bx_{t-1} - \frac{1}{n}\sum\nolimits_{i=1}^{n}[\theta\gamma\nabla f(x_{i,t-1} ;\zeta_{i,t-1}) - \theta\gamma \eta_{i,t-1}+ \epsilon_{i,t-1}]\notag\\
&= \bx_{t-1} - \frac{\theta\gamma}{n}\sum\nolimits_{i=1}^{n}[\nabla f(x_{i,t-1}|\zeta_{i,t-1}) - \eta_{i,t-1}+ \epsilon_{i,t-1}/\theta\gamma]\notag\\
&= \bx_{t-1} - \frac{\theta\gamma}{n}\sum\nolimits_{i=1}^{n}\nabla \tf(x_{i,t-1}|\Bc_{i,t-1}),
\end{align}
where $\nabla \tf(x_{i,t}|\Bc_{i,t}) = \nabla f(x_{i,t};\zeta_{i,t}) - \eta_{i,t}+ \epsilon_{i,t}/\theta\gamma,$ and $\Bc_{i,t}=\sigma\langle \zeta_{i,t},\eta_{i,t},\epsilon_{i,t}/\theta\gamma\rangle.$

Consider $f(\bx_t;\D)=\frac{1}{n}\sum_{i=1}^{n}f(\bx_{t};\D_i),$ by the $L$-Lipschitz continuous gradient, it holds:
\begin{align}
&
f(\bx_{t+1};\D)
\le 
f(\bx_{t};\D) + \langle \nabla f(\bx_t;\D),\bx_{t+1}-\bx_{t} \rangle + \frac{L}{2}\|\bx_{t+1}-\bx_{t}\|_2^2 \notag\\
&
\le
f(\bx_{t};\D) - \langle \nabla f(\bx_t;\D),\frac{\theta\gamma}{n}\sum\nolimits_{i=1}^{n}\nabla \tf(x_{i,t}|\Bc_{i,t}) \rangle \notag\\
&
+ \frac{L}{2}\|\frac{\theta\gamma}{n}\sum\nolimits_{i=1}^{n}\nabla \tf(x_{i,t}|\Bc_{i,t})\|_2^2 \notag\\
&
\le
f(\bx_{t};\D) - \langle \nabla f(\bx_t;\D), \theta\gamma\nabla f(\bx_t;\D) + \frac{\theta\gamma}{n}\sum\nolimits_{i=1}^{n}\nabla \tf(x_{i,t}|\Bc_{i,t}) \notag\\
&
- \theta\gamma\nabla f(\bx_t;\D) \rangle
+
 \frac{L}{2}\|\theta\gamma\nabla f(\bx_t;\D) + \frac{\theta\gamma}{n}\sum\nolimits_{i=1}^{n}\nabla \tf(x_{i,t}|\Bc_{i,t}) 
 \notag\\
&
- \theta\gamma\nabla f(\bx_t;\D)\|_2^2 \notag\\
&
\le
f(\bx_{t};\D) - \theta\gamma \|\nabla f(\bx_t;\D)\|_2^2 - \langle \nabla f(\bx_t;\D), \frac{\theta\gamma}{n}\sum\nolimits_{i=1}^{n}\nabla \tf(x_{i,t}|\Bc_{i,t}) \notag\\
&
-
 \theta\gamma\nabla f(\bx_t;\D) \rangle  
+ \frac{L}{2}[\|\theta\gamma\nabla f(\bx_t;\D)\|_2^2 + \|\frac{\theta\gamma}{n}\sum\nolimits_{i=1}^{n}\nabla \tf(x_{i,t}|\Bc_{i,t}) 
\notag\\
&
- \theta\gamma\nabla f(\bx_t;\D)\|_2^2  
+
 2\langle\theta\gamma\nabla f(\bx_t;\D),  \frac{\theta\gamma}{n}\sum\nolimits_{i=1}^{n}\nabla \tf(x_{i,t}|\Bc_{i,t})\notag\\
&
 - \theta\gamma\nabla f(\bx_t;\D)\rangle] 
\end{align}
Take conditional expectation, it holds:
\begin{align}
&
\Eb[f(\bx_{t+1};\D)|\Fc_t]\notag\\
&\stackrel{(a)}{\le}
 f(\bx_{t};\D) - \theta\gamma \|\nabla f(\bx_t;\D)\|_2^2 - \langle \nabla f(\bx_t;\D), \frac{\theta\gamma}{n}\sum\nolimits_{i=1}^{n}\nabla f(x_{i,t};\D_{i}) \notag\\
&
- \theta\gamma\nabla f(\bx_t;\D) \rangle 
+ \frac{L}{2}[\|\theta\gamma\nabla f(\bx_t;\D)\|_2^2 + \Eb[\|\frac{\theta\gamma}{n}\sum\nolimits_{i=1}^{n}\nabla \tf(x_{i,t}|\Bc_{i,t})
\notag\\
&
 - \theta\gamma\nabla f(\bx_t;\D)\|_2^2|\Fc_{t}] 
+
 2\langle\theta\gamma\nabla f(\bx_t;\D),  \frac{\theta\gamma}{n}\sum\nolimits_{i=1}^{n}\nabla f(x_{i,t};\D_{i}) 
 \notag\\
&
- \theta\gamma\nabla f(\bx_t;\D)\rangle] \notag\\
 &
=
f(\bx_{t};\D) - (\theta\gamma -\frac{L(\theta\gamma)^2}{2})\|\nabla f(\bx_t;\D)\|_2^2  \notag\\
&
+ (\theta\gamma-L(\theta\gamma)^2)
\langle \nabla f(\bx_t;\D),  \nabla f(\bx_t;\D)-\frac{1}{n}\sum\nolimits_{i=1}^{n}\nabla f(x_{i,t};\D_{i}) \rangle  \notag\\
&
+ \frac{L(\theta\gamma)^2}{2}\Eb[\|\frac{1}{n}\sum\nolimits_{i=1}^{n}\nabla \tf(x_{i,t}|\Bc_{i,t}) - \nabla f(\bx_t;\D)\|_2^2|\Fc_{t}] \notag\\
 &
\stackrel{(b)}{\le}
f(\bx_{t};\D) - (\theta\gamma -\frac{L(\theta\gamma)^2}{2})\|\nabla f(\bx_t;\D)\|_2^2 
+
\frac{\theta\gamma-L(\theta\gamma)^2}{2} \times \notag\\
&
[\| \nabla f(\bx_t;\D)\|_2^2  
+ \|\frac{1}{n}\sum\nolimits_{i=1}^{n}\nabla f(x_{i,t};\D_{i}) - \nabla f(\bx_t;\D)\|_2^2 ] \notag\displaybreak[3]\\
&
~~
+\frac{L(\theta\gamma)^2}{2}\Eb[\|\frac{1}{n}\sum\nolimits_{i=1}^{n}\nabla \tf(x_{i,t}|\Bc_{i,t}) - \nabla f(\bx_t;\D)\|_2^2|\Fc_{t}] \notag\\
 &
=
f(\bx_{t};\D) - \frac{\theta\gamma}{2}\|\nabla f(\bx_t;\D)\|_2^2  \notag\\
&
+
\frac{\theta\gamma-L(\theta\gamma)^2}{2}
\|\frac{1}{n}\sum\nolimits_{i=1}^{n}\nabla f(x_{i,t};\D_{i}) - \nabla f(\bx_t;\D)\|_2^2  \notag\\
&
~~
+\frac{L(\theta\gamma)^2}{2}\underbrace{\Eb[\|\frac{1}{n}\sum\nolimits_{i=1}^{n}\nabla \tf(x_{i,t}|\Bc_{i,t}) - \nabla f(\bx_t;\D)\|_2^2|\Fc_{t}]}_{(A)} 
\end{align}
where (a) is because $\Eb[\nabla \tf(x_{i,t}|\Bc_{i,t})|\Fc_{t}] = \nabla f(x_{i,t};\D_i);$ (b) is by $2\langle \a, \b \rangle \le \|\a\|_2^2 + \|\b\|_2^2.$ 
Now we give the bound for $(A):$ Note that 
\begin{align}
&
\Eb[\|\frac{1}{n}\sum\nolimits_{i=1}^{n}\nabla \tf(x_{i,t}|\Bc_{i,t}) - \nabla f(\bx_t;\D)\|_2^2|\Fc_{t}] \notag\\
&
=
\Eb[\|\frac{1}{n}\sum\nolimits_{i=1}^{n}\nabla \tf(x_{i,t}|\Bc_{i,t})-\frac{1}{n}\sum\nolimits_{i=1}^{n}\nabla f(x_{i,t};\D)\notag\\
&
~~
+\frac{1}{n}\sum\nolimits_{i=1}^{n}\nabla f(x_{i,t};\D) - \nabla f(\bx_t;\D)\|_2^2|\Fc_{t}]\notag\\
&
\stackrel{(a)}{=}
\Eb[\|\frac{1}{n}\sum\nolimits_{i=1}^{n}\nabla \tf(x_{i,t}|\Bc_{i,t})-\frac{1}{n}\sum\nolimits_{i=1}^{n}\nabla f(x_{i,t};\D)\|_2^2|\Fc_{t}]\notag\\
&
~~
+
\|\frac{1}{n}\sum\nolimits_{i=1}^{n}\nabla f(x_{i,t};\D) - \nabla f(\bx_t;\D)\|_2^2.
\end{align}
where (a) is because $\Eb[\frac{1}{n}\sum\nolimits_{i=1}^{n}\nabla \tf(x_{i,t}|\Bc_{i,t})|\Fc_t] = \frac{1}{n}\sum\nolimits_{i=1}^{n}\nabla f(x_{i,t};\D).$
Recall the definition that $\nabla \tf(x_{i,t}|\Bc_{i,t}) = \nabla f(x_{i,t};\zeta_{i,t}) - \eta_{i,t}+ \epsilon_{i,t}/\theta\gamma$, hence,
\begin{align}
&\Eb[\|\frac{1}{n}\sum\nolimits_{i=1}^{n}\nabla \tf(x_{i,t}|\Bc_{i,t})-\frac{1}{n}\sum\nolimits_{i=1}^{n}\nabla f(x_{i,t};\D)\|_2^2|\Fc_{t}] \notag\\
&
\stackrel{(a)}{=}
\frac{1}{n^2}\sum\nolimits_{i=1}^{n}
\Eb[\|\nabla \tf(x_{i,t}|\Bc_{i,t})-\nabla f(x_{i,t};\D)\|_2^2|\Fc_{t}]\notag\\
&
=
\frac{1}{n^2}\sum\nolimits_{i=1}^{n}
\Eb[\|\nabla f(x_{i,t};\zeta_{i,t}) - \eta_{i,t}+ \epsilon_{i,t}/\theta\gamma-\nabla f(x_{i,t};\D)\|_2^2|\Fc_{t}] \notag\\
&
\stackrel{(b)}{=}
\frac{1}{n^2}\sum\nolimits_{i=1}^{n}
\Eb[\|\nabla f(x_{i,t};\zeta_{i,t})-\nabla f(x_{i,t};\D)\|_2^2 
\notag\\
&
+\| \eta_{i,t}\|_2^2+ \|\epsilon_{i,t}/\theta\gamma\|_2^2|\Fc_{t}] \notag\\
&
\le
\frac{1}{n}
[\frac{\tilde{\sigma}^2}{m\tau} + d\sigma^2] + \big(\frac{1}{n\theta\gamma}\big)^2\Eb[\|\bepi_{t}\|_2^2|\Fc_{t}]\notag\displaybreak[3]\\
&
\le
\frac{1}{n}
[\frac{\tilde{\sigma}^2}{m\tau} + d\sigma^2] + \big(\frac{1}{n\theta\gamma}\big)^2\Eb[(\frac{1}{p}-1)\|\theta\L_\gamma(\x_t|\zeta_t)+\theta\gamma\veta_t\|_2^2|\Fc_{t}]\notag\\
&
\le
\frac{1}{n}
[\frac{\tilde{\sigma}^2}{m\tau} + d\sigma^2] + \big(\frac{1}{n\gamma}\big)^2(\frac{1}{p}-1)[\|\L_\gamma(\x_t;\D)\|_2^2+\frac{\gamma^2\tilde{\sigma}^2n}{m\tau} + \gamma^2\sigma^2nd]\notag\\
&
=
\big(\frac{1}{n\gamma}\big)^2(\frac{1}{p}-1)\|\L_\gamma(\x_t;\D)\|_2^2 + \frac{1}{np}(\frac{\tilde{\sigma}^2}{m\tau} + \sigma^2d)
\end{align}
where (a) is because the noise is independent across $i,$ and (b) is because the expectation of the three terms are zero.
Thus, we have 
\begin{align}
&
\Eb[f(\bx_{t+1};\D)|\Fc_t] \notag\\
&
\le
f(\bx_{t};\D) - \frac{\theta\gamma}{2}\|\nabla f(\bx_t;\D)\|_2^2  \notag\\
&
+
\frac{\theta\gamma-L(\theta\gamma)^2}{2}
\|\frac{1}{n}\sum\nolimits_{i=1}^{n}\nabla f(x_{i,t};\D_{i}) - \nabla f(\bx_t;\D)\|_2^2  \notag\\
&
+\frac{L(\theta\gamma)^2}{2}\Eb[\|\frac{1}{n}\sum\nolimits_{i=1}^{n}\nabla \tf(x_{i,t}|\Bc_{i,t}) - \nabla f(\bx_t;\D)\|_2^2|\Fc_{t}] \notag\\
&
\le
f(\bx_{t};\D) - \frac{\theta\gamma}{2}\|\nabla f(\bx_t;\D)\|_2^2 
\notag\\
& 
+
\frac{\theta\gamma-L(\theta\gamma)^2}{2}
\|\frac{1}{n}\sum\nolimits_{i=1}^{n}\nabla f(x_{i,t};\D_{i}) - \nabla f(\bx_t;\D)\|_2^2  \notag\\
&
+\frac{L(\theta\gamma)^2}{2} \Eb[\|\frac{1}{n}\sum\nolimits_{i=1}^{n}\nabla \tf(x_{i,t}|\Bc_{i,t})-\frac{1}{n}\sum\nolimits_{i=1}^{n}\nabla f(x_{i,t};\D)\|_2^2|\Fc_{t}]\notag\\
&
+
\frac{L(\theta\gamma)^2}{2}\|\frac{1}{n}\sum\nolimits_{i=1}^{n}\nabla f(x_{i,t};\D) - \nabla f(\bx_t;\D)\|_2^2 \notag\\
&
\le
f(\bx_{t};\D) - \frac{\theta\gamma}{2}\|\nabla f(\bx_t;\D)\|_2^2  
\notag\displaybreak[4]\\
&
+
\frac{\theta\gamma}{2}
\|\frac{1}{n}\sum\nolimits_{i=1}^{n}\nabla f(x_{i,t};\D_{i}) - \nabla f(\bx_t;\D)\|_2^2  \notag\\
&
~~
+\frac{L(\theta\gamma)^2}{2}[\big(\frac{1}{n\gamma}\big)^2(\frac{1}{p}-1)\|\L_\gamma(\x_t;\D)\|_2^2 + \frac{1}{np}(\frac{\tilde{\sigma}^2}{m\tau} + \sigma^2d)] \notag\\
&
\le
f(\bx_{t};\D) - \frac{\theta\gamma}{2}\|\nabla f(\bx_t;\D)\|_2^2  \notag\\
&
+
\frac{\theta\gamma}{2}
\|\frac{1}{n}\sum\nolimits_{i=1}^{n}\nabla f(x_{i,t};\D_{i}) - \nabla f(\bx_t;\D)\|_2^2  \notag\\
&
+\frac{L\theta^2}{2n^2}(\frac{1}{p}-1)\|\L_\gamma(\x_t;\D)\|_2^2 + \frac{L(\theta\gamma)^2}{2np}(\frac{\tilde{\sigma}^2}{m\tau} + \sigma^2d) \notag\\
&
\stackrel{(a)}{\le}
f(\bx_{t};\D) - \frac{\theta\gamma}{2}\|\nabla f(\bx_t;\D)\|_2^2  \notag\\
&
+
\frac{\theta\gamma}{2n}\sum\nolimits_{i=1}^{n}
\|\nabla f(x_{i,t};\D_{i}) - \nabla f(\bx_t;\D)\|_2^2  \notag\\
&
~~
+\frac{L\theta^2}{2n^2}(\frac{1}{p}-1)\|\L_\gamma(\x_t;\D)\|_2^2 + \frac{L(\theta\gamma)^2}{2np}(\frac{\tilde{\sigma}^2}{m\tau} + \sigma^2d) \notag\\
&
\stackrel{(b)}{\le}
f(\bx_{t};\D) - \frac{\theta\gamma}{2}\|\nabla f(\bx_t;\D)\|_2^2  
+
\frac{\theta\gamma L}{2n}
\|\x_t-\bxx_t\|_2^2  \notag\\
&
+\frac{L\theta^2}{2n^2}(\frac{1}{p}-1)\|\L_\gamma(\x_t;\D)\|_2^2 + \frac{L(\theta\gamma)^2}{2np}(\frac{\tilde{\sigma}^2}{m\tau} + \sigma^2d) 
\end{align} 
where (a) is by the Jensen's inequality, (b) is by the $L$-Lipschitz continuous gradient $\nabla f(\x;\zeta)$.

Taking full expectation and plugging the result in step 2, we have:
\begin{align}
&
\Eb[f(\bx_{t+1};\D) - f(\bx_{t};\D)]\notag\\
&
\stackrel{}{\le}
\Eb[
- \frac{\theta\gamma}{2}\|\nabla f(\bx_t;\D)\|_2^2
+ 
\frac{\theta^3\gamma L}{n}\sum_{s = 0 }^{t-1} \beta_\theta^{2(t-1-s)} (\frac{1}{p}-1)\|\nabla \L(\x_s;\D)\|^2  
 \notag\\
&
+
\frac{\theta\gamma LC_3}{n}\big(\frac{\gamma }{1-\beta}\big)^2
+ 
\frac{\theta^2\gamma^3 LC_2}{n(1-\beta)}(\frac{1}{p}-1)  
+
\frac{L\theta^2}{2n^2}(\frac{1}{p}-1)\|\L_\gamma(\x_t;\D)\|_2^2 \notag\\
&
+ \frac{L(\theta\gamma)^2}{2np}(\frac{\tilde{\sigma}^2}{m\tau} + \sigma^2d) ]
\end{align} 

Telescope the above inequality from $1$ to $T$:
\begin{align}
&
\sum_{t=0}^{T-1}\Eb[\frac{\theta\gamma}{2}\|\nabla f(\bx_t;\D)\|_2^2] \notag\\
&
\le
f(\bx_{0};\D) -  \Eb[f(\bx_{T};\D)] + 
\frac{\theta^3\gamma L}{n}\sum_{t=1}^{T-1}\sum_{s = 0 }^{t-1} \beta_\theta^{2(t-1-s)} (\frac{1}{p}-1)\times \notag\\
& 
\Eb[\|\nabla \L(\x_s;\D)\|^2]    
+
\frac{T\theta\gamma LC_3}{n}\big(\frac{\gamma }{1-\beta}\big)^2
+ 
\frac{T\theta^2\gamma^3 LC_2}{n(1-\beta)}(\frac{1}{p}-1) \notag\\
& 
+
\sum_{t=0}^{T-1}\frac{L\theta^2}{2n^2}(\frac{1}{p}-1)\Eb\|\L_\gamma(\x_t;\D)\|_2^2 
+ 
\frac{LT(\theta\gamma)^2}{2np}(\frac{\tilde{\sigma}^2}{m\tau} + \sigma^2d)\notag\\
&
\stackrel{(a)}{\le}
f(0;\D) -  f(x^*_{\D};\D)
+ 
\frac{\theta^3\gamma L}{n}\sum_{s = 0 }^{T-2}\sum_{t=s+1}^{T-1} \beta_\theta^{2(t-1-s)} (\frac{1}{p}-1)\times \notag\\
&
\Eb[\|\nabla \L(\x_s;\D)\|^2]
 +
\frac{T\theta\gamma LC_3}{n}\big(\frac{\gamma }{1-\beta}\big)^2  
+ 
\frac{T\theta^2\gamma^3 LC_2}{n(1-\beta)}(\frac{1}{p}-1) \notag\displaybreak[4]\\
&
+
\sum_{t=0}^{T-1}\frac{L\theta^2}{2n^2}(\frac{1}{p}-1)\Eb\|\L_\gamma(\x_t;\D)\|_2^2 
+ 
\frac{LT(\theta\gamma)^2}{2np}(\frac{\tilde{\sigma}^2}{m\tau} + \sigma^2d)\notag\\
&
\stackrel{}{\le}
f(0;\D) -  f(x^*_{\D};\D)
+ 
\frac{\theta^3\gamma L}{n(1-\beta_\theta^2)}\sum_{s = 0 }^{T-2} (\frac{1}{p}-1)\|\nabla \L(\x_s;\D)\|^2\notag\\
&
 +
\frac{T\theta\gamma LC_3}{n}\big(\frac{\gamma }{1-\beta}\big)^2  
+ 
\frac{T\theta^2\gamma^3 LC_2}{n(1-\beta)}(\frac{1}{p}-1) 
+
\sum_{t=0}^{T-1}\frac{L\theta^2}{2n^2}(\frac{1}{p}-1)\times \notag\\
&
\Eb\|\L_\gamma(\x_t;\D)\|_2^2 
+ 
\frac{LT(\theta\gamma)^2}{2np}(\frac{\tilde{\sigma}^2}{m\tau} + \sigma^2d)\notag\displaybreak[3]\\
&
\stackrel{(b)}{\le}
f(0;\D) -  f(x^*_{\D};\D)
+ 
\frac{\theta^2\gamma L}{n(1-\beta)}\sum_{t = 0 }^{T-1} (\frac{1}{p}-1)\Eb[\|\nabla \L(\x_t;\D)\|^2] \notag\\
&
 +
\frac{T\theta\gamma LC_3}{n}\big(\frac{\gamma }{1-\beta}\big)^2  
+ 
\frac{T\theta^2\gamma^3 LC_2}{n(1-\beta)}(\frac{1}{p}-1) \notag\\
&
+
\sum_{t=0}^{T-1}\frac{L\theta^2}{2n^2}(\frac{1}{p}-1)\Eb\|\L_\gamma(\x_t;\D)\|_2^2 
+ 
\frac{LT(\theta\gamma)^2}{2np}(\frac{\tilde{\sigma}^2}{m\tau} + \sigma^2d)\notag\\
&
\stackrel{}{\le}
f(0;\D) -  f(x^*_{\D};\D)
+ 
(\frac{\theta^2\gamma L}{n(1-\beta)} + \frac{L\theta^2}{2n^2})(\frac{1}{p}-1) \times \notag\\
&
\sum_{t = 0 }^{T-1}\Eb[\|\nabla \L(\x_t;\D)\|^2]
 +
\frac{T\theta\gamma LC_3}{n}\big(\frac{\gamma }{1-\beta}\big)^2  \notag\\
&
+ 
\frac{T\theta^2\gamma^3 LC_2}{n(1-\beta)}(\frac{1}{p}-1)
+ 
\frac{LT(\theta\gamma)^2}{2np}(\frac{\tilde{\sigma}^2}{m\tau} + \sigma^2d)\notag\\
&
\stackrel{(c)}{\le}
f(0;\D) -  f(x^*_{\D};\D)
+
\frac{T\theta\gamma LC_3}{n}\big(\frac{\gamma }{1-\beta}\big)^2 
+ 
\frac{T\theta^2\gamma^3 LC_2}{n(1-\beta)}(\frac{1}{p}-1)\notag\\
&
+ 
\frac{LT(\theta\gamma)^2}{2np}(\frac{\tilde{\sigma}^2}{m\tau} + \sigma^2d)
+ 
(\frac{\theta^2\gamma L}{n(1-\beta)} + \frac{L\theta^2}{2n^2})(\frac{1}{p}-1)\times \notag\\
&
\Big(\frac{2pn\gamma C_1}{\big(2p\theta-(1-\lambda_n+\gamma L)\theta^2\big)} +
\frac{(1-\lambda_n+\gamma L)\theta\gamma^2 TC_2}{2p-(1-\lambda_n+\gamma L)\theta}\Big)
\end{align} 
where (a) is by $\Eb[f(\bx_T);\D] \ge f(x^*_{\D};\D)$ and Fubini's theorem, (b) is by $\beta_\theta \in (0,1)$ and Lemma \ref{Lemma: beta/theta} and (c) is from step 1. Hence, 
\begin{align}
&\sum_{t=0}^{T-1}\|\nabla f(\bx_t;\D)\|_2^2
\stackrel{}{\le}
\frac{2C_1}{\theta\gamma} 
+
\frac{2T LC_3}{n}\big(\frac{\gamma }{1-\beta}\big)^2 
+ 
\frac{2T\theta\gamma^2 LC_2}{n(1-\beta)}(\frac{1}{p}-1)\notag\\
&
+ 
\frac{LT\theta\gamma}{n^2p}C_2
+ 
(\frac{2\theta\gamma L}{n(1-\beta)} + \frac{L\theta}{n^2})(\frac{1}{p}-1)\Big(\frac{2pn C_1}{\big(2p\theta-(1-\lambda_n+\gamma L)\theta^2\big)} \notag\\
&
+
\frac{(1-\lambda_n+\gamma L)\theta\gamma TC_2}{2p-(1-\lambda_n+\gamma L)\theta}\Big)
\end{align} 
where $C_1 = f(0;\D) -  f(x^*_{\D};\D),$ $C_2 = n\tilde{\sigma}^2/m\tau+nd\sigma^2$ and $C_3 = (nG)^2+(nd\sigma)^2$ are constants.
\end{proof}

\subsection{Proof of Corollary \ref{Cor: Convergence}}
\begin{proof}
By setting $\theta = \min\{p/(1-\lambda_n+\gamma L),p/2\}\le 1,$ we have 
\begin{align}
&(1-\lambda_n+\gamma L)\theta = \min\{p,p(1-\lambda_n+\gamma L)/2\} \le p \\
&1/\theta =\max\{(1-\lambda_n+\gamma L)/p,2/p\}\stackrel{(a)}{\le} 3/p 
\end{align}
where (a) is by $1-\lambda_n+\gamma L \le 3$ with a very small $\gamma$ (i.e. large enough $T$).
Thus for (I) in (\ref{Eq: convergence error}), $2C_1/\theta\gamma T \le 6C_1/\gamma Tp = O(1/\gamma Tp);$ (II) is $O\big(n\gamma^2/(1-\beta)^2\big);$ for (III) is 
\begin{align}
\frac{2\theta\gamma^2 LC_2}{n(1-\beta)}(\frac{1}{p}-1)
+ 
\frac{L\theta\gamma C_2}{n^2p}&
\!\le\!
\frac{\gamma^2 LC_2}{n(1-\beta)}
+ 
\frac{L\gamma C_2}{2n^2} 
\!=\! 
O\big(\frac{\gamma^2}{(1-\beta)} + \frac{\gamma}{n} \big);
\end{align}
and (IV) is 
{\small{\begin{align}
&
(\frac{2\gamma L}{n(1 \!-\!\beta)}  \!+\! \frac{L}{n^2})(\frac{1}{p} \!-\!1)\Big(\frac{2pn C_1}{\big(2p-(1-\lambda_n+\gamma L)\theta\big)T} \!+\!
\frac{(1\!-\!\lambda_n\!+\!\gamma L)\theta^2\gamma C_2}{2p\!-\!(1\!-\!\lambda_n\!+\!\gamma L)\theta}\Big) \notag\\
&
\le
(\frac{2\gamma L}{n(1-\beta)} + \frac{L}{n^2})\Big(\frac{2n C_1}{Tp} +
2\gamma C_2\Big)
=
O\big((\frac{\gamma}{1-\beta} + \frac{1}{n})(\frac{1}{Tp} +
2\gamma )\big)
\end{align}}}
To summarize, the convergence error has the order:
\begin{align}
&O\Big(\frac{1}{\gamma Tp}+\frac{n\gamma^2}{(1-\beta)^2}+\frac{\gamma^2}{(1-\beta)} + \frac{\gamma}{n}+(\frac{\gamma}{1-\beta} + \frac{1}{n})(\frac{1}{Tp} +
2\gamma)
\Big)\notag\\
&
= O\Big(\frac{1}{\gamma Tp}+\frac{n\gamma^2}{(1-\beta)^2}+\frac{\gamma^2}{(1-\beta)} + \frac{\gamma}{n}+\frac{\gamma}{(1-\beta)Tp} + \frac{1}{nTp}\Big)\notag\\
&
= O\Big(\frac{1}{\gamma Tp}+\frac{n\gamma^2}{(1-\beta)^2}+ \frac{\gamma}{n}+\frac{\gamma}{(1-\beta)Tp} + \frac{1}{nTp}\Big)
\end{align}
Set $\gamma = c\sqrt{n\log(T)/T},$  then order of the convergence error is:
{\small{\begin{align}
O\Big(\frac{1}{\sqrt{n\log(T)Tp}}+\frac{n^2\log(T)}{(1-\beta)^2Tp}+ \sqrt{\frac{\log(T)}{nT}}+\frac{\sqrt{n\log(T)}}{(1-\beta)\sqrt{(Tp)^3}} + \frac{1}{nTp}\Big).
\end{align}}}
With the large iteration number, i.e. $T/\log(T)^4 > n^5/(1-\beta)^4,$ then the order of the convergence error is bounded by:
\begin{align}
O\Big(\frac{1}{ \sqrt{nT}}+ \sqrt{\frac{\log(T)}{nT}}+ \frac{\log(T)}{nT}\Big) = O\Big(\sqrt{\frac{\log(T)}{nT}}\Big).
\end{align}

\end{proof}

\end{document}